\declaretheorem{lemma}
\newtheorem{thm}{Theorem}
\newtheorem{cor}{Corollary}[thm]
\newtheorem{defn}{Definition}
\newtheorem{obs}[lemma]{Observation}
\newcommand{\tl}{\triangleleft}
\newcommand{\N}{\mathbb{N}}
\newcommand{\Part}{\operatorname{Part}}
\newcommand{\cu}{\operatorname{CU}}
\newcommand{\tgts}{\operatorname{Ch}}
\newcommand{\shbelow}{\operatorname{B}}
\newcommand{\st}{\,\middle\vert\,}
\newcommand{\sarscov}{SARS-CoV-2\xspace}
\newcommand{\usher}{UShER\xspace}
\newcommand{\dnapars}{\texttt{dnapars}\xspace}
\newcommand{\matoptimize}{\texttt{matOptimize}\xspace}
\newcommand{\phylip}{\texttt{PHYLIP}\xspace}
\newcommand{\gctree}{\texttt{gctree}\xspace}
\newcommand{\historydag}{\texttt{historydag}\xspace}
\definecolor{brewer-pink}{HTML}{e7298a}
\definecolor{brewer-orange}{HTML}{d95f02}
\definecolor{brewer-purple}{HTML}{7570b3}
\definecolor{brewer-green}{HTML}{1b9277}
\definecolor{darkred}{HTML}{C40004}
\newcommand{\slug}{\hbox{\kern1.5pt\vrule width2.5pt height6pt depth1.5pt\kern1.5pt}}
\def\xskip{\hskip 7pt plus 3pt minus 4pt}
\newdimen\algindent
\newif\ifitempar \itempartrue 
\def\algindentset#1{\setbox0\hbox{{\bf #1.\kern.25em}}\algindent=\wd0\relax}
\def\algbegin #1 #2{\algindentset{#21}\alg #1 #2} 
\def\aalgbegin #1 #2{\algindentset{#211}\alg #1 #2} 
\def\alg#1(#2). {\medbreak 
  \noindent{\bf#1}({\it#2\/}).\xskip\ignorespaces}
\def\kalgstep#1.{\ifitempar\smallskip\noindent\else\itempartrue
   \hskip-\parindent\fi
   \hbox to\algindent{\bf\hfil #1.\kern.25em}%
   \hangindent=\algindent\hangafter=1\ignorespaces}
\newenvironment{taocpalg}[3]{%
\vspace{1em}%
\algbegin Algorithm #1. ({#2}). #3 }
{\vspace{1em}}
\renewcommand{\thefootnote}{\arabic{footnote}}
\newcommand{\symfootnote}[1]{%
\let\oldthefootnote=\thefootnote%
\stepcounter{mpfootnote}%
\addtocounter{footnote}{-1}%
\renewcommand{\thefootnote}{\fnsymbol{mpfootnote}}%
\footnote{#1}%
\let\thefootnote=\oldthefootnote%
}
\title{Representing and extending ensembles of parsimonious evolutionary histories with a directed acyclic graph}
\author{
    Will Dumm$^{1,2}$\\
    Mary Barker$^{1,2}$\\
    William Howard-Snyder$^3$\\
    William S DeWitt III$^4$\\
    Frederick A Matsen IV$^{1,2,5,6,*}$\\
}
\date{October 11, 2023}
\begin{document}
\delimitershortfall=-1pt
\maketitle

\renewcommand*{\thefootnote}{\fnsymbol{footnote}}
 \footnotetext[1]{Correspondence: matsen@fredhutch.org}
\renewcommand*{\thefootnote}{\arabic{footnote}}
 \footnotetext[1]{Computational Biology Program, Fred Hutchinson Cancer Research Center, Seattle, Washington, USA}
 \footnotetext[2]{Howard Hughes Medical Institute, Computational Biology Program, Fred Hutchinson Cancer Research Center, Seattle, Washington, USA}
 \footnotetext[3]{Paul G. Allen School of Computer Science and Engineering, University of Washington, Seattle, Washington, USA}
 \footnotetext[4]{Department of Electrical Engineering and Computer Sciences, University of California, Berkeley, California, USA}
 \footnotetext[5]{Department of Genome Sciences, University of Washington, Seattle, Washington, USA}
 \footnotetext[6]{Department of Statistics, University of Washington, Seattle, Washington, USA}


\begin{abstract}

In many situations, it would be useful to know not just the best phylogenetic tree for a given data set, but the collection of high-quality trees.
This goal is typically addressed using Bayesian techniques, however, current Bayesian methods do not scale to large data sets.
Furthermore, for large data sets with relatively low signal one cannot even store every good tree individually, especially when the trees are required to be bifurcating.
In this paper, we develop a novel object called the ``history subpartition directed acyclic graph'' (or ``history sDAG'' for short) that compactly represents an ensemble of trees with labels (e.g.\ ancestral sequences) mapped onto the internal nodes.
The history sDAG can be built efficiently and can also be efficiently trimmed to only represent maximally parsimonious trees.
We show that the history sDAG allows us to find many additional equally parsimonious trees, extending combinatorially beyond the ensemble used to construct it.
We argue that this object could be useful as the ``skeleton'' of a more complete uncertainty quantification.

\emph{\textbf{Keywords:} Maximum Parsimony, Phylogenetic Uncertainty, Phylogenetic Inference, Directed Acyclic Graph}\\
\emph{\textbf{Mathematics Subject Classification:} 92B10, 92-08, 92-04}
\end{abstract}

\section{Introduction}
Here we develop a structure that can compactly represent and extend collections of phylogenetic trees with ancestral sequences mapped on the internal nodes.
One motivation for this structure comes from uncertainty quantification in statistical phylogenetics, which is typically approached via one of two ways.
Bayesian analysis attempts to characterize the posterior distribution of phylogenetic trees given data: the collection of trees that credibly explain the data, and their probabilities of being the generative tree.
On the other hand, the phylogenetic bootstrap~\citep{Felsenstein1985-ei} resamples columns of the multiple sequence alignment, infers an optimal tree for each one of the resampled data sets, then aggregates features of the resulting trees.

Neither of these are tenable for very large and densely sampled data sets, such as for severe acute respiratory syndrome coronavirus 2 (\sarscov) collections.
Traditional Bayesian analysis is often too slow to apply to these large data sets, and introduces many extra unknown model parameters in a signal-weak setting.
Bootstrapping may remain fast enough when using recent approximations~\citep{Hoang2018-pt}, but has a different problem: it is common for well-established clades (supported by other data) to be supported on the sequence level by a single mutation, so the bootstrap support of the corresponding clade will exactly equal the frequency with which we draw that mutation in the bootstrap sample.
Thus, the bootstrap underestimates support in this case~\citep{Wertheim2022-bm}.

Phylogenetic placement offers a different type of uncertainty estimate: an assessment of the level of certainty in inserting a new sequence into an existing phylogeny.
However, these assessments of uncertainty are relative to a fixed reference tree.
For \sarscov this can be done in the \usher framework~\citep{Turakhia2021-yl}, in which this insertion procedure is used for iterative tree building.
No attempt is made to characterize uncertainty of the complete tree in this framework.

The lack of uncertainty quantification may have consequences for interpretation of \sarscov evolution.
For example, the current practice for the PANGO nomenclature system~\citep{Rambaut2020-lj} for \sarscov does not require any sort of support estimation.
A typical workflow involves placement and local tree construction.
If there is indeed high probability of a single tree, then this is fine.
If not, this seems potentially problematic.

We argue as follows that the diversity of maximally parsimonious trees on the data can be used to bound uncertainty from below.
First, if there are more maximally-parsimonious explanations of the data, this decreases the probability that any one explanation is correct.
For this reason, we expect there to be an inverse relationship between the number of maximally-parsimonious explanations of the data and the certainty of a given node or other feature in the tree.
Furthermore, this inverse relationship should express a lower bound on the uncertainty because there are many other potential compelling trees that are not quite maximally parsimonious.
In any case, analyzing even just the maximally parsimonious set of trees commonly involves so many trees that storing them individually and learning from them with existing techniques is computationally prohibitive.
This is especially the case with parsimony analysis of large data sets, such as those for SARS-CoV-2~\citep{Turakhia2021-yl,MatOptimize22Ye}.

As a second motivation for our work, we also suggest that gathering a collection of maximally parsimonious trees could be helpful for Bayesian analysis.
Although the parsimony criterion is of course not the same as likelihood, the two objectives are closely linked in the case where sequences are densely sampled relative to the amount of evolution~\citep{thornlow2021online}.
Previous work has shown how closely related sequences can greatly inflate the posterior distribution~\citep{Whidden2015-eq}, and a parsimony analysis would have revealed this inflation.
Thus, we hope to use the collection of maximally parsimonious trees as an aid for designing proposal distributions, extending previous successful strategies~\citep{Zhang2020-gd}, and for quantifying exploration of tree space.

In this paper, we formalize a data structure called the \emph{history subpartition directed acyclic graph (a.k.a.\ history sDAG)} to characterize the ensemble of maximally parsimonious trees for large data sets.
This is related to the idea of characterizing the trees in a single optimal ``terrace'' in phylogenetic tree space, with respect to parsimony~\citep{terraces2015sanderson, terraces2011sanderson}.
We describe algorithms to build history sDAGs from internally labeled trees, collapse edges with no mutations, and trim history sDAGs to express only trees which are optimal according to general criteria, such as parsimony.
Although history sDAG construction is not the same as uncertainty estimation, which would allow for some less-than-maximally-parsimonious trees, it is a first step in that direction.
We provide a Python implementation with a flexible interface for the history sDAG as a container type for trees, endowed with abstract methods for convenient dynamic programs on the history sDAG structure, as well as all methods from this paper for manipulating history sDAGs constructed from maximally parsimonious trees.
This implementation shows the effectiveness of the approach, efficiently recovering many orders of magnitude more equally parsimonious trees than were used to ``seed'' the history sDAG when applied to a \sarscov\ data set.

\subsection{Intuitive Overview}

Here we provide an intuitive overview of the definitions and concepts used in this paper.
Formal definitions will be given in the sections following the overview.

This paper develops methods for understanding evolutionary relationships between samples from a population of closely related evolving entities, acknowledging uncertainty.
We will focus on samples consisting of nucleotide sequences, but keep our language general to emphasize that other data such as sample time and geographic location could also be used.

One way to formalize evolutionary relationships among samples, and inferred ancestral states, is to arrange them in a rooted phylogenetic tree with leaf and internal node labels.
Node labels in this tree can include data of the type associated with the given samples.
Specifically, leaf nodes are labeled by samples, and interior nodes are labeled by inferred ancestral states.
The set of samples which label leaves will be called the \emph{leaf labels}.
Interior node labels may be chosen from some larger label set which includes the leaf labels as a subset.
Instead of directly using this notion of a rooted, internally labeled tree, we will define a more convenient object called a \emph{history}, which holds the same data as such a tree.
We will make the definition formal below, but a history may be thought of as a rooted, internally labeled tree (this object has been called other names in the past, including an \emph{ancestral scenario}~\citep{Ishikawa2019}).
For example, a history might be used to represent a phylogenetic tree in which all (internal and tip) nodes are labeled with DNA sequences.

In a history, a node's \emph{clade} is the set of labels of its descendant leaf nodes (we emphasize that internal node labels are excluded from the clade definition).
A clade of a node's child is a \emph{child clade}.
The child clades of a node form a partition of the node's clade.
We therefore call this set of child clades a node's \emph{subpartition}.
Each edge in a history connects two nodes, each with a label and subpartition.
As a formality convenient for this paper, each history will contain a \emph{universal ancestor (UA) node} added as a parent of the root node.

Some histories explain the relationships between their leaf labels more plausibly than others.
One common measure of optimality for a history labeled by nucleotide sequences is its parsimony score, which is the total number of nucleotide base changes along all edges in the history.
A history is said to be \emph{maximally parsimonious} if no other history on the same leaf labels has a lower parsimony score.

In general, there are many possible maximally parsimonious histories with leaves labeled by the same set of nucleotide sequences.
We will use a structure called the \emph{history subpartition directed acyclic graph} (\emph{history sDAG}) to efficiently encode a large collection of histories (\autoref{fig:dag_construction}).
The ``history'' modifier emphasizes that this structure encodes a collection of possible rooted evolutionary histories, each of which contain not only a tree structure, but also ancestral state labels.

The history sDAG consists of a collection of nodes, each associated with a combination of label and subpartition, and one formal universal ancestor (UA) node, which is denoted $\rho$.
As we will see later, edges exiting $\rho $ keep track of the root nodes of the histories in the history sDAG\@.

A directed edge in a history sDAG represents an edge in a corresponding set of histories, from a parent node to a child node which have the same labels and subpartitions as the parent and child nodes of the edge in the DAG\@.
Thus, the history sDAG structure records combinations of labels and subpartitions, and adjacencies between these combinations, in the corresponding collection of histories.

By using a carefully chosen definition of history, introduced in the next section, the history sDAG can easily be constructed as the graph union of a set of histories.
These histories need not have identical leaf labels.
Specifically, we think of each history as its own history sDAG, with each node annotated by its label and subpartition.
The history sDAG constructed from the original set of histories is simply the union of nodes and edges in each history (\autoref{fig:dag_construction}).
The history sDAG then contains as subgraphs at least those histories used to construct it.

\begin{figure}[H]
    \centering
    \includegraphics{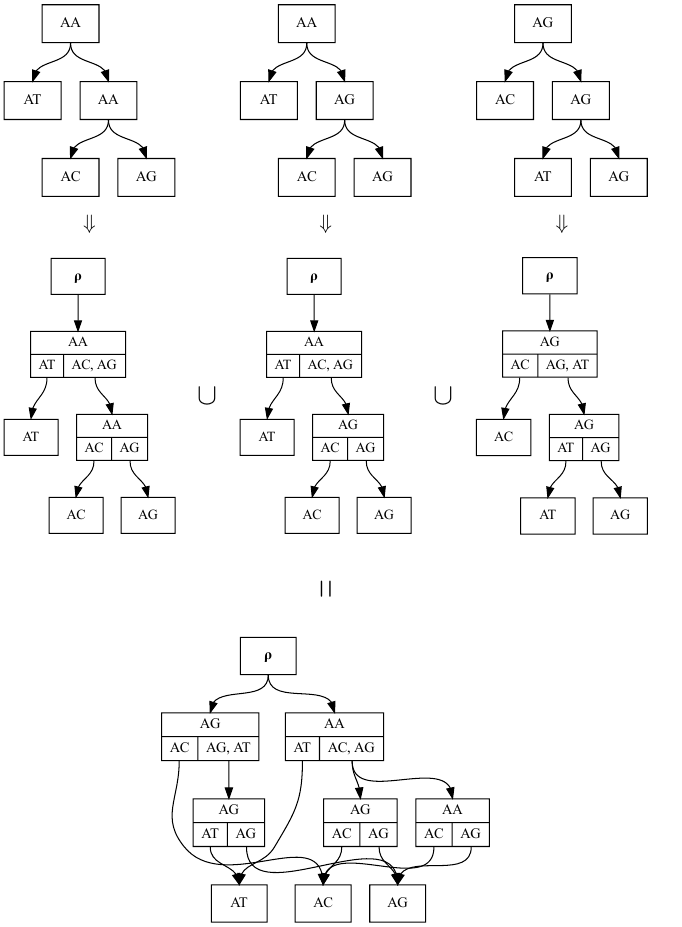}
    \caption{\
        A history sDAG constructed from three internally labeled trees on label set of sequences $\left\{AA, AC, AT, AG \right\} $.
        Each tree is converted to the equivalent history structure, and the union of these histories is the history sDAG\@.
        Each node in a history or the history sDAG consists of a label (in this case a sequence of two bases) shown in the top half of the node, and a subpartition, with each set in the subpartition separated by a vertical bar in the bottom half of the node.
    Leaf nodes have no children, so appear with only their label.
    Although in this example labels are length-two nucleotide sequences, the label set is arbitrary, and could include sequences, geographic location, or other information
    }\label{fig:dag_construction}
\end{figure}

Any subgraph of the DAG which is a tree, includes exactly one edge descending from the UA node, and exactly one edge descending from each child clade of each of its nodes, is a history (\autoref{fig:hdag_example_intro}).
Each of the histories contained in a history sDAG represents a combination of substructures from the histories used to construct the history sDAG\@.

In addition to thinking of the history sDAG as a way of recording structures observed in a collection of histories, we can also think of it as a way of generating histories.
In fact, the set of histories in the history sDAG is in general a superset of the set of histories used to construct the DAG (\autoref{fig:dagfindsmore}).
These new histories result from combining subhistories from histories used to construct the history sDAG\@.
This is similar to tree fusion, in which clades from different trees are combined to improve the parsimony score of the final tree~\citep{treefusion1999goloboff}.
The connection with tree fusion is explored further in the Discussion section.

\begin{figure}[h]
\centering
\includegraphics{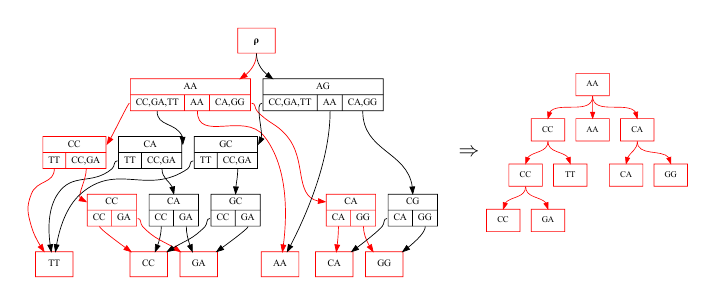}
\caption{\
    A history sDAG on label set $\left\{TT, CC, GA, AA, CA, GG, AG, CG\right\} $, with a history structure highlighted in red (left) and a labeled tree corresponding to that history (right)
}
\label{fig:hdag_example_intro}
\end{figure}

\begin{figure}[h]
\centering
\includegraphics{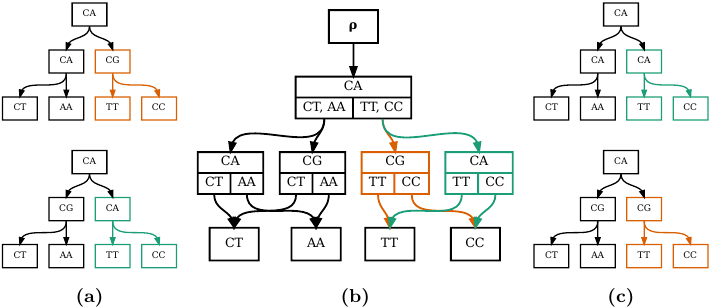}
\caption{\
The history sDAG can express more histories than were used to construct it.
The history sDAG in (b) is constructed from the two internally labeled trees in (a), and represents the four internally labeled trees in panels (a) and (c).
Notice that the trees in (c) are not among the trees used to construct the sDAG, but result from swapping the substructures highlighted in green and orange in (a)
}%
\label{fig:dagfindsmore}
\end{figure}

As described above, exploring phylogenetic uncertainty by examining many maximally parsimonious histories requires an efficient way to store and compute on those histories.
The history sDAG provides a compact structure for storing collections of histories, but in general contains histories beyond those used to build the DAG.
We therefore encounter a key question: are these additional histories also maximally parsimonious?

In the following two sections, we will show that maximum parsimony is in fact preserved by the history sDAG\@.
\autoref{thm:updown} shows that any history expressed by a history sDAG constructed from maximally parsimonious histories must itself be maximally parsimonious.
To achieve this we must first show that swapping certain substructures between histories preserves maximum parsimony.
Then we will show that the collection of histories in the history sDAG is closed under these subhistory swaps, and that any history in the history sDAG can be obtained by such a subhistory swap involving histories used to construct the DAG\@.
This means that the history sDAG is not only an effective way to store many maximum parsimony histories, but also may allow us to very quickly discover more such histories.

Preservation of maximum parsimony in the history sDAG has two important consequences:
\begin{itemize}
    \item A history sDAG constructed from maximally parsimonious histories will contain only maximally parsimonious histories.
        If a set $T $ of histories with the same parsimony score is used to construct a history sDAG, and if that history sDAG expresses a history with any other parsimony score, then $T $ must not have contained maximum parsimony histories.
    \item It is always possible to trim an arbitrary history sDAG to express all of, and only, its maximally parsimonious histories.
In particular, a new history sDAG constructed from the maximally parsimonious histories represented by the original history sDAG will contain only those histories used to construct it.
\end{itemize}

Throughout this paper we will refer to maximally parsimonious histories using the more general term \emph{minimum-weight} histories, since maximum parsimony is characterized by minimizing the sum of a weight over all edges in a history.
Indeed, this term is more general because we can use weight functions that are more complex than simply the sum of the number of mutations, or which consider label data other than nucleotide sequences.

We provide an implementation of the history sDAG and related algorithms in the open source Python package \historydag, installable with \texttt{pip} and available at \url{https://github.com/matsengrp/historydag}.
This package provides methods for constructing, trimming, collapsing, and extracting histories from the history sDAG as described in the following sections.
\historydag also implements methods which we will describe in future work, for efficiently calculating weights of histories represented in the history sDAG, and for expressing and sampling from a probability distribution on histories in the history sDAG\@.

For reference, we provide a summary of notation in~\autoref{tab:notation}.
\begin{table}[H]
    \begin{tabular}{l c l}
        \hline
        labels              & $\ell $         & a label, such as a nucleotide sequence\\
                            & $Y $            & a set of labels\\
                            & $X $            & the set of leaf labels, a subset of the label set $Y$\\
                            & $C $            & a clade, i.e.\ a subset of leaf labels\\
                            & $U $            & a set of disjoint clades\\
        \hline
        histories           & $t $            & a history\\
                            & $s $            & a subhistory of a history\\
                            & $ v $           & a node in a history or history sDAG\\
                            & $ s^v $         & a subhistory and its parent node\\
                            & $\rho $         & the UA node\\
                            & $ e $           & an edge in a history or history sDAG\\
                            & $T $            & a set of histories\\
                            & $f $            & an edge weight function on pairs of history\\
                            &                 & or history sDAG nodes\\
                            & $g_f $          & a weight function on histories, summing $f $\\
                            &                 & over all edges\\
                            & $L(s) $         & the set of leaf nodes of a subhistory $s $\\
                            & $\cu(v) $       & the clade union of a node $v $\\
        \hline
        history sDAGs        & $V $            & a set of history sDAG or history nodes\\
                            & $E $            & a set of history sDAG or history edges\\
                            & $D(T) $         & the set of histories expressed by a history sDAG\\
                            &                 & constructed from histories $T $\\
                            & $\tgts(v) $     & children of a node $v $ in a history or history sDAG\\
                            & $\shbelow(v)$   & subhistories below a node $v $ in a history sDAG\\
        \hline
    \end{tabular}
    \caption{Notation used in the text}\label{tab:notation}
\end{table}

\section{Histories and the History sDAG}
We will now provide a formal definition of histories and the history sDAG\@.

Let $Y $ refer to a set of labels, such as nucleotide sequences.
We can think of observed labels as a set $X\subset Y $, labeling history leaves.
We will not emphasize this set of leaf labels $X $, since a history sDAG may express collections of histories with varying leaf label sets.
In the case of parsimony however, we will be interested in collections of histories which share a leaf label set consisting of observed nucleotide sequences.

We are interested in representing collections of rooted, multifurcating, non-unifurcating trees with nodes (including internal nodes) labeled by elements of $Y $.
As mentioned in the Overview, we will make this easy by carefully defining histories.

Isomorphism classes of such internally labeled trees are in bijection with \textbf{histories}, as defined below.
This correspondence is shown formally in \hyperref[sec:correspondence]{Appendix~\ref*{app:proofs}}, but sufficient intuition may be found in \autoref{fig:dag_construction}.

Let $Y $ be a set of labels, and let $\mathcal{P}(\cdot)$ denote the power set.

\begin{defn}
Let $\Part(Y) $ be the set of all $U\subset \mathcal{P}(Y)\setminus \left\{\emptyset \right\} $ such that,
\begin{itemize}
    \item for $C_1, C_2\in U $, if $C_1 \neq C_2 $ then $C_1\cap C_2 = \emptyset $
    \item $|U| \neq 1 $.
\end{itemize}
That is, $\Part(Y) $ contains $\emptyset $ and all sets of two or more nonempty, disjoint subsets (\emph{clades}) of $Y $.
\end{defn}

Given a set of leaf labels $X\subset Y $, $\Part(X) $ would contain all of the possible subpartitions of leaf labels in an internally labeled tree with leaves labeled by $X $.
Notice that $\Part(X) \subset \Part(Y) $ for any such $X\subset Y $.
Since a history sDAG may contain histories with varying leaf label sets, elements of $\Part(Y) $ are used to construct general history sDAG nodes.

We will see that with the exception of a universal ancestor node, all nodes in the history sDAG structure consist of a label $\ell \in Y$ and a subpartition $U\in \Part(Y) $.
\begin{defn}
    A \textbf{node-clade pair} is a node $(\ell, U) $ and a choice of child clade $C\in U $.
\end{defn}

\begin{defn}\label{defn:historyDAG}
    A \textbf{history sDAG} with labels $Y $ is a directed graph $(V,E) $ consisting of
    \begin{itemize}
        \item A node set $V\subset \left ( Y\times \Part(Y)\right )\cup \left\{\rho \right\} $ such that $\rho \in V $ is the \textbf{universal ancestor (UA) node}.
            For a node $v = (\ell, U)\in V $, $v \neq \rho $, we say that $v $'s \textbf{label} is $\ell $, its \textbf{subpartition} is $U $, its \textbf{child clades} are elements of $U $, and its \textbf{clade union} $\cu(v) $ is $\left\{\ell \right\}$ if $U = \emptyset $, or $\bigcup\limits_{C\in U} C $ otherwise.
        \item A directed edge set $E\subset V\times V $ containing edges $e = (v_1, v_2) $ from a \textbf{parent node} $v_1 $ to a \textbf{target or child node} $v_2 $ such that
            \begin{enumerate}
                \item All nodes are reachable from the UA node $\rho $, which itself accepts no incoming edges.
                \item For any edge whose parent node is not $\rho $, the clade union of the target node must be in the subpartition of the parent node.

                    Formally, for any edge $e = \left((\ell_1, U_1), (\ell_2, U_2) \right)\in E $, if $C $ is the clade union of $(\ell_2, U_2) $, then $C\in U_1 $.

                        We say then that the edge $e $ \textbf{descends from the node-clade pair} $\left((\ell_1, U_1), C \right) $.
                \item For each node $v = (\ell, U) $, and for each choice of child clade $C\in U $, at least one edge descends from the node-clade pair $(v, C) $.
            \end{enumerate}
    \end{itemize}
\end{defn}

Notice that by requirements (1) and (3) in the definition of the history sDAG, all nodes in the history sDAG must have descendant edges, except for those of the form $(\ell, \emptyset) $.
We will refer to these as \textbf{leaf nodes}.

\begin{obs}
Since only nodes of the form $(\ell, \emptyset) $ may have no children, all leaf nodes in a history sDAG must be of this form, and therefore no two leaf nodes may be labeled by the same element of $Y $.
\end{obs}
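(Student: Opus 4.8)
The plan is to separate the observation into its two assertions and settle each by directly unwinding the definition of the history sDAG. First I would confirm that every node with no outgoing edges has the form $(\ell, \emptyset)$. Suppose $v = (\ell, U) \in V$ with $v \neq \rho$ and $U \neq \emptyset$. Then $U$ contains some child clade $C$, and requirement (3) in the definition of the history sDAG guarantees that at least one edge descends from the node-clade pair $(v, C)$; such an edge is an outgoing edge of $v$. Hence any non-UA node lacking children must satisfy $U = \emptyset$, that is, it has the form $(\ell, \emptyset)$. This is precisely the statement recorded immediately before the observation, so in the write-up I would cite it rather than repeat it.

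The operative assertion is the second one: distinct leaf nodes carry distinct labels. Here I would exploit the fact that a node is, by definition, an element of $Y \times \Part(Y)$, hence is completely determined by the pair formed from its label and its subpartition. Every leaf node has the same subpartition, namely $\emptyset$, so a leaf node is determined by its label alone. Concretely, if two leaf nodes shared a label $\ell$, then both would be equal to the pair $(\ell, \emptyset)$; since $V$ is a set and therefore contains at most one copy of any element, these two nodes coincide. Taking the contrapositive yields the claim: two distinct leaf nodes have distinct labels.

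I do not expect a genuine obstacle, as both parts are immediate consequences of the definitions, with no case analysis or auxiliary construction needed. The only point that warrants care is the identification underlying the second part: because $V$ is a set rather than a multiset and all leaf nodes share the subpartition $\emptyset$, the label of a leaf node functions as a unique key within $V$. Everything else follows directly from requirement (3) together with the membership $V \subset (Y \times \Part(Y)) \cup \{\rho\}$.
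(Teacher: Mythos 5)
Your proof is correct and follows exactly the reasoning the paper intends for this observation: requirement (3) of \autoref{defn:historyDAG} forces any non-leaf node to have a descendant edge, and since nodes are literally elements of $Y \times \Part(Y)$, two leaf nodes sharing a label would be the identical pair $(\ell, \emptyset)$ in the node set $V$. The paper treats this as immediate from the definitions (it offers no separate proof beyond the remark preceding the observation), and your write-up simply makes that same argument explicit.
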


\begin{obs}\label{obs:backwardsinclusion}
    For any history sDAG edge $(v_1 = (\ell_1, U_1), v_2) $, we know that $\cu(v_2) \subset \cu(v_1) $, since $\cu(v_2) \in U_1 $.
    More generally, consider a history sDAG $(V, E) $, in which a node $v' $ is reachable from another node $v $ via a sequence of edges in $E $.
    By transitivity of inclusion, $\cu(v') \subset \cu(v) $.
\end{obs}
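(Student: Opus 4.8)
The plan is to prove the single-edge claim directly from the definition of a history sDAG, and then lift it to arbitrary directed paths by a short induction on path length.

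For the single-edge case, I would fix an edge $e = (v_1, v_2)$ with $v_1 = (\ell_1, U_1) \neq \rho$ and $v_2 = (\ell_2, U_2)$. First I would invoke requirement (2) of \autoref{defn:historyDAG}: since the parent $v_1$ is not $\rho$, the clade union of the target node lies in the subpartition of the parent, i.e.\ $\cu(v_2) \in U_1$. Next I would note that $v_1$ has an outgoing edge, and an edge can descend from $v_1$ only if some clade lies in $U_1$ (again by requirement (2)), so $U_1 \neq \emptyset$; hence the relevant branch of the clade-union definition applies and $\cu(v_1) = \bigcup_{C \in U_1} C$. Since $\cu(v_2)$ is one of the sets $C \in U_1$ appearing in this union, it is contained in the union, giving $\cu(v_2) \subset \cu(v_1)$. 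This is the entire content of the first sentence.

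For the general claim, I would induct on the length $k$ of a directed path $v = w_0 \to w_1 \to \cdots \to w_k = v'$ witnessing that $v'$ is reachable from $v$. Because $\rho$ accepts no incoming edges (requirement (1)), each of $w_1, \ldots, w_k$ is the target of an edge and is therefore not $\rho$; together with the hypothesis $v = w_0 \neq \rho$ (needed for $\cu(v)$ to be defined) this ensures $\cu(w_i)$ is well-defined at every step. The base case $k = 0$ is immediate ($\cu(v) \subset \cu(v)$) and $k = 1$ is exactly the single-edge result. For the inductive step I would apply the single-edge result to the final edge $w_{k-1} \to w_k$ to obtain $\cu(w_k) \subset \cu(w_{k-1})$, combine this with the inductive hypothesis $\cu(w_{k-1}) \subset \cu(w_0)$, and conclude $\cu(w_k) \subset \cu(w_0)$ by transitivity of set inclusion.

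I do not expect any genuine obstacle: the statement is essentially a repackaging of requirement (2) together with transitivity. The only points requiring care are bookkeeping ones, namely confirming that the clade union is defined at each node along the path (settled by the no-incoming-edges property of $\rho$) and that each parent node $w_i$ with $i < k$ has $U_i \neq \emptyset$, so that the ``$U = \emptyset$'' branch of the clade-union definition never applies to a parent and the union expression used above is always the correct one.
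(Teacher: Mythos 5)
Your proof is correct and follows exactly the argument the paper itself uses (the observation is justified inline: $\cu(v_2)\in U_1$ by requirement (2) of \autoref{defn:historyDAG}, then transitivity of inclusion along the path). Your additional bookkeeping --- checking $U_1\neq\emptyset$ so the right branch of the clade-union definition applies, and that no node along the path is $\rho$ --- is a careful spelling-out of details the paper leaves implicit, not a different approach.
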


\begin{defn}\label{defn:history}
    A \textbf{history} is a history sDAG in which the UA node $\rho $ has a unique child node, and each node-clade pair has exactly one descendant edge.

    The set of labels of the leaf nodes in a history $t $ will be denoted $L(t) $.
\end{defn}

Notice that not every element of $Y $ must appear as a node label in a history or history sDAG\@.
That is, $Y $ is an ambient label set, such as the set of all nucleotide sequences of a fixed length, from which history sDAG node labels can be chosen.

Also notice that there is no distinction between leaf node and internal node labels.
In practice, the set of leaf node labels will be associated with a set of observed evolving entities.
When a sampled entity is inferred to be an ancestor of other sampled entities, we can represent this in a history with an internal node carrying the label corresponding to the sampled ancestor.

Informally, a labeled tree can be converted to a history by annotating each node with its subpartition, and adding a UA node as a parent of the root node (\autoref{fig:dag_construction}).
The unique child of the UA node in a history will be called the \textbf{root node}, since it represents the root node of a corresponding internally labeled tree.

The natural substructure of a history is analogous to a subtree of a labeled tree, and will be very useful in later sections.
\begin{defn}
    Given a history sDAG $(V, E) $, a subgraph $s = (V_s, E_s) $ with $V_s \subset V $ and $E_s \subset E $ is a \textbf{subhistory} of $(V, E) $ if
    \begin{enumerate}
        \item $\rho \notin V_s $,
        \item there exists a \textbf{root node} $v_r\in V_s $ such that all other nodes in $V_s $ are reachable from $v_r $, and
        \item each node-clade pair in $s $ has exactly one descendant edge.
    \end{enumerate}

    The set of labels of leaf nodes in a subhistory $s $ is denoted $L(s) $.
\end{defn}
Later we will establish formally that a history is in fact a tree.
Given that fact, naming a subhistory is equivalent to removing an edge from a history, and discarding the component which contains the UA node.

In addition to the UA node, the definition of history contains redundant information in the sense that the subpartition of a node, formally a piece of data associated with each node, can be recovered as the set of sets of labels of leaf nodes reachable from that node's children.
Although this choice may seem an unnecessary complication, it is essential in distinguishing histories contained in a larger history sDAG\@.
This redundancy is shown in the following lemma, which is proven in \hyperref[proof:reachableleaves]{Appendix~\ref*{app:proofs}}:
\begin{restatable}[]{lemma}{reachableleaves}\label{lemma:reachableleaves}
    Let $(V,E) $ be a history sDAG or subhistory, and let $v\in V $.
    The set of labels of leaf nodes reachable from $v $ is $\cu(v) $.
\end{restatable}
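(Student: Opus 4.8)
The plan is to prove the statement by well-founded induction on the node $v$, using the fact that a history sDAG is directed and acyclic. Concretely, I would induct on the size $|\cu(v)|$ of the clade union. By Observation~\ref{obs:backwardsinclusion} together with the defining property of $\Part(Y)$, this measure strictly decreases as we pass from a node to any of its children: if $v = (\ell, U)$ with $U \neq \emptyset$ and $w \in \tgts(v)$, then $\cu(w) = C$ for some $C \in U$, and since $U$ is a disjoint union of at least two nonempty clades we have $C \subsetneq \cu(v)$, hence $|\cu(w)| < |\cu(v)|$. (One could equally induct on the length of the longest directed path from $v$ to a leaf.) Because the definition of subhistory also forces each child clade to carry a descendant edge, and hence leaf nodes to have empty subpartition, the same argument applies verbatim to subhistories, so I would handle both cases at once.

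For the base case, $|\cu(v)| = 1$ forces $U = \emptyset$: a node with nonempty subpartition has a clade union that is a disjoint union of at least two nonempty clades, so it has size at least $2$. Thus $v = (\ell, \emptyset)$ is a leaf node, the only leaf reachable from $v$ is $v$ itself, and its label set is $\{\ell\} = \cu(v)$. For the inductive step, take $v = (\ell, U)$ with $U \neq \emptyset$; then $v$ is not a leaf, so every directed path from $v$ to a leaf begins with an edge to some child $w \in \tgts(v)$. Hence the set of labels of leaf nodes reachable from $v$ is the union, over $w \in \tgts(v)$, of the labels of leaf nodes reachable from $w$. Applying the inductive hypothesis to each child $w$ (legitimate since $|\cu(w)| < |\cu(v)|$) rewrites this union as $\bigcup_{w \in \tgts(v)} \cu(w)$.

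It then remains to identify $\bigcup_{w \in \tgts(v)} \cu(w)$ with $\cu(v) = \bigcup_{C \in U} C$, and this is where the bookkeeping in the history sDAG axioms does the work: requirement~(2) of Definition~\ref{defn:historyDAG} guarantees $\cu(w) \in U$ for every child $w$, while requirement~(3) guarantees that every clade $C \in U$ is the clade union of at least one child. Together these give $\{\cu(w) : w \in \tgts(v)\} = U$, so the two unions coincide. I expect the only genuinely delicate points to be (i) confirming that the induction measure is well-founded — here $|\cu(v)|$ is a positive integer that strictly decreases across every edge, so no appeal to finiteness of $V$ is even needed — and (ii) the clean decomposition of the reachable leaves of $v$ as a union over its children, which relies on acyclicity to ensure $v$ is not among its own descendants. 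The clade-union arithmetic and the leaf-node characterization $(\ell,\emptyset)$ are then routine.
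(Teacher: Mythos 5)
Your proof is correct and takes essentially the same route as the paper's: induction on $|\cu(v)|$, using that every child clade carries at least one descendant edge and that clade unions strictly shrink across edges. The only difference is organizational --- the paper disposes of the containment (labels of reachable leaves) $\subseteq \cu(v)$ once and for all via \autoref{obs:backwardsinclusion} and uses the induction only for the reverse containment, tracing a single path down for each label $\ell \in \cu(v)$, whereas you fold both containments into one induction by identifying $\left\{\cu(w) \mid w \in \tgts(v)\right\}$ with $U$ (requirement~(2) giving one direction, requirement~(3) the other).
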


\autoref{lemma:reachableleaves} implies that a history's set of leaf labels is determined by the subpartition of its root node.
This will be relevant later, when we describe what it means for a history to be found in a history sDAG.

We intend for a history to be tree-shaped, but this is not assumed by the definition given.
\autoref{lemma:reachableleaves} also allows us to prove this essential fact.

\begin{restatable}[]{lemma}{historyaltdef}
    \label{lemma:historyaltdef}
    A history sDAG $(V, E) $ is a history if and only if it is a tree, and contains exactly one edge descending from $\rho $.
\end{restatable}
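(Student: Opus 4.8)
The statement is a biconditional, and I would first dispatch the trivial clause in each direction. Because the edge set of a history sDAG is a subset of $V\times V$, the phrase ``$\rho$ has a unique child node'' is literally equivalent to ``exactly one edge descends from $\rho$''; so in both directions that clause comes for free, and the real content is the equivalence, given a single root edge, between the history condition (every node-clade pair has exactly one descendant edge) and being a tree. I would prove the two implications separately, reserving an induction for the harder one.

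For the direction tree $\Rightarrow$ history, I assume $(V,E)$ is a tree with one edge out of $\rho$ and must upgrade requirement~(3) of \autoref{defn:historyDAG} from ``at least one'' to ``exactly one'' descendant edge per node-clade pair. Suppose a node-clade pair $\left((\ell,U),C\right)$ had two descendant edges, to distinct nodes $v_1\neq v_2$ with $\cu(v_1)=\cu(v_2)=C$. By \autoref{lemma:reachableleaves} the set of leaf labels reachable from each $v_i$ equals $C$, which is nonempty; picking any $\ell^\ast\in C$, the unique leaf node carrying label $\ell^\ast$ (unique by the observation that no two leaf nodes share a label) is then reachable from both $v_1$ and $v_2$. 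That produces two distinct directed paths from $(\ell,U)$ to this leaf, contradicting the uniqueness of paths in a tree. Hence every node-clade pair has exactly one descendant edge and $(V,E)$ is a history.

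For the direction history $\Rightarrow$ tree, I would induct on $|V|$. Let $r$ be the unique child of $\rho$, and let $v_1,\dots,v_m$ be the targets of the (unique) edges descending from the node-clade pairs $(r,C_i)$, $C_i\in U_r$; if instead $U_r=\emptyset$ then $r$ is a leaf and $(V,E)=\{\rho\to r\}$ is the base case. For each $i$ I form the subgraph on the nodes reachable from $v_i$, adjoin a fresh UA node pointing to $v_i$, and check that this is again a history with strictly fewer nodes: requirements~(1)--(3) and both history conditions are inherited because every descendant edge of a node reachable from $v_i$ again lands on a node reachable from $v_i$. By the inductive hypothesis each such sub-history is a tree $T_i$ rooted at $v_i$. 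The key structural point is that the $T_i$ are pairwise node-disjoint: any node $x$ common to $T_i$ and $T_j$ with $i\neq j$ would satisfy $\cu(x)\subset \cu(v_i)\cap\cu(v_j)=C_i\cap C_j=\emptyset$ by \autoref{obs:backwardsinclusion} and the disjointness of child clades in $\Part(Y)$, which is impossible since clade unions are nonempty. Every non-$\rho$, non-$r$ node is reachable from $r$ and hence from some $v_i$ (the only edges out of $r$ go to the $v_i$), while \autoref{obs:backwardsinclusion} keeps $r$ itself out of every $T_i$, since any node reachable from $v_i$ has clade union inside $C_i\subsetneq\cu(r)$. So $V$ is the disjoint union of $\{\rho\}$, $\{r\}$, and the $T_i$, and the only edges are $\rho\to r$, the $m$ edges $r\to v_i$, and the tree edges inside each $T_i$; this is exactly a tree.

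I expect the forward direction to be the main obstacle, and within it the crux is ruling out shared descendants (``diamonds''): tree-ness is not a purely local constraint, so one cannot read it off node-by-node and must instead exploit the global fact that the child clades of any node are disjoint. The computation $\cu(x)\subset C_i\cap C_j=\emptyset$ is what converts that local disjointness, via \autoref{obs:backwardsinclusion}, into node-disjointness of whole subtrees, so I would make sure the inductive setup cleanly certifies that each reachable-from-$v_i$ subgraph is a genuine history. The backward direction, by contrast, reduces cleanly to the uniqueness of leaf labels together with \autoref{lemma:reachableleaves}, and needs no induction.
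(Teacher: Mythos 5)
Your proof is correct, and half of it coincides with the paper's: your tree~$\Rightarrow$~history direction (two edges descending from one node-clade pair force, via \autoref{lemma:reachableleaves} and uniqueness of leaf labels, two distinct paths to a common leaf) is exactly the argument in the paper. Where you genuinely diverge is the history~$\Rightarrow$~tree direction. The paper proves its contrapositive: if some node $v$ has two distinct parents, the paths from $\rho$ to those parents must diverge at some node $v_r$ (using the unique root edge), and since $v$ is reachable from both divergent children, their clade unions intersect; disjointness of the clades in $U_{r}$ then forces the two divergent edges to descend from the \emph{same} node-clade pair, contradicting the history condition. You instead argue directly, by induction on $|V|$: peel off the root edge, certify that the induced subgraph below each child $v_i$ of the root (plus a fresh UA node) is again a history, and use $\cu(x)\subset C_i\cap C_j=\emptyset$ to show the resulting subtrees are pairwise node-disjoint, so the whole graph assembles into a tree. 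Both arguments rest on the same two facts --- \autoref{obs:backwardsinclusion} and disjointness of child clades --- but they buy different things: the paper's divergence argument is shorter, purely local, and needs no induction (though it quietly assumes the two root-paths do diverge at a node, which takes a little care when one parent is an ancestor of the other); your induction is longer but constructive, makes the recursive tree decomposition of a history explicit, and essentially re-derives the disjoint-sibling-subtree fact that the paper only establishes later, inside the proof of \autoref{lemma:nodesareroots}.
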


The proof for this proposition is given in \hyperref[proof:historyaltdef]{Appendix~\ref*{app:proofs}}.

Notice that since elements of $\Part(Y) $ may not contain exactly one clade, and since nodes in a history have exactly one child node per child clade, no node (other than the UA node) in a history may have exactly one child.
This is required to ensure that the history sDAG may not contain cycles.
Although this is not stated in \autoref{defn:historyDAG}, it is an important property of the history sDAG as the name suggests, and is proven in \hyperref[proof:acyclic]{Appendix~\ref*{app:proofs}}:

\begin{restatable}[]{lemma}{acyclic}
\label{lemma:acyclic}
    A history sDAG $(V, E) $ is acyclic.
\end{restatable}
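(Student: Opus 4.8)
The plan is to argue by contradiction, using the clade union $\cu(\cdot)$ as a quantity that strictly decreases, with respect to set inclusion, as one traverses any edge not emanating from $\rho$. Suppose $(V,E)$ contained a directed cycle $v_0 \to v_1 \to \cdots \to v_k = v_0$. First I would dispose of the UA node: since requirement (1) of \autoref{defn:historyDAG} states that $\rho$ accepts no incoming edges, $\rho$ cannot be the target of any edge and hence cannot lie on a cycle. So every node $v_i$ on the cycle is of the form $(\ell_i, U_i)$ with $U_i \in \Part(Y)$.

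The main step, and the place where the defining restrictions on $\Part(Y)$ do real work, is to upgrade the non-strict inclusion of \autoref{obs:backwardsinclusion} to a proper inclusion along each cycle edge. Consider an edge $(v_1 = (\ell_1, U_1), v_2)$ with $v_1 \neq \rho$. Because $v_1$ has a child, $U_1 \neq \emptyset$, and by definition of $\Part(Y)$ we have $|U_1| \neq 1$, so in fact $|U_1| \geq 2$. By requirement (2), $\cu(v_2) = C$ for some $C \in U_1$, whereas $\cu(v_1) = \bigcup_{C' \in U_1} C'$. Since $|U_1| \geq 2$, there is a second clade $C'' \in U_1$ with $C'' \neq C$; as the clades in $U_1$ are nonempty and pairwise disjoint, $C''$ is a nonempty subset of $\cu(v_1)$ disjoint from $C$, so $C \subsetneq \cu(v_1)$. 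Hence $\cu(v_2) \subsetneq \cu(v_1)$.

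With strict inclusion in hand the contradiction is immediate: applying the previous paragraph to each edge of the cycle gives $\cu(v_0) \supsetneq \cu(v_1) \supsetneq \cdots \supsetneq \cu(v_k) = \cu(v_0)$, so $\cu(v_0) \subsetneq \cu(v_0)$, which is absurd. Therefore no cycle exists and $(V,E)$ is acyclic. I expect the only real subtlety to be this strict-inclusion step: the non-unifurcation condition $|U| \neq 1$ is exactly what prevents a child clade from coinciding with its parent's clade union, and without it the argument would yield only the non-strict inclusion already recorded in \autoref{obs:backwardsinclusion}, which is insufficient to rule out cycles.
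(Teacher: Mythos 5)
Your proof is correct and takes essentially the same approach as the paper's: both first dispose of $\rho$ via its lack of incoming edges, and then exhibit a quantity derived from clade unions that strictly decreases along every remaining edge, which rules out cycles. The only cosmetic difference is that you use strict set inclusion of $\cu(\cdot)$ where the paper uses cardinality (which forces the paper to treat leaf-targeting edges as a separate case, since leaves have no outgoing edges); your uniform treatment via the non-unifurcation condition $|U|\neq 1$ is equally valid.
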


Sometimes data sets include a fixed root node label, such as a common ancestor sequence.
A search for minimum weight labeled histories explaining such a data set may yield labeled histories with a unifurcation at the root node.
We accommodate this by considering the fixed root sequence a leaf node label, and placing the corresponding leaf node as an additional child of the root node.

Since histories are tree-shaped history sDAGs, we can store collections of histories by taking their graph union.
However, we should first verify that a graph union of history sDAGs is itself a history sDAG\@.

\begin{lemma}\label{lemma:historyDAGunion}
    Let $(V, E) $ and $(V', E') $ be history sDAGs on labels $Y $.
    Then $(V\cup V', E\cup E') $ is also a history sDAG\@.
\end{lemma}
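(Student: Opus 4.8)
The plan is to verify directly that the union $(V \cup V', E \cup E')$ satisfies each clause of \autoref{defn:historyDAG}, deducing every requirement from the fact that it already holds in $(V,E)$ and in $(V',E')$ separately. The one conceptual point to keep in mind throughout is that a node is literally an element of $\left(Y\times\Part(Y)\right)\cup\{\rho\}$, so its label, subpartition, child clades, and clade union $\cu(v)$ are all determined by the node alone and do not depend on which graph it is regarded as belonging to. In particular, a node that appears in both $V$ and $V'$ carries the same data in each, so the union of node sets is unambiguous, and $\rho$ is a single shared symbol rather than two distinct universal ancestors.

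First I would check the node set: since $V\subset\left(Y\times\Part(Y)\right)\cup\{\rho\}$ and likewise for $V'$, their union lies in the same ambient set, and $\rho\in V\subset V\cup V'$ is the unique UA node. The edge set satisfies $E\cup E'\subset (V\cup V')\times(V\cup V')$. For reachability (clause 1), any $v\in V\cup V'$ lies in $V$ or in $V'$; in the first case a directed path from $\rho$ to $v$ exists within $E$, in the second within $E'$, and either path survives in $E\cup E'$. Since no edge of $E$ and no edge of $E'$ terminates at $\rho$, no edge of $E\cup E'$ does either, so $\rho$ still accepts no incoming edges.

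Next I would verify clauses 2 and 3, which is where the node-intrinsic nature of $\cu$ and of the subpartition does the work. Any edge $e=((\ell_1,U_1),(\ell_2,U_2))\in E\cup E'$ with parent $\neq\rho$ belongs to one of $E$ or $E'$; whichever it is, that graph's clause 2 gives $\cu((\ell_2,U_2))\in U_1$, and since this containment is a statement purely about the two nodes it continues to hold in the union. For clause 3, fix a node $v=(\ell,U)\in V\cup V'$ and a child clade $C\in U$. Then $v\in V$ or $v\in V'$, and the corresponding graph supplies an edge descending from the node-clade pair $(v,C)$; that edge lies in $E\cup E'$, so at least one descendant edge exists there as well. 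This establishes every requirement.

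I do not expect a genuine obstacle here: the result is a routine closure check. The only place one could go wrong is by forgetting that clade union and subpartition are properties of individual nodes (so that clause 2 transfers without re-derivation), or by treating the two UA nodes as distinct; making those two observations explicit is essentially the whole content of the argument. It is worth remarking that the union can create node-clade pairs with strictly more than one descendant edge, which is exactly why the union is in general a proper history sDAG rather than a history, consistent with the later fact that $D(T)$ may express histories beyond those in $T$.
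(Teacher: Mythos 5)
Your proof is correct and follows essentially the same approach as the paper's: a direct verification that the union satisfies each clause of \autoref{defn:historyDAG}, with reachability inherited from whichever original sDAG contains each node. The paper's version is far terser, but your added observations (that node data such as $\cu(v)$ and the subpartition are intrinsic to the node, and that $\rho$ is a single shared symbol) simply make explicit what the paper leaves implicit.
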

\begin{proof}
    All the nodes and edges required to satisfy \autoref{defn:historyDAG} are present in $(V\cup V', E\cup E') $, since they are present in each of the original history sDAGs.
    All nodes are reachable from the root node, through exactly the same sequence of edges by which they were reachable in at least one of the original histories.
\end{proof}

\begin{defn}\label{defn:historyDAGconstruct}
    For a set $T $ of histories with labels in $Y $, the \textbf{history sDAG constructed from} $T $ is the graph union of the histories in $T $:
    \begin{equation*}
        \left(\bigcup_{(V, E) \in T}V, \bigcup_{(V, E) \in T} E \right)
    \end{equation*}
\end{defn}

We should also formalize the way in which a history sDAG contains histories.
To do so, we will need to define a \emph{trim}, which is a history sDAG which appears as a substructure in a larger history sDAG.

\begin{defn}\label{defn:trim}
    Let $(V, E) $ be a history sDAG on labels $Y $.
    Then $(V', E') $ is a \textbf{trim} of $(V, E) $ if $V' \subset V $, $E'\subset E $, and $(V', E') $ is a history sDAG on labels $Y $.
    We say a history $t = (V'', E'') $ is \textbf{in} the history sDAG $(V, E) $ if $(V'', E'') $ is a trim of $(V, E) $.

    The collection of histories in the history sDAG constructed from a collection of histories $T $ will be denoted $D(T) $.
\end{defn}

We can now see why we must specify in \autoref{defn:history} that $\rho $ has exactly one child node in a history.
Edges descending from the UA node in a history sDAG keep track of which DAG nodes are allowed to be root nodes.
It may be possible to choose a tree-shaped trim of a history sDAG in which two nodes $v_1 $ and $v_2 $ are children of $\rho $, and $\cu(v_1) \cap \cu(v_2) = \emptyset $.
Such a structure should be considered a trim containing two histories, but is not itself a history.

Any history sDAG should be uniquely determined by the collection of histories it contains.
This intuition motivates the following two lemmas, which are proven in \hyperref[proof:nodesareroots]{Appendix~\ref*{app:proofs}}:

\begin{restatable}[]{lemma}{nodesareroots}\label{lemma:nodesareroots}
    Let $(V, E) $ be a history sDAG.
    For any $v\in V $, there exists a subhistory $s $ in $(V, E) $ whose root node is $v $.
\end{restatable}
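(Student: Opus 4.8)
The plan is to build the desired subhistory by a top-down recursion and to verify its defining properties by strong induction on $|\cu(v)|$, which by \autoref{lemma:reachableleaves} equals the number of leaf labels reachable from $v$ and is finite. The key structural observation I would use at the outset is that if $v = (\ell, U)$ has $U \neq \emptyset$, then $|U| \geq 2$ by the definition of $\Part(Y)$, so $\cu(v)$ is a union of at least two disjoint nonempty clades and hence $|\cu(v)| \geq 2$. Thus $|\cu(v)| = 1$ occurs precisely when $v = (\ell, \emptyset)$ is a leaf node, giving a clean base case: I would take $s = (\{v\}, \emptyset)$, which is a subhistory rooted at $v$ since $\rho \notin \{v\}$, the node $v$ trivially reaches every node of the subgraph, and there are no node-clade pairs to check.

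For the inductive step, suppose $v = (\ell, U)$ with $U \neq \emptyset$. For each child clade $C \in U$, requirement (3) in Definition~\ref{defn:historyDAG} guarantees at least one edge descending from the node-clade pair $(v, C)$; I would select exactly one such edge $e_C = (v, w_C)$, where $\cu(w_C) = C$. Since $C$ is one of at least two disjoint nonempty clades whose union is $\cu(v)$, we have $|C| < |\cu(v)|$, so the induction hypothesis supplies a subhistory $s_C = (V_{s_C}, E_{s_C})$ rooted at $w_C$. I would then assemble
\[
s = \Bigl(\{v\} \cup \bigcup_{C \in U} V_{s_C},\ \{e_C : C \in U\} \cup \bigcup_{C \in U} E_{s_C}\Bigr).
\]

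The crux, and the step I expect to require the most care, is verifying requirement (3) of the subhistory definition for $s$: that each node-clade pair has exactly one descendant edge. This reduces to showing the $s_C$ are pairwise node-disjoint and that none of them contains $v$. I would argue this via Observation~\ref{obs:backwardsinclusion}: every node $u \in V_{s_C}$ is reachable from $w_C$, so $\cu(u) \subset \cu(w_C) = C$. For $C_1 \neq C_2$, disjointness of clades gives $C_1 \cap C_2 = \emptyset$, so any node shared between $s_{C_1}$ and $s_{C_2}$ would satisfy $\cu(u) \subset C_1 \cap C_2 = \emptyset$, which is impossible because the clade union of a non-UA node is always nonempty. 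The same reasoning shows $v \notin V_{s_C}$, since $C \subsetneq \cu(v)$ forces $\cu(v) \not\subset C$. Granting this disjointness, the node-clade pairs of $v$ receive exactly the chosen edges $e_C$, one per child clade, while every other node-clade pair lies in a unique $s_C$ and inherits its single descendant edge from the induction hypothesis.

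The remaining conditions are routine: $\rho \notin V_s$ because $v \neq \rho$ and $\rho \notin V_{s_C}$ by the hypothesis, and every node of $V_s$ is reachable from $v$ (each $w_C$ via the edge $e_C$, and all nodes of $s_C$ through $w_C$). Hence $s$ is a subhistory of $(V, E)$ whose root node is $v$. The induction is well-founded because $|\cu(\cdot)|$ strictly decreases along the chosen edges and is bounded below by $1$, which completes the argument.
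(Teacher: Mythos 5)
Your proof is correct and follows essentially the same route as the paper's: induction on $|\cu(v)|$ with the leaf node as base case, choosing one descendant edge per child clade, invoking the inductive hypothesis below each child, and establishing pairwise disjointness of the sub-subhistories via clade unions before taking the union. If anything, your write-up is slightly more careful than the paper's, since you explicitly verify that $v$ itself does not appear in any $V_{s_C}$ and that $|\cu(v)|=1$ characterizes leaf nodes.
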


\begin{restatable}[]{lemma}{dagisitshistories}
    \label{lemma:dagisitshistories}
    Let $(V, E) $ be a history sDAG, and let $T $ be the collection of histories in $(V, E) $.
    Then $(V, E) $ is the history sDAG constructed from $T $.
\end{restatable}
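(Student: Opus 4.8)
The plan is to prove the two equalities $V^{*}=V$ and $E^{*}=E$, where $(V^{*},E^{*})$ denotes the history sDAG constructed from $T$ as in \autoref{defn:historyDAGconstruct}, i.e.\ the graph union of the histories in $T$. One inclusion is immediate: every history in $T$ is by definition a trim of $(V,E)$ (\autoref{defn:trim}), so its node and edge sets are contained in $V$ and $E$; taking the union over $T$ gives $V^{*}\subseteq V$ and $E^{*}\subseteq E$. The substance of the lemma is the reverse inclusion, and for this it suffices to show that every edge $e\in E$ lies in at least one history in $(V,E)$. This at once yields $E\subseteq E^{*}$, and since every node other than $\rho$ is reachable from $\rho$ and hence has an incoming edge, while $\rho$ belongs to every history, it also yields $V\subseteq V^{*}$.

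So fix an edge $e=(v_1,v_2)\in E$; I would exhibit a history in $(V,E)$ containing $e$ by building it in a downward and an upward movement. For the downward part, \autoref{lemma:nodesareroots} supplies a subhistory rooted at $v_2$, which I attach to $v_1$ through $e$ (legitimate because $\cu(v_2)$ is a child clade of $v_1$ by requirement (2) of \autoref{defn:historyDAG}); for each remaining child clade $C$ of $v_1$ I use requirement (3) to choose some descendant edge of $(v_1,C)$ and hang a subhistory (again via \autoref{lemma:nodesareroots}) off its target. This produces a subhistory $s_1$ rooted at $v_1$ that contains $e$. For the upward part, requirement (1) guarantees a simple directed path $\rho=w_0\to w_1\to\cdots\to w_m=v_1$; I successively lift the root from $w_{i+1}$ to $w_i$, each time adjoining the path edge $(w_i,w_{i+1})$ for the child clade $\cu(w_{i+1})$ and completing every other child clade of $w_i$ with a subhistory as above, down to a subhistory rooted at $w_1$, and finally I adjoin $\rho$ together with the single edge $(\rho,w_1)$. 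Call the resulting trim $t$.

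It remains to verify that $t$ is genuinely a history: requirement (2) is inherited from $(V,E)$ since $E_t\subseteq E$, reachability from $\rho$ holds by construction, $\rho$ has the unique child $w_1$, and it contains $e$. The one real obstacle is the condition that each node-clade pair has exactly one descendant edge in $t$, which simultaneously gives requirement (3) and the uniqueness clause of \autoref{defn:history}: a priori the independently chosen downward subhistories might share a node $u$ and equip some pair $(u,C)$ with two conflicting edges. I would rule this out with a disjointness argument driven by the fact that $\cu(u)\neq\emptyset$ for every node $u$ (the clade union of a leaf $(\ell,\emptyset)$ is $\{\ell\}$, and any other node's clade union is a union of nonempty clades). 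By \autoref{obs:backwardsinclusion} every node reachable from $w$ has clade union contained in $\cu(w)$, so two nodes whose clade unions lie in disjoint sets must themselves be distinct. Since the child clades of any fixed node are pairwise disjoint and the clade unions $\cu(w_0)\supseteq\cu(w_1)\supseteq\cdots\supseteq\cu(v_2)$ shrink down the path, the pieces attached under distinct child clades (the descent through $e$ and each side subhistory) occupy pairwise disjoint node sets, none of which contains a path node. Hence every node of $t$ draws its outgoing edges from exactly one piece of the construction, in which exactly one descendant edge per child clade was chosen, so the uniqueness condition holds and $t$ is a history containing $e$. This establishes $E\subseteq E^{*}$ and $V\subseteq V^{*}$, and with the trivial inclusions completes the proof; the fact that such a $t$ is a tree is then also guaranteed by \autoref{lemma:historyaltdef}.
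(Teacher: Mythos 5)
Your proposal is correct and takes essentially the same route as the paper: both reduce the lemma to showing that every edge of $E$ lies in some history in $(V,E)$, and both construct that history by following a directed path from $\rho$ down to the edge while grafting subhistories supplied by \autoref{lemma:nodesareroots} onto the remaining node-clade pairs. Your explicit disjointness argument (via \autoref{obs:backwardsinclusion}) for the exactly-one-descendant-edge condition is a detail the paper's splicing construction leaves implicit, but the underlying construction and the concluding set-equality bookkeeping are the same.
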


Finally, we will need to define the largest possible history sDAG constructed using a given set of labels.
\begin{defn}\label{defn:complete}
    The \textbf{complete history sDAG} on labels $Y $ is the history sDAG which contains all possible edges on all nodes allowed by the choice of $Y $.
\end{defn}

Equivalently, the complete history sDAG could be constructed as the graph union of all possible histories with labels in $Y $.

\subsection{History Weights}
In this section we will define a general scheme for assigning weights to histories, and describe the relationship between these weights and the structure of the history sDAG.

As shown in~\autoref{fig:dagfindsmore}, the history sDAG in general contains more histories than were used to construct it.
These extra histories arise because the history sDAG allows subhistories between the histories it contains, whenever the subhistories' parent nodes share the same child clades and node label.
We refer to this occurrence as \textbf{subhistory swapping}.
\hyperref[proof:updownthm]{Appendix~\ref*{defn:subhistoryswap}} describes these subhistory swaps precisely, shows that all new histories in a history sDAG can be described as in terms of sequences of these subhistory swap operations, and provides the proof for~\autoref{thm:updown}, which involves an argument that these subhistory swaps preserve history weights.

This section will leave the details of subhistory swaps, and the proof of~\autoref{thm:updown}, to the Appendix, and only build the background necessary to state and understand~\autoref{thm:updown}.

We begin by defining another useful type of history substructure.

\begin{defn}
    Let $(V, E) $ be a history sDAG and let $s = (V_s, E_s) $ be a subhistory $(V, E) $.
    Also, let $v_r $ be the root node of the subhistory $s $, and let $v\in V $ be a parent node of $v_r $, so that $(v, v_r) \in E $.
    Then the \textbf{augmented subhistory} $s^{v} $ is the subgraph $\left(V_s\cup \left\{v\right\}, E_s \cup \left\{(v, v_r) \right\} \right) $ of $(V, E) $ consisting of the subhistory $s $ plus the parent node $v $ and the edge connecting $v $ to $v_r $.
\end{defn}
\begin{defn}
    Let $(V, E) $ be a history sDAG, and $v = (\ell, U) \in V $ a node.
    We make the following definitions.
    \begin{itemize}
    \item $\tgts(v) := \left\{v_c \mid (v, v_c) \in E \right\} $ will denote the set of children of $v $
    \item $\tgts(v, C) := \left\{v_c \mid (v, v_c)\in E,\ \cu(v_c) = C \right\} $ will denote the set of children of the node-clade pair $(v, C) $ for each clade $C\in U$
    \item $\shbelow(v) $ will denote the set of subhistories in $(V, E) $ rooted at $v $.
    \end{itemize}
\end{defn}

Although we are interested primarily in computing parsimony on histories labeled with nucleotide sequences, we will do so within a much more general framework of \textbf{history weights}.
\begin{defn}
    Let $(V, E) $ be the complete history sDAG on labels $Y $, and let $f:E\to W $ be an \textbf{edge weight function} to a weight set $W $ endowed with addition and containing an additive identity $0\in W $.
    The \textbf{weight} of any subgraph $(V', E') $ of $(V, E) $ is then given by the \textbf{weight function} $g_f$
\begin{equation*}
    g_f\left((V', E') \right) = \sum\limits_{e \in E'} f(e).
\end{equation*}
    In particular, since any history $t $ in $(V, E) $ is a subgraph of $(V, E) $, the weight of $t $ is given by $g_f(t) $.
\end{defn}

In the case of parsimony, the label set $Y $ will contain sequences, the function $f $ is Hamming distance, and $g_f $ will compute the parsimony score of a history.
A history's parsimony score is decomposable as a sum of an edge weight function over edges only when complete, unambiguous nucleotide sequences are accessible to that weight function as node label data.
If nucleotide sequences of internal nodes are not contained in node label data, the contribution of an edge to a history's parsimony score may be dependent on the structure of the rest of the history, making the decomposition impossible.
In particular, the edge weight function $f $ is required to be a function on all possible history sDAG edges, which correctly reports the contribution of an edge to the weight of any history which contains it.

Although our focus here is parsimony, notice that this framework allows much more general notions of history weight, including situations where the function $f $ is sensitive to edge direction or subpartitions, or takes values in a non-numeric set, such as a set of sequences.
These generalizations will be important for future applications.
For example, we could compute a branching process likelihood like that used by the \gctree project, whose value can be decomposed over tree edges, and which can be summarized by a pair of integers~\citep{dewitt2018using}.

To compare weights of histories, the weight set $W $ must admit a total ordering.
This ordering will be required to respect addition on $W $, in a slightly weaker sense than is generally meant:

\begin{defn}
    A weight set $W $, endowed with addition, is \textbf{clade-ordered} with respect to some edge weight function $f $ and history sDAG $(V, E) $ on labels $Y $ if
    \begin{itemize}
        \item The ordering on $W $ respects addition and is a total ordering on all of the following subsets of $W $:
        \begin{itemize}
            \item Sets of weights of subhistories below any node:

                $\left\{g_f(s) \mid s\in \shbelow(v) \right\}$, for any $v \in V \setminus \left\{\rho \right\} $,
            \item Sets of weights of augmented subhistories below any node-clade pair:

                $\left\{g_f(s^v) \mid s\in \shbelow(v_c),\ v_c\in \tgts(v, C)\right\}$, for any $v = (\ell, U) \in V\setminus \left\{\rho \right\} $, and any $C\in U $.
        \end{itemize}

        \item The ordering on $W $ is a total ordering on the set of weights of histories:
        \[\left\{g_f(t) \mid t\text{ is a history in } (V, E)  \right\} \subset W\]
    \end{itemize}

    We say that the ordering on $W $ respects addition on a set $W'\subset W $ if for all $a, b\in W' $ and for all $c\in W $, $a < b $ if and only if $a + c < b + c $.
\end{defn}

The following observation makes this definition easier to use.

\begin{obs}
   Let $W $ be a weight set which is clade-ordered with respect to a history sDAG $(V,E) $ and edge weight function $f $.
   If $(V', E') $ is a trim of $(V, E) $, and $f': E' \to W$ is equal to $f $ restricted to $E' $, then $W $ is also clade-ordered with respect to $f' $ and $(V', E') $.
\end{obs}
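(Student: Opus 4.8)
The plan is to show that every weight set appearing in the clade-ordering condition for the trim $(V', E')$ is a subset of the analogous weight set for $(V, E)$, and that the corresponding weights agree. Since both ``is a total ordering on'' and ``respects addition on'' are properties inherited by subsets, clade-orderedness will then transfer to $f'$ and $(V', E')$ with no further work. There is no substantive obstacle here; the only care required is to confirm that each notion of substructure ($\shbelow$, $\tgts(\cdot, C)$, augmented subhistory, and history) is preserved under passing to a subgraph, and that $g_f$ is unchanged.

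First I would record the containment facts that follow from $V' \subseteq V$ and $E' \subseteq E$. For any node $v \in V' \setminus \left\{\rho\right\}$, which is also a node of $V \setminus \left\{\rho\right\}$, every subhistory rooted at $v$ using only edges of $E'$ is in particular a subhistory rooted at $v$ using edges of $E$; hence $\shbelow(v)$ computed in the trim is a subset of $\shbelow(v)$ computed in $(V, E)$. Since children and node-clade-pair children are defined through edges, the sets $\tgts(v, C)$ computed in the trim are likewise subsets of those computed in $(V, E)$, and consequently so are the collections of augmented subhistories $s^v$ below any node-clade pair. Finally, a history in $(V', E')$ is a history sDAG that is a subgraph of $(V', E')$, hence of $(V, E)$, so it is a history in $(V, E)$; thus the set of history weights for the trim is a subset of the set of history weights for $(V, E)$.

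Next I would note that the weights agree: because $f'$ is the restriction of $f$ to $E'$ and every substructure above uses only edges of $E'$, we have $g_{f'}(\cdot) = g_f(\cdot)$ on each such subhistory, augmented subhistory, and history. Combining this with the previous paragraph, each of the three families of weight sets entering the clade-ordering definition for $\left(f', (V', E')\right)$ is literally a subset of the corresponding family for $\left(f, (V, E)\right)$.

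It then remains to invoke two elementary closure facts, applied termwise to these weight sets. A total ordering restricted to a subset is still a total ordering, so each weight set for the trim, being a subset of a totally ordered weight set for $(V, E)$, is totally ordered. And if the ordering respects addition on a set $W''$, meaning $a < b \iff a + c < b + c$ for all $a, b \in W''$ and all $c \in W$, then it respects addition on any subset of $W''$, since the quantifier over $a, b$ only shrinks while $c$ still ranges over the same ambient $W$. Applying these two observations to the subhistory and augmented-subhistory weight sets (for total ordering and respecting addition) and to the history weight set (for total ordering) yields exactly the clauses of the clade-ordering definition for $f'$ and $(V', E')$, completing the argument.
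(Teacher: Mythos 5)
Your proof is correct. The paper states this result as an observation with no accompanying proof, and your argument is exactly the routine verification it leaves implicit: every weight set in the clade-ordering definition for $(V', E')$ (subhistory weights, augmented-subhistory weights, history weights) is a subset of the corresponding set for $(V, E)$ with weights unchanged under the restriction $f'$, and both the total-ordering property and the respects-addition property pass to subsets --- the latter because the element $c$ in the definition still ranges over all of the ambient set $W$.
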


For example, it may often be most convenient to argue that a weight set is clade-ordered with respect to the complete history sDAG on the label set $Y $, and a weight function defined on all possible edges in that history sDAG.

However, since this is a strictly stronger condition on $W$, which is why the definition of clade-ordering is with respect to a particular history sDAG.

Through the rest of this section, the label set $Y $ will be fixed, and it will be assumed that $f $ is an edge weight function mapping into $W $, a weight set which is clade-ordered with respect to $f $.

Finally, we can describe exactly the sense in which the history sDAG preserves history weights, a property depicted in \autoref{fig:updownandcor}.

\begin{figure}[H]
    \centering
    \includegraphics{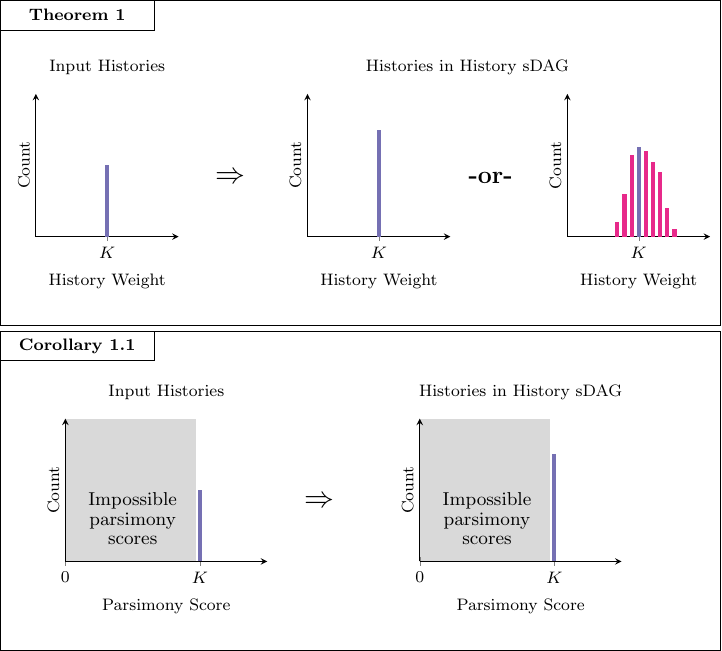}
    \caption{\
        \autoref{thm:updown} states that if a history sDAG is built from a collection of histories which all have weight $K $, then either the resulting sDAG must contain only histories of weight $K $, or there must be histories with weights greater \textbf{and} less than $K $.
        In either case the resulting history sDAG may contain more histories than were used to build it.
        \autoref{cor:trimDAG} observes that since no parsimony score less than the maximum parsimony score can be achieved by a history on a given leaf set, a history sDAG built from maximally parsimonious histories must contain only maximally parsimonious histories.
}%
\label{fig:updownandcor}
\end{figure}
\begin{restatable}[]{thm}{updownthm}\label{thm:updown}
    Let $T $ be a collection of histories, so that $g_f(t) = K $ for all $t\in T $.
    Then there exists a history $t\in D(T) $ with $g_f(t) < K $ if and only if there exists a history $t'\in D(T) $ with $g_f(t') > K $.
\end{restatable}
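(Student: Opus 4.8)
The plan is to reduce the statement to a single \emph{symmetric exchange} between two equal-weight histories and to exploit the edge-additivity of $g_f$. The heart of the argument is the following observation. Suppose $t_1, t_2 \in D(T)$ are two histories, both of weight $K$, that share a node-clade pair $(v, C)$ but carry different augmented subhistories $s_1^v$ and $s_2^v$ below it. Exchanging these two subhistories produces two new graphs: $t_1$ with $s_1^v$ replaced by $s_2^v$, and $t_2$ with $s_2^v$ replaced by $s_1^v$. Each is again a history in $D(T)$ (the swapped-in subhistory has the correct clade union $C$, and its nodes are disjoint from the remainder of the host history because their clade unions are contained in $C$; the precise closure claim is exactly what the subhistory-swap formalism of the Appendix provides). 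Since $g_f$ is a sum over edges and the exchange only alters the edges lying inside the augmented subhistory below $(v,C)$, the two resulting weights are $K + \delta$ and $K - \delta$, where $\delta = g_f(s_2^v) - g_f(s_1^v)$. Thus the two exchanged histories have weights reflected about $K$: whenever $\delta \neq 0$, one lies strictly below $K$ and the other strictly above. This already proves the biconditional in miniature, and it is the mechanism visible in \autoref{fig:dagfindsmore}.

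To handle an arbitrary history $t^- \in D(T)$ with $g_f(t^-) < K$, I would use the fact (established in the Appendix via the subhistory-swap operations) that $D(T)$ is exactly the swap-closure of $T$: every history in $D(T)$ can be reached from some $t_0 \in T$ by a finite sequence of single subhistory swaps. Tracing such a sequence from $t_0$ (weight $K$) to $t^-$ (weight $< K$), the total weight starts at $K$ and ends below $K$, so some swap in the sequence first brings the running weight below $K$. The symmetric exchange lemma then applies at that swap to produce the mirror history on the other side of $K$, giving a history of weight $> K$; the reverse implication is identical with the roles of ``above'' and ``below'' interchanged, so it suffices to argue one direction.

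The main obstacle is a strictness/pivot issue: to conclude that the mirror history lies \emph{strictly} above $K$, the swap that crosses below $K$ must depart from a history of weight \emph{exactly} $K$, whereas a priori the running weight might first rise above $K$ and only later drop below it, so the crossing swap could start from a history of weight $> K$. I expect the clean way to control this is an extremal argument rather than an arbitrary swap path: let $t_{\min}$ be a global minimum-weight history (so $g_f(t_{\min}) < K$ precisely when a below-$K$ history exists, by clade-ordering), let $t_0 \in T$ share its root node $r$, and decompose both at $r$. Because $t_{\min}$ is a global minimum it uses a minimum-weight augmented subhistory below each child clade of $r$, and the clade-wise surpluses of $t_0$ over $t_{\min}$ sum to $K - g_f(t_{\min}) > 0$; hence some child clade $C^*$ of $r$ admits both a lighter and a strictly heavier augmented subhistory appearing inside weight-$K$ completions. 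Propagating this straddle down to a node-clade pair at which a single symmetric exchange between two genuine weight-$K$ histories is available --- which is where the ``respects addition'' half of clade-ordering does the real work, letting local weight comparisons be made independently of the surrounding context --- is the delicate step, and is essentially the content I would expect \autoref{thm:updown}'s proof in the Appendix to supply.
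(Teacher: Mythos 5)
Your ingredients are the same as the paper's: your symmetric exchange is \autoref{lemma:updownlittle}, closure of $D(T)$ under swaps is \autoref{lemma:closedundertl}, and the fact that every history in $D(T)$ is obtained from a history in $T$ by a finite sequence of subhistory swaps is \autoref{lemma:whataredaghistories}. But your proof does not close, and you say so yourself: the ``delicate step'' you defer to the Appendix is precisely the content of the theorem. The pivot obstacle you identify has a one-line resolution that you missed: if, while tracing the swap sequence from $t_0\in T$ toward $t^-$, the running weight ever rises strictly above $K$, then that intermediate history itself lies in $D(T)$ (by \autoref{lemma:closedundertl}) and already witnesses the conclusion, so the proof is finished at that moment. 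Hence you may assume every intermediate history has weight at most $K$, and then the first swap that takes the running weight strictly below $K$ necessarily departs from a history of weight exactly $K$; the mirror exchange at that swap --- between that weight-$K$ intermediate history and the history $t_j\in T$ (of weight $K$) donating the swapped-in subhistory --- produces the desired history of weight $>K$. This is exactly the paper's proof, organized as an induction on the number of swaps with a three-way case split on $g_f(t_*)$, where $t_*$ is the history one swap before the end: if $g_f(t_*)>K$, done; if $g_f(t_*)<K$, apply the inductive hypothesis; if $g_f(t_*)=K$, apply \autoref{lemma:updownlittle} to $t_*$ and $t_n$.

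Your extremal workaround, besides being unnecessary, does not work as sketched: finding a child clade $C^*$ of the root below which a lighter augmented subhistory (from $t_{\min}$) and a heavier one (from $t_0$) both occur gives you, after exchanging, a history below $K$ and a history above $g_f(t_{\min})$ --- but the latter need not exceed $K$, which is why you are forced to ``propagate the straddle,'' and that propagation is the whole theorem again. Two smaller points. First, it matters that the mirror partner has weight exactly $K$: this is why \autoref{lemma:whataredaghistories} expresses histories of $D(T)$ via subhistories of histories in $T$ (all of weight $K$) rather than via arbitrary members of $D(T)$; your phrase ``reached from some $t_0 \in T$ by a finite sequence of single subhistory swaps'' should be sharpened to record where the swapped-in subhistories live. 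Second, $W$ is only assumed to carry addition and an additive identity, so $\delta = g_f(s_2^v)-g_f(s_1^v)$ and ``weights reflected about $K$'' are not defined in general; the reflection must be phrased, as in \autoref{lemma:updownlittle}, through the equivalences $w_1+K_2 < w_1+K_1 \iff K_2<K_1 \iff w_2+K_2<w_2+K_1$ licensed by clade-ordering.
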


\autoref{thm:updown} is the motivation for and main result of this section, guaranteeing that a history sDAG constructed from minimum weight histories will only express minimum weight histories, and is proven in \hyperref[proof:updownthm]{Appendix~\ref*{app:proofs}}.

However, since it may be impractical to verify that a collection of histories are minimum weight relative to all other possible histories on a chosen label set, \autoref{thm:updown} will often be more useful when applied in the form of the following corollary, that any history sDAG may be trimmed to express exactly its minimum weight histories, relative only to the other histories in that history sDAG\@.

\begin{cor}\label{cor:trimDAG}
    Let $(V, E) $ be a history sDAG, and let $f $ be an edge weight function as defined previously.
    Then there exists a history sDAG $(V', E') $ which is a trim of $(V, E) $ such that the histories in $(V', E') $ are exactly the minimum weight histories in $(V, E) $ with respect to $f $.
\end{cor}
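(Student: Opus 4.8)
The plan is to extract from $(V,E)$ its sub-ensemble of minimum weight histories, rebuild a history sDAG from exactly those, and then use \autoref{thm:updown} to certify that the rebuilt object introduces no history of non-minimal weight through subhistory swapping.

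First I would fix $K$ to be the minimum value of $g_f(t)$ over all histories $t$ in $(V,E)$. This minimum exists because a history sDAG has finite node and edge sets and hence contains only finitely many histories (each determined by a choice of subsets of $V$ and $E$), while the clade-ordered hypothesis supplies a total ordering on $\{g_f(t) \mid t \text{ a history in } (V,E)\}$, so a least element is well-defined. Let $T^\star$ be the collection of all histories in $(V,E)$ with $g_f(t) = K$, and let $(V', E')$ be the history sDAG constructed from $T^\star$ in the sense of \autoref{defn:historyDAGconstruct}. By \autoref{lemma:historyDAGunion} this graph union is again a history sDAG, and since each history in $T^\star$ is a trim of $(V,E)$, all of its nodes and edges already lie in $V$ and $E$; hence $(V',E')$ is a trim of $(V,E)$.

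Next I would identify the histories of $(V',E')$ with $D(T^\star)$ via \autoref{defn:trim} and pin down their weights. Because trimming is transitive (a trim of a trim is a subgraph that is itself a history sDAG, hence a trim), every history in $(V',E')$ is also a history in $(V,E)$, so by minimality of $K$ none of them can have weight less than $K$. Now apply \autoref{thm:updown} to the collection $T^\star$, all of whose members have weight exactly $K$: the theorem asserts that $D(T^\star)$ contains a history of weight $< K$ if and only if it contains one of weight $> K$. Since the previous sentence rules out the former, the contrapositive rules out the latter, so every history in $D(T^\star)$ has weight exactly $K$.

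Finally I would check the two inclusions defining \emph{exactly the minimum weight histories}. Every history in $(V',E')$ has weight $K$ and is a history in $(V,E)$, hence is a minimum weight history of $(V,E)$; conversely, every minimum weight history of $(V,E)$ lies in $T^\star$ and therefore appears as a trim of the history sDAG constructed from $T^\star$, i.e.\ is a history in $(V',E')$. The main obstacle is the middle paragraph: the whole argument hinges on invoking \autoref{thm:updown} in its contrapositive direction, using the global minimality of $K$ over $(V,E)$ to eliminate the ``weight $<K$'' alternative and thereby forbid the subhistory-swapped extra histories of $D(T^\star)$ from exceeding $K$. The bookkeeping around existence of the minimum and transitivity of trims is routine by comparison.
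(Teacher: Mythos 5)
Your proposal is correct and follows essentially the same route as the paper's proof: take $T^\star$ to be the minimum-weight histories, build the history sDAG $(V',E')$ from them, use the fact that every history in $(V',E')$ is also a history in $(V,E)$ (so none can have weight below $K$), and then invoke \autoref{thm:updown} in contrapositive form to exclude weights above $K$. Your added justification for the existence of the minimum and your trim-transitivity argument (in place of the paper's $D(T^\star)\subseteq D(T)=T$ step) are minor, equivalent variations rather than a different approach.
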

\begin{proof}
    Let $T $ be the collection of histories expressed by $(V, E) $, so that $D(T) = T $.
    Let $K $ be the minimum weight achieved by $g_f $ on $T $, and let $T' \subset T $ be the set of minimum weight histories:
    \begin{equation*}
        T' = \left\{t\in T \st g_f(t) = K \right\}
    \end{equation*}
    We know that $T'\subseteq D(T') $, so we need only show that $T'\supseteq D(T') $.
    Since $T' \subseteq T $, we know that $D(T') \subseteq D(T) = T$, and that since $T $ is the collection of histories in $(V, E) $, there exists no history $t\in D(T') $ with $g_f(t) < K $.
    Therefore, by \autoref{thm:updown}, there exists no $t\in D(T') $ with $g_f(t) > K $.
    Since $T' $ contains all the histories in $T $ with weight $K $, we therefore know that $T' = D(T') $.
    Let $(V', E') $ be the history sDAG constructed from $T' $.
    Since $T' = D(T') $, the history sDAG $(V', E') $ contains exactly the histories in $T' $.
    Also, because $(V', E') $ is a graph union of histories in $(V, E) $, we know that $V'\subset V $ and $E'\subset E $.
    Therefore $(V', E') $ is the trim of $(V, E) $ that we seek.
\end{proof}

We shall take a small excursion now, in which we return to the setting of maximum parsimony which motivates these methods.
It makes little sense to minimize parsimony on the set of all histories with labels in an ambient sequence set $Y $.
Rather, one attempts to minimize parsimony subject to the constraint that history leaves are labeled by some fixed set of observed nucleotide sequences.

\begin{defn}
    Let $T $ be a set of histories with labels in $Y $.
    We say that histories in $T $ have a \textbf{fixed set} $X\subset Y $ of leaf labels if $L(t) = X $ for all $t\in T $.

    Given an edge-weight function $f $ and a set $X\subset Y $, we say that a history $t $ with $L(t) = X $ is \textbf{minimum weight relative to all histories on the fixed set of leaf labels} $X $ if $g_f(t) \leq g_f(t') $ for all histories $t' $ with $L(t') = X$.
\end{defn}

In the general language of this section, a history $t $ with nucleotide sequence labels is maximally parsimonious if it is minimum weight relative to all histories on the fixed leaf label set $L(t) $, with Hamming distance as the edge-weight function.

The following observation guarantees that \autoref{thm:updown} and \autoref{cor:trimDAG} are useful in this setting.

\begin{obs}\label{obs:fixedleaves}
    Let $T $ be a set of histories with a fixed set of leaf labels $X\subset Y $.
    Then for any $t\in D(T) $,  $L(t) = X $.
\end{obs}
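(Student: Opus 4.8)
The plan is to reduce everything to the clade union of root nodes, using \autoref{lemma:reachableleaves}. First I would recall that in any history $t$, the unique child of $\rho$ — the root node $v_r$ — reaches every other node: this is because a history is a tree with a single edge out of $\rho$ (\autoref{lemma:historyaltdef}), so deleting $\rho$ leaves $v_r$ as the root of the remaining tree. In particular $v_r$ reaches every leaf, so the set of leaf labels of $t$ equals the set of leaf labels reachable from $v_r$, which by \autoref{lemma:reachableleaves} is exactly $\cu(v_r)$. Thus $L(t) = \cu(v_r)$. Applying this to each $t\in T$, whose leaf set is $X$ by hypothesis, gives $\cu(v_r) = X$ for the root node $v_r$ of every history in $T$.

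Next I would examine the edges descending from $\rho$ in the history sDAG constructed from $T$. By \autoref{defn:historyDAGconstruct} this sDAG is the graph union of the histories in $T$, so its edges out of $\rho$ are precisely the union, over $t\in T$, of the single root edge of $t$. Hence every child of $\rho$ in the constructed sDAG is the root node of some $t\in T$ and therefore has clade union $X$ by the previous step. The key point to flag here is that $\cu(v)$ depends only on the label and subpartition that together constitute the identity of $v$, not on the ambient graph; so this value is the same whether computed in the constructed sDAG or in any trim of it.

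Finally I would take an arbitrary $t'\in D(T)$. By \autoref{defn:trim} it is a trim of the constructed sDAG that is itself a history, so its edge set is a subset of the sDAG's edges and it has a unique root node $v_r'$, which must be one of the children of $\rho$ in the constructed sDAG. By the preceding paragraph $\cu(v_r') = X$. Applying \autoref{lemma:reachableleaves} inside $t'$ (again using that $t'$ is a tree, so its root reaches all of its leaves) yields $L(t') = \cu(v_r') = X$, which is the claim.

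The argument is essentially bookkeeping and I do not expect a genuine obstacle, but the two places where it could silently go wrong, and which I would state explicitly, are: (i) that the children of $\rho$ in a graph union are exactly the union of the children of $\rho$ in the constituent histories, and (ii) that $\cu$ is an intrinsic property of a node's (label, subpartition) data, so that the value computed among the histories of $T$ agrees with the value computed inside the sub-history $t'$. Both are immediate from the definitions, and together they let \autoref{lemma:reachableleaves} carry the whole proof.
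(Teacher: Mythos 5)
Your proof is correct, but it takes a genuinely different route from the paper's. The paper does not spell out a standalone argument at all: it remarks that the observation ``can be argued precisely'' using the appendix lemmas supporting \autoref{thm:updown} --- namely \autoref{lemma:whataredaghistories}, which writes every history in $D(T)$ as a finite sequence of subhistory swaps $t_1 \tl s_1 \tl \cdots \tl s_n$ with $t_1 \in T$, together with \autoref{lemma:tlwelldefined}, which says the swap operator preserves the leaf labels of its left argument --- so that $L(t) = L(t_1) = X$ follows by induction on the number of swaps. Your argument bypasses the swap machinery entirely: you localize the whole question at the root, observing that (i) $L(t) = \cu(v_r)$ for the root node $v_r$ of any history, by \autoref{lemma:historyaltdef} and \autoref{lemma:reachableleaves}; (ii) the children of $\rho$ in the graph union are exactly the root nodes of the histories in $T$, each with clade union $X$; and (iii) any $t' \in D(T)$, being a trim, has its root among those children, with $\cu$ intrinsic to the node's $(\text{label}, \text{subpartition})$ data. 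What your approach buys is a short, self-contained, and elementary proof that depends only on \autoref{lemma:reachableleaves} and bookkeeping about unions and trims; what the paper's route buys is reuse --- the swap lemmas are already needed for \autoref{thm:updown}, so the observation falls out as a corollary of machinery built for a harder theorem, and it exhibits leaf-label preservation as a special case of the general invariance of leaf sets under subhistory swaps. Both of the potential pitfalls you flag are indeed immediate from the definitions, and your handling of them is exactly right.
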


The truth of this observation can be argued precisely using the lemmas in \hyperref[proof:updownthm]{Appendix~\ref*{app:proofs}} supporting the proof of \autoref{thm:updown}, but is apparent from \autoref{defn:historyDAG} and \autoref{fig:dag_construction}.

This means that given a set $T $ of maximally parsimonious histories on a fixed set of leaf labels $X $, $D(T) $ must only contain histories with leaves labeled by $X $.
By \autoref{thm:updown} then, $D(T) $ must only contain histories which are maximally parsimonious on leaf labels $X $.

If $T $ contains histories on a fixed label set $X $ which are not necessarily maximally parsimonious, \autoref{obs:fixedleaves} ensures that trimming the history sDAG constructed from $T $ as in \autoref{cor:trimDAG} will result in a new history sDAG which expresses histories with the same fixed set of leaf labels $X $.

\subsection{Trimming the history sDAG}
Here we describe a straightforward method for trimming a history sDAG to represent only its minimum-weight histories.
\autoref{cor:trimDAG} guarantees that merging only the minimum-weight histories in a history sDAG will result in a new history sDAG containing only those histories, but provides no efficient method for producing this trimmed history sDAG.
The method described here involves removing all edges which point to suboptimal subhistories, and can be realized in two traversals of the history sDAG.

\begin{defn}
Let $(V, E) $ be a history sDAG on labels $Y $, and let $f $ be an edge-weight function $f: E\to W $ for $W $ a weight set which is clade-ordered with respect to $f $ and $(V, E) $.

The minimum weight of an augmented subhistory beneath a node $v=(\ell, U) \in V $ and a clade $C\in U $ is given by $M_f(v, C) $, defined as
\begin{equation*}
    M_f(v, C) = \min \left\{g_f(s^v) \st v_c \in \tgts(v, C), s \in \shbelow(v_c) \right\}.
\end{equation*}

Also let $M_f(v) $ report the minimum weight of any subhistory rooted at the node $v=(\ell, U) $, and for any leaf node $v'\in V $, let $M_f(v') $ be the additive identity of $W $.
\end{defn}

Notice that because $W $ is clade-ordered, $M_f(v) $ can be computed as
\begin{equation}\label{eqn:minsubhistorybelownode}
    M_f(v) = \sum_{c\in U} M_f(v, C).
\end{equation}

That is, the minimum weight of a subhistory beneath a node is given by the sum over clades of the minimum weight achieved by an augmented subhistory below each clade.

Notice that the clade-ordering on $W $ also allows us to compute $M_f(v, C) $ more easily, as
\begin{equation*}
    M_f(v, C) = \min \left\{M_f(v_c) + f(v, v_c) \st v_c \in \tgts(v, C)\right\}.
\end{equation*}

With \autoref{eqn:minsubhistorybelownode}, this defines an efficient dynamic program for calculating the minimum weight of all histories in a history sDAG with respect to $f $, with

\begin{equation*}
    M_f(\rho) = \min\left\{M_f(v_c) + f(\rho, v_c) \st v_c \in \tgts(\rho) \right\}.
\end{equation*}

$M_f $ will be used to define the trimmed history sDAG:

\begin{defn}
    Let $(V, E)  $ be a history sDAG and $f:V\to W $ be an edge weight function, with $W $ clade-ordered.
    The \textbf{minimum weight trim} of $(V, E) $ with respect to $f $ is defined to be $(\underline{V}, \underline{E}) $, where
    \begin{align*}
        \underline E' &= \left\{(v, v_c)\in E \st M_f(v_c) + f(v, v_c) = M_f(v, \cu(v_c)) \right\},\\
        \underline V &= \left\{v\in V\st v \text{ reachable from } \rho \text{ via a path in } \underline E' \right\}, \text{ and}\\
        \underline E &= \left\{(v, v_c)\in \underline E' \st v, v_c \in \underline V \right\}.
    \end{align*}
\end{defn}
Notice that $\underline E'$ consists of edges from $E $ which point to optimal subhistories, $V' $ contains nodes reachable from $\rho $ via those edges, and $\underline E $ removes edges from $\underline E' $ which connect any nodes not in $\underline V $.

The following lemma verifies that this structure is what its name suggests.
\begin{restatable}[]{lemma}{trimistrim}\label{lemma:trimistrim}
    Let $(V, E) $ be a history sDAG, and $f:E\to W $ be an edge-weight function, with $W $ a weight set which is clade-ordered with respect to $f $ and $(V, E) $.
    Let $(V', E') $ be the history sDAG constructed from minimum-weight histories in $(V, E) $, with respect to $f $, and let $(\underline V, \underline E) $ be the minimum weight trim of $(V, E) $ with respect to $f $.
    Then $(V', E') = (\underline V, \underline E) $.
\end{restatable}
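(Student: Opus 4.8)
The plan is to prove the two graphs are equal by establishing the stronger statement that they share both their edge sets and their node sets, i.e.\ $\underline E = E'$ and $\underline V = V'$. Recall that $(V', E')$ is by definition the graph union of all minimum-weight histories of $(V,E)$, so $E' = \bigcup_t E_t$ and $V' = \bigcup_t V_t$ with the union taken over minimum-weight histories $t$. The linchpin of the whole argument is a clean characterization of which histories are minimum weight: a history $t$ in $(V,E)$ is minimum weight if and only if every one of its edges lies in $\underline E'$, i.e.\ every edge $(v, v_c)$ of $t$ satisfies $M_f(v_c) + f(v,v_c) = M_f(v, \cu(v_c))$. I will call this Claim~A and prove it first; everything else then follows by routine set manipulation.

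To prove Claim~A I would argue by induction up the DAG, ordering nodes by the well-founded relation supplied by acyclicity (\autoref{lemma:acyclic}) together with the strict nesting of clade unions (\autoref{obs:backwardsinclusion}). For a node $v$ of $t$, let $s_v$ denote the subhistory of $t$ rooted at $v$; the induction hypothesis is that $g_f(s_v) = M_f(v)$ exactly when every edge of $s_v$ lies in $\underline E'$. The base case is a leaf, where both sides hold trivially. For the inductive step I decompose $g_f(s_v) = \sum_{C \in U} \left(f(v, v_C) + g_f(s_{v_C})\right)$ over the child clades $C$ of $v = (\ell, U)$ and compare termwise against $M_f(v) = \sum_{C\in U} M_f(v,C)$ with $M_f(v,C) = \min_{w \in \tgts(v,C)}\left(M_f(w) + f(v,w)\right)$. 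Because $W$ is clade-ordered --- the ordering being total and respecting addition on precisely the sets of subhistory and augmented-subhistory weights that appear here --- the sum attains its minimum if and only if each summand does, and a summand $f(v,v_C)+g_f(s_{v_C})$ equals $M_f(v,C)$ if and only if $g_f(s_{v_C}) = M_f(v_C)$, so that $(v,v_C)\in \underline E'$ and, by the inductive hypothesis, all edges below $v_C$ lie in $\underline E'$ as well. Applying the same termwise comparison one final time at $\rho$, whose single child edge realizes $M_f(\rho)$, yields Claim~A.

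Given Claim~A, the inclusions $E' \subseteq \underline E$ and $V' \subseteq \underline V$ are immediate: any minimum-weight history $t$ has all its edges in $\underline E'$ and all its nodes reachable from $\rho$ along those edges, so its nodes lie in $\underline V$ and its edges, joining $\underline V$-nodes, lie in $\underline E$; taking the union over all such $t$ gives the inclusions. For the reverse direction I first observe that every node-clade pair $(v, C)$ with $v \in \underline V$ has at least one descendant edge in $\underline E$: the minimum defining $M_f(v,C)$ is attained by some child $w \in \tgts(v, C)$, the edge $(v,w)$ then lies in $\underline E'$, and since $v$ is reachable from $\rho$ within $\underline E'$ so is $w$, placing $w \in \underline V$ and hence $(v,w) \in \underline E$. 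This lets me build, for any prescribed $e = (v,v_c) \in \underline E$, a minimum-weight history through $e$: follow a fixed $\underline E'$-path from $\rho$ down to $v$, include $e$, and at every node-clade pair not yet covered (the remaining clades along the path and every clade below $v_c$) select an $\underline E$ edge and recurse. By construction all edges of the resulting history lie in $\underline E'$, so by Claim~A it is minimum weight, whence $e \in E'$; running the construction through an arbitrary node of $\underline V$ likewise gives $\underline V \subseteq V'$.

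I expect Claim~A to be the crux, and in particular the careful use of the clade-ordering hypothesis: the argument needs both that the relevant weight sets are totally ordered (so that ``the minimum is attained'' is meaningful and the termwise comparisons are legitimate) and that the ordering respects addition (so that global optimality of $g_f(s_v)$ is equivalent to local optimality clade-by-clade and edge-by-edge). A minor bookkeeping point to handle is the status of the UA node $\rho$ in the definitions of $M_f$ and $\underline E'$: since $\rho$ carries no subpartition, its outgoing edges must be treated through the direct definition $M_f(\rho) = \min_{v_c \in \tgts(\rho)}\left(M_f(v_c) + f(\rho, v_c)\right)$, reading the membership condition $(\rho, v_c) \in \underline E'$ as $M_f(v_c) + f(\rho, v_c) = M_f(\rho)$, and I would flag this convention explicitly. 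The verification that the object constructed in the last step is genuinely a history --- a tree covering each node-clade pair exactly once, with a single edge from $\rho$ --- is routine given acyclicity and the disjointness of child clades, and I would not belabor it.
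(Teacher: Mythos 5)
Your proof is correct, and it reaches the same pivot as the paper --- a history in $(V,E)$ is minimum weight exactly when every one of its edges satisfies the local optimality condition defining $\underline E'$ --- but by a genuinely different route. The paper proves the forward half of that characterization with the subhistory-swap machinery: if an edge of a minimum-weight history $t$ failed the condition, there would be a subhistory $s$ with $g_f(t \tl s) < g_f(t)$, and $t \tl s$ is again a history in $(V,E)$ by \autoref{lemma:closedundertl}; it proves the reverse half by contradiction, examining the edge closest to $\rho$ of a purportedly heavy trim history that lies outside $(V',E')$; and it converts equality of the sets of contained histories into equality of graphs by citing \autoref{lemma:dagisitshistories}. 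You instead prove the full biconditional (your Claim~A) by induction up the DAG, using clade-ordering to argue that the sum $\sum_{C\in U} M_f(v,C)$ is attained termwise --- a dynamic-programming argument with no swaps at all --- and then establish $\underline E \subseteq E'$ constructively by threading a minimum-weight history through any prescribed trim edge, which inlines the content of \autoref{lemma:nodesareroots} and \autoref{lemma:dagisitshistories} rather than citing them. The trade-offs: the paper's proof is shorter because it reuses machinery already built for \autoref{thm:updown}, while yours is self-contained, simultaneously justifies that the dynamic-program value $M_f(\rho)$ equals the minimum history weight, and explicitly pins down the convention for edges leaving $\rho$ (a point the paper's definition of $\underline E'$ genuinely glosses over, since $M_f(\rho, \cu(v_c))$ is otherwise undefined). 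One ordering caution for the final write-up: the step identifying ``minimum weight'' with ``$g_f(t) = M_f(\rho)$'' requires that $M_f(\rho)$ be attained by some history, which is supplied by your greedy construction in the last paragraph; state that construction (or at least the existence of a history all of whose edges lie in $\underline E'$) before, or independently of, Claim~A so that the characterization does not appear circular.
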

The proof for this lemma is given in \hyperref[proof:trimistrim]{Appendix~\ref*{app:proofs}}.

\subsection{Collapsing histories}
The space of possible minimum weight histories on a fixed leaf label set is in general very large.
However, some diversity in this set is a result of unnecessary history edges between nodes with the same label.
Unless these edges target a leaf node, they are unnecessary, and their existence cannot be supported by the observed data represented in leaf labels.

Just as polytomies can be resolved as many possible bifurcating structures, collapsing history edges which connect nodes with identical labels reduces the number of possible histories on a fixed set of leaves, without restricting the number of informative evolutionary scenarios that can be expressed by those histories (\autoref{fig:collapse_motivation}).

\begin{figure}[h]
    \centering
    \includegraphics{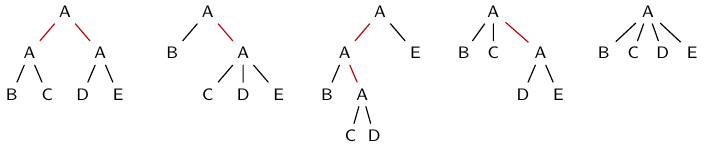}
    \caption{\
    By collapsing red edges between nodes with identical labels, all five internally labeled tree structures shown here are equivalent
}%
\label{fig:collapse_motivation}
\end{figure}

Motivated by this observation, we will enforce in practice that adjacent nodes in a history not have the same label, unless one of them is a leaf node.
This choice is possible because we allow multifurcations in histories, which leads to the definition of ``collapsing'' below.
On the other hand, sampled ancestors in a history can be witnessed as an internal node with the observed label $\ell \in Y $, adjacent to the leaf node labeled $\ell $.
Since the edge between these two nodes targets a leaf, such a structure is allowed in a history.

A history containing internal edges whose parent and child nodes carry the same label may be modified to remove such edges.
Doing so will add multifurcations to the history, as shown in~\autoref{fig:collapse_in_history}.
The following definition allows us to mark edges as collapsible arbitrarily, not just when their parent and child node labels match.
This generality is useful in precisely stating~\autoref{lemma:collapseone}.

\begin{defn}
    Let $(V, E) $ be a history or history sDAG.

    Given a binary-valued function $b: E\to \left\{0,1 \right\} $, an edge $e = \left((\ell, U), (\ell', U') \right) \in E $ is \textbf{$b $-collapsible} if $b(e) = 1 $ and $U' \neq \emptyset $ (so the target node is not a leaf node).
    An edge is \textbf{$b $-collapsed} if it is not $b $-collapsible.
    $(V, E) $ is $b $-collapsed if each edge in $E $ is $b $-collapsed.

    For the purpose of this paper we are interested in collapsing edges whose parent and child nodes have the same label.
    In this situation $b $ should return $1 $ on edges whose parent and child nodes have the same label, and we will use the terms \textbf{label-collapsible} and \textbf{label-collapsed} instead of $b $-collapsible and $b $-collapsed.
\end{defn}

A history which is not label-collapsed can be converted to a label-collapsed history by merging adjacent nodes with the same label, but this process requires also modifying subpartitions (\autoref{fig:collapse_in_history}).

\begin{figure}[H]
    \centering
    \includegraphics{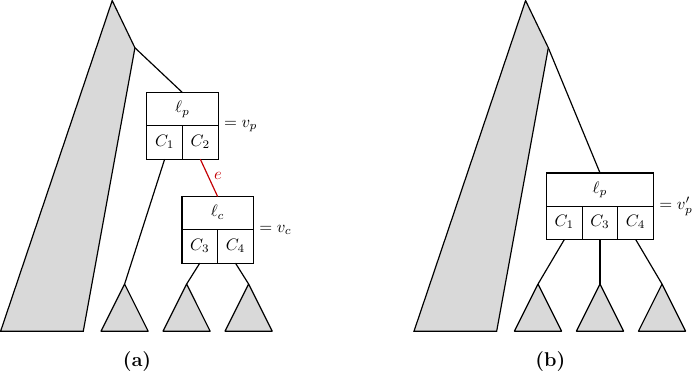}
\caption{\
    (a) shows part of a history, with an edge $e $ to be collapsed.
    Collapsing $e $ requires replacing the child clade $C_2 $ of $v_p $ with the child clades of $v_c $, to create the new combined node $v_p' $ (b)
}%
\label{fig:collapse_in_history}
\end{figure}
To formalize this, we first explain what it means to collapse an edge in a history.

\begin{defn}
    Let $t = (V_t, E_t) $ be a history with labels $Y $.
    Let $(V, E) $ be the complete history sDAG on labels $Y $.
    Also let $e = \left((\ell_p, U_p), (\ell_c, U_c) \right)\in E_t $ be an edge in $t $, so that $(\ell_c, U_c) $ is not a leaf node.
    Let $C = \cu(\ell_c, U_c) $ be the clade in $U_p $ from which the edge $e $ descends.

    The history $t_e = (V_e, E_e) $, formed by collapsing $e $ in $t $, is defined as follows:

    Define $q:V_t \to V $ via
    \begin{equation*}
        q(v) = \begin{cases}
            \left(\ell_p, U_p \cup U_c \setminus \left\{C \right\} \right) & v = (\ell_p, U_p) \text{ or } v=(\ell_c, U_c)\\
            v & \text{otherwise}
        \end{cases}
    \end{equation*}

    Then $V_e = q(V_t) $, and $E_e = \left\{(q(v), q(v')) \st (v, v') \in E_t\setminus \left\{e \right\} \right\} $.
\end{defn}
Notice that after collapsing an edge, the resulting structure remains a valid history, because for any clade $C\in U_c $, and for any node $v_c $ which is a child of the node-clade pair $\left((\ell_c, U_c), C \right) $, the node $v_c $ becomes a child of the node-clade pair $\left(q(U_c, \ell_c), C \right) $.
Also notice that $q(\ell_c, U_c) = q(\ell_p, U_p) $ inherits the unique parent of $(\ell_p, U_p) $ in $t $.

The new history has one edge fewer than the original.

We can convert a history $t $ to a \emph{label-collapsed history} by iteratively collapsing each edge in $t $ whose parent and child nodes have the same label.

\begin{lemma}
    A history $t_0 = (V_0, E_0)$ determines a unique label-collapsed history $t_c $, which is the result of a finite sequence of edge collapses.

    That is, there exists a finite sequence $t_0, t_1 = (V_1, E_1), \ldots, t_n = (V_n, E_n) $ for which
    \begin{itemize}
        \item $t_i $ is the result of collapsing some edge $e_i = \left((\ell_i, U_i), (\ell_i', U_i') \right)$ in $t_{i-1} $ for which $\ell_i = \ell_i' $ and $U_i'\neq \emptyset $, and
        \item $t_n $ is label-collapsed
    \end{itemize}
    Furthermore, for any such sequence of histories, $t_n = t_c $.
\end{lemma}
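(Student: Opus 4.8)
The plan is to view a single edge collapse as a rewriting step on histories, and to prove the two claims in the lemma separately: \emph{termination}, which yields the existence of a finite collapsing sequence ending in a label-collapsed history, and \emph{confluence}, which yields uniqueness of the terminal history regardless of the order in which edges are collapsed. Writing $t \to t'$ to mean that $t'$ is obtained from $t$ by collapsing a single label-collapsible edge, the goal is to show that $\to$ is terminating and that its normal form is unique.

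For termination and existence, I would use the already-established fact that collapsing a label-collapsible edge produces a history with exactly one fewer edge. Since $t_0 = (V_0, E_0)$ has finitely many edges, any chain of collapses has length at most $|E_0|$ and must stop. A chain stops exactly when no label-collapsible edge remains, i.e.\ when its terminal history is label-collapsed. Greedily collapsing some label-collapsible edge whenever one exists therefore produces a finite sequence $t_0, t_1, \ldots, t_n$ of precisely the stated form, with $t_n$ label-collapsed, establishing the first part.

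The crux is uniqueness, which I would obtain by a confluence argument. Because $\to$ is terminating by the edge-count bound, the standard confluence principle (Newman's lemma) reduces the problem to \emph{local confluence}: whenever $t \to t'$ by collapsing an edge $e_1$ and $t \to t''$ by collapsing a distinct edge $e_2$, some common history is reachable from both $t'$ and $t''$. I would prove this by case analysis on the relative position of $e_1$ and $e_2$ in $t$, invoking \autoref{lemma:historyaltdef} that $t$ is a tree, so that every node other than $\rho$ has a unique parent and two distinct edges can share at most their parent node or stand in a parent--child relation, never sharing a child. The cases are: (a) $e_1$ and $e_2$ are vertex-disjoint, so the quotient map collapsing one fixes the endpoints of the other, leaving it label-collapsible, and the two collapses commute to a common history; (b) $e_1 = (v_p, v_{c1})$ and $e_2 = (v_p, v_{c2})$ share only the parent $v_p$, where $\cu(v_{c1}) \neq \cu(v_{c2})$ are disjoint clades of $U_p$, so collapsing one leaves the other's clade, and hence its residual edge, intact and label-collapsible, and again the collapses commute; and (c) $e_1 = (v_p, v_c)$ and $e_2 = (v_c, v_{cc})$ are consecutive, where label-collapsibility of both forces $v_p$, $v_c$, $v_{cc}$ to share a common label, and collapsing either edge leaves the residual of the other label-collapsible, both orders merging all three nodes into one whose subpartition is $U_p \cup U_c \cup U_{cc}$ with the two intervening clade unions removed. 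In every case the join has length at most one on each side, so local confluence holds; with termination this forces a unique normal form $t_c$, whence $t_n = t_c$ for every maximal collapsing sequence.

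I expect the main obstacle to be the bookkeeping in case (c), and more generally verifying that the residual of the non-collapsed edge under the quotient map $q$ remains a well-defined label-collapsible edge whose target is not a leaf, and that the two merge orders yield literally the same node (identical label and identical subpartition) rather than merely isomorphic ones. Once these residuals are tracked carefully, order-independence reduces to commutativity and associativity of set union and set difference on the subpartitions, which is routine.
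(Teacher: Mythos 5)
Your proposal is correct, but it takes a genuinely different route from the paper. The paper's proof is two sentences long: it transfers the problem through the bijection between histories and rooted, internally labeled, multifurcating trees (\autoref{lemma:correspondence_bijection}) and then appeals to the fact that contracting edges between same-labeled internal nodes of a labeled tree is a well-defined operation whose result does not depend on the order of contraction. You instead never leave the history formalism: termination follows from the edge-count bound, and uniqueness follows from Newman's lemma once local confluence is checked by case analysis on the relative position of the two collapsible edges. Your case split is exhaustive precisely because a history is a tree (\autoref{lemma:historyaltdef}), so two distinct edges are vertex-disjoint, share only a parent, or are consecutive, and never share a child; and the bookkeeping you flag as the main obstacle does go through, since clade unions are invariant under collapses (leaf sets below nodes do not change), so the residual of the untouched edge remains label-collapsible, and both orders in the consecutive case produce the literally identical node $\left(\ell_p,\ \left(U_p\cup U_c\cup U_{cc}\right)\setminus\left\{\cu(v_c),\cu(v_{cc})\right\}\right)$, an identity of sets that uses only $\cu(v_{cc})\notin U_p$ and $\cu(v_c)\notin U_c\cup U_{cc}$. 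The trade-off: the paper's argument is short but delegates the real content, order-independence, to a classical but unproven fact about labeled trees plus the appendix correspondence; yours is longer but self-contained, works directly with the $(\text{label},\text{subpartition})$ node identities that are specific to histories, and makes explicit exactly the verification the paper's appeal glosses over.
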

\begin{proof}
    Using the correspondence between histories and rooted, internally labeled, multifurcating trees established in \hyperref[lemma:correspondence_bijection]{Appendix~\ref*{app:proofs}}, we can use the fact that collapsing edges between internal nodes with the same label is a well-defined map on such trees.
    Since the order of edge collapse has no effect on the final tree, neither does the order of edge collapse on the final history in the sequence named above.
\end{proof}

Label-collapsing histories individually is straightforward, but collapsing a large collection of histories could be done more efficiently by label-collapsing their history sDAG\@.

Label-collapsing histories from within a history sDAG is not as straightforward, because some edges descending from a node-clade pair may need to be collapsed, while others may not.
This means that an algorithm to collapse the history sDAG must occasionally add new nodes to the DAG (\autoref{fig:collapse_in_DAG}).

\begin{figure}[H]
    \centering
    \includegraphics{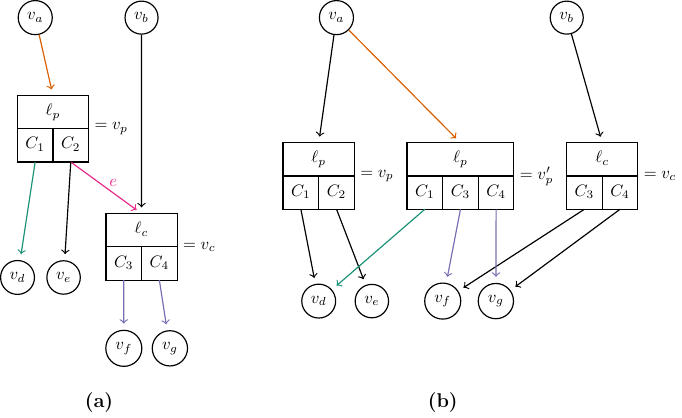}
\caption{\
    Analogous to \autoref{fig:collapse_in_history}, but within a history sDAG, (a) shows part of a history sDAG, with an edge $e $ to be collapsed.
    Collapsing $e $ requires adding the node $v_p' $ (b).
    In this example, both $v_p $ and $v_c $ remain in the history sDAG, because even without $e $ they each have a parent edge, as well as one child edge descending from each child clade.
    Edges in (a) are colored to match with the corresponding new edges in (b), and with the annotations in \autoref{eqn:collapseone}
}%
\label{fig:collapse_in_DAG}
\end{figure}

In order to describe the behavior of collapsing in the history sDAG, we require the following definition.

\begin{defn}\label{defn:collapsibletreecover}
    Given a history sDAG $(V, E)$, we say that a collection of histories $T$ is an edge cover of $(V, E)$ if for every edge $e\in E$, there exists a history $t\in T$ such that $e$ is contained in $t$.

    Further, a collection of histories $T $ is a \textbf{$b $-collapsible edge cover} of $(V, E)$ if
    for every $b $-collapsible edge $e\in E$ and every subhistory $s$ containing $e$, there is a history $t\in T$ that contains $s$.
\end{defn}

The following lemma describes what it means to collapse a single edge in a history sDAG.

\begin{restatable}[]{lemma}{collapseone}\label{lemma:collapseone}
Let $(V, E) $ be a history sDAG with label set $Y $.

Also let $(v_p = (\ell_p, U_p), v_c = (\ell_c, U_c))\in E $ be an internal edge.
That is, $U_c\neq \emptyset $ (so $v_c $ isn't a leaf node).

Define a binary function $b: E\to \left\{0, 1 \right\} $ which is constant at $0 $, except that $b(v_p, v_c)=1 $, and let $T $ be any $b $-collapsible edge cover of $(V, E) $.

Let $v_p' = (\ell_p, U_p\cup U_c \setminus \cu(v_c)) $ be the ``new parent node'', and define:
\begin{align}\label{eqn:collapseone}
    E^+ =\ & E & \\
         & \cup \left\{(v, v_p') \st (v, v_p) \in E \right\} & \text{\color{brewer-orange}``grandparents''}\nonumber\\
         & \cup \left\{(v_p', v) \st (v_p, v) \in E, \cu(v) \neq \cu(v_c) \right\} &\text{\color{brewer-green}``children''}\nonumber\\
         & \cup \left\{(v_p', v) \st (v_c, v) \in E \right\} &\text{\color{brewer-purple}``grandchildren''}\nonumber\\
         & \setminus \left\{(v_p, v_c) \right\} &\text{\color{brewer-pink}``collapsed edge''}\nonumber
\end{align}

Let $R = \emptyset$ if there exists an edge $(v_p, v)\in E^+ $ with $\cu(v) = \cu(v_c) $.
Otherwise, let $R $ be the set of parent edges of $v_p $, of the form $(v', v_p) \in E^+ $.

Then, let $E^- = E^+\setminus R $.
Finally, define
\begin{equation*}
    E' = \left\{ (v_1, v_2) \st (v_1, v_2) \in E^-,\ v_1 \text{ reachable from $\rho $ via edges in } E^- \right\}
\end{equation*}
and
\begin{equation*}
    V' = \left\{v_1, v_2 \st (v_1, v_2) \in E' \right\}.
\end{equation*}

    \textbf{Claim:} $(V', E') $ is the history sDAG constructed from $T' $, the set of histories which result by collapsing the edge $(v_p, v_c) $ in each history in $T $ in which it appears.
\end{restatable}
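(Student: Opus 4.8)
The plan is to prove the claimed equality at the level of edge sets. By \autoref{defn:historyDAGconstruct}, the history sDAG constructed from $T'$ is the graph union $\left(\bigcup_{t'\in T'}V_{t'},\ \bigcup_{t'\in T'}E_{t'}\right)$, so since a history sDAG is determined by its edges together with the fixed node $\rho$ (its node set is exactly $\rho$ plus the endpoints of its edges), it suffices to show $E' = \widetilde E$, where $\widetilde E := \bigcup_{t\in T} E_{\mathrm{collapse}(t)}$ and $\mathrm{collapse}(t)$ denotes the history obtained by collapsing $(v_p,v_c)$ in $t$ when that edge appears, and $t$ itself otherwise. Before comparing, I would record the effect of a single collapse explicitly: applying the node-merge map $q$ (which sends both $v_p$ and $v_c$ to $v_p'$ and fixes every other node) rewrites the edges of a history containing $(v_p,v_c)$ into exactly the four families appearing in \autoref{eqn:collapseone} --- untouched original edges, the \emph{grandparent} edges $(v,v_p')$ coming from parents of $v_p$, the \emph{children} edges $(v_p',v)$ coming from the non-$\cu(v_c)$ children of $v_p$, and the \emph{grandchildren} edges $(v_p',v)$ coming from the children of $v_c$ --- while deleting $(v_p,v_c)$. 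I would also note that each $\mathrm{collapse}(t)$ is again a history and that their union is a history sDAG by \autoref{lemma:historyDAGunion}.

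For the forward inclusion $\widetilde E\subseteq E'$, I would take $t\in T$ and a collapsed edge $e'\in E_{\mathrm{collapse}(t)}$ and verify $e'\in E'$. If $(v_p,v_c)\notin E_t$ then $\mathrm{collapse}(t)=t$ and $e'\in E\setminus\{(v_p,v_c)\}\subseteq E^{+}$; here I would observe that when $R\neq\emptyset$ the node $v_p$ has $v_c$ as its only $\cu(v_c)$-child in the sDAG, so a history omitting $(v_p,v_c)$ cannot contain $v_p$ at all, and hence $e'$ avoids the removed parent edges $R$. If instead $(v_p,v_c)\in E_t$, the description above places $e'$ in one of the four families, so $e'\in E^{+}$; moreover $v_p,v_c\notin V_{\mathrm{collapse}(t)}$ (both collapse to the distinct node $v_p'$), so no edge of $\mathrm{collapse}(t)$ lies in $R$. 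In both cases $e'\in E^{-}$, and since $\mathrm{collapse}(t)$ is a history rooted at $\rho$ all of its nodes, hence the parent endpoint of $e'$, are reachable from $\rho$ through edges of $E^{-}$, giving $e'\in E'$.

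The reverse inclusion $E'\subseteq\widetilde E$ is the heart of the argument and is where the $b$-collapsible edge cover hypothesis is essential. Given $(v_1,v_2)\in E'$ I would argue according to which family it belongs to. For an edge incident to the new node $v_p'$ I would build a subhistory passing through $(v_p,v_c)$ that exhibits the required adjacency --- for instance, for a grandchild edge $(v_p',v)$, a subhistory rooted at $v_p$ using the edges $(v_p,v_c)$ and $(v_c,v)$, with the remaining child clades of $v_p$ and $v_c$ completed to full subhistories via \autoref{lemma:nodesareroots} --- and then invoke the $b$-collapsible edge cover to obtain a history $t\in T$ containing this subhistory; collapsing $t$ yields the edge $(v_p',v)$, placing it in $\widetilde E$, and similarly for grandparent and children edges. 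For a surviving original edge I would use the plain edge-cover property to find a history containing it and argue that the history can be taken so that the relevant incident node sits under a parent that is not collapsed, leaving the edge intact after collapse. The subtle case is an edge below $v_c$ that survives in $E'$ only because $v_c$ retains a parent other than $v_p$, where realizing the edge unchanged requires a $T$-history in which $v_c$ appears under that alternate parent carrying the prescribed child.

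I expect this surviving-original-edge case to be the main obstacle: the definition of $E'$ is purely structural (membership in $E^{+}$, survival of the $R$-deletion, and reachability in $E^{-}$), whereas membership in $\widetilde E$ demands an \emph{actual} collapsed history drawn from $T$ that realizes the edge, so the whole argument hinges on the cover hypothesis being strong enough to witness every structurally surviving edge --- in particular on exploiting the $b$-collapsible edge cover rather than a bare edge cover, and on carefully tracking which of $v_p$ and $v_c$ persist after the deletion governed by $R$. Along the way I would also confirm that $(V',E')$ satisfies \autoref{defn:historyDAG} (every node-clade pair retains a descendant edge and every node remains reachable from $\rho$), which follows once the two inclusions identify its edge set with that of the history sDAG built from $T'$.
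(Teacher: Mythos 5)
Your overall strategy is the same as the paper's: reduce the claim to an equality of edge sets between $E'$ and $E_! := \bigcup_{t'\in T'}E_{t'}$, and prove the two inclusions by a case analysis on how an edge meets $\left\{v_p, v_c, v_p' \right\}$. Your forward inclusion $\bigcup_{t\in T}E_{\mathrm{collapse}(t)}\subseteq E'$ is complete and correct, and is essentially the paper's argument for $E_!\subseteq E'$; in particular, your observation that when $R\neq\emptyset$ the edge $(v_p,v_c)$ is the unique edge descending from the node-clade pair $(v_p,\cu(v_c))$, so that any history omitting $(v_p,v_c)$ cannot contain $v_p$ at all and hence no collapsed history meets $R$, with reachability then witnessed inside the collapsed history itself, is exactly how the paper handles the $E^-\neq E^+$ case.

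The reverse inclusion is where your proposal stops, and the case you flagged is a genuine gap --- in fact it cannot be closed from the stated hypotheses. For an edge $(v_c,v_2)\in E'$ you need a history in $T$ that contains $(v_c,v_2)$ but \emph{not} $(v_p,v_c)$: if a history contains both, collapsing rewrites $(v_c,v_2)$ as $(v_p',v_2)$, so that history is no witness. The plain edge-cover property only yields \emph{some} history through $(v_c,v_2)$, and the $b$-collapsible condition of \autoref{defn:collapsibletreecover} constrains only subhistories that contain $(v_p,v_c)$, so neither produces the required witness. Concretely, suppose $v_c$ has two parents $v_p$ and $w$ in $(V,E)$, and one node-clade pair of $v_c$ has two distinct children $v_2$ and $v_2'$; take $T$ to consist of the two histories through $(v_p,v_c)$ (one for each choice of $v_2$ or $v_2'$) together with the single history through $(w,v_c)$ that uses $v_2'$. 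This $T$ is a $b$-collapsible edge cover, yet $(v_c,v_2)$ lies in $E'$ (it is in $E^+$, avoids $R$, and $v_c$ stays reachable through $w$) while no member of $T'$ contains it, so the claimed equality fails. The paper's own proof does not survive this case either: its case $v_1=v_c$ is dismissed with ``then again $(v_1,v_2)\in E$,'' overlooking that the covering history may contain $(v_p,v_c)$, and its case $v_2=v_p$ invokes the assertion that ``$T$ contains all the histories in $(V,E)$,'' which is strictly stronger than the stated cover hypothesis. So your diagnosis is exactly right; a complete proof requires strengthening the hypothesis (e.g.\ taking $T$ to be all histories in $(V,E)$, or requiring that every subhistory of $(V,E)$, not only those through the collapsible edge, be contained in some member of $T$), at which point your plan of constructing witnessing histories via \autoref{lemma:nodesareroots} and the cover property does go through.
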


Notice that if $e $ is the only edge descending from the node-clade pair $(v_p, \cu(v_c)) $, then collapsing $e $ requires removing the node $v_p $, and all edges involving it, from the history sDAG\@.
Also, the definition of $E^- $ will not leave any parent nodes of $v_p $ with too few descendant edges, because we added edges from all parent nodes of $v_p $ to $v_p' $.

The last step in the construction of $E' $ ensures that any nodes left without parents in the collapsing process will not appear in the label-collapsed history sDAG\@.

The proof for \autoref{lemma:collapseone} is given in \hyperref[proof:collapseone]{Appendix~\ref*{app:proofs}}.

Finally we arrive at the main result of this section, which provides a guarantee that all histories in a history sDAG can be collapsed by a finite sequence of edge collapses.
Although this lemma is stated for label-collapsing, the result can immediately be generalized to $b $-collapsing, with respect to an arbitrary binary function $b $.

\begin{restatable}[]{lemma}{collapsingterminates}
    \label{prop:collapsingterminates}%
    Let $(V_0, E_0) $ be a history sDAG, and define a sequence $(V_i, E_i)_{i\in \N} $ of history sDAGs, so that $(V_k, E_k) $ is generated by collapsing an edge
    \[e_{k-1} = \left((\ell_{k-1}, U_{k-1}), (\ell_{k-1}', U_{k-1}')\right)\]
    in $(V_{k-1}, E_{k-1}) $ with $\ell_{k-1} = \ell_{k-1}' $ and $U_{k-1}' \neq \emptyset $ if such an edge exists.
    If no such edge exists, then $(V_k, E_k) = (V_{k-1}, E_{k-1}) $.

    Then there exists $N\in \N$ such that $(V_N, E_N) $ is label-collapsed.
    Also, if $T_0 $ is a label-collapsible edge cover of $(V_0, E_0) $, and $T_0' $ is the set of histories resulting from label-collapsing each history in $T_0 $, then each history in $T_0' $ is in $(V_N, E_N) $.
\end{restatable}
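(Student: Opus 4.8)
The plan is to prove the two assertions — termination and containment — in turn, leaning on \autoref{lemma:collapseone} to control each individual collapse.

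For termination, I would first record a finiteness fact: collapsing an edge $e = (v_p, v_c)$ only ever produces the node $v_p' = (\ell_p, U_p \cup U_c \setminus \{\cu(v_c)\})$, whose subpartition is assembled from clades already present in $U_p$ and $U_c$. Hence no collapse introduces a clade that was not already a child clade somewhere in $(V_0, E_0)$, the label set is fixed, and so only finitely many nodes, edges, and therefore history sDAGs are reachable from $(V_0, E_0)$. It then suffices to exhibit a strictly decreasing potential. I would take
\[
\Phi(V, E) = \sum_{t \text{ a history in } (V,E)} d(t),
\]
where $d(t)$ is the number of label-collapsible edges of the history $t$. Using the correspondence with internally labeled trees, collapsing a single label-collapsible edge inside one history lowers its count of label-collapsible edges by exactly one, while a history not containing that edge is left unchanged. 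Applying \autoref{lemma:collapseone} with the edge cover taken to be the entire set of histories of $(V_{k-1}, E_{k-1})$ — a $b$-collapsible edge cover for $b$ the indicator of $e_{k-1}$, by \autoref{lemma:dagisitshistories} — the collapsed sDAG $(V_k, E_k)$ is generated by the collapsed images of those histories. The crucial point is that this collapse map is \emph{onto} the histories of $(V_k, E_k)$: any history of $(V_k, E_k)$ can be ``uncollapsed'' at $v_p'$ in the unique way dictated by the child clades (children whose clade union lies in $U_c$ reattach under $v_c$, the rest under $v_p$), and because the histories of $(V_{k-1}, E_{k-1})$ are exactly its trims and hence closed under recombination of subhistories at shared node-clade pairs (\autoref{lemma:dagisitshistories}), this uncollapsed structure is again a history of $(V_{k-1}, E_{k-1})$. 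Since no new histories appear and at least one history contained the collapsed edge, $\Phi$ strictly decreases at each genuine collapse. As $\Phi$ is a nonnegative integer, the process halts at some $(V_N, E_N)$; and a sDAG admits no further collapse exactly when it has no label-collapsible edge, i.e.\ is label-collapsed.

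For the containment statement I would argue by induction that the partially collapsed images of $T_0$ track the sDAG. Writing $T_0^{(k)}$ for the set obtained by collapsing $e_0, \dots, e_{k-1}$ in each history of $T_0$ wherever they occur, I claim the invariant that $T_0^{(k)}$ is a label-collapsible edge cover of $(V_k, E_k)$ (\autoref{defn:collapsibletreecover}). The base case is the hypothesis on $T_0$. For the step, \autoref{lemma:collapseone} — applied with the edge cover $T_0^{(k)}$, which is $b$-collapsible for $b$ marking $e_k$ — shows that $(V_{k+1}, E_{k+1})$ is precisely the sDAG built from $T_0^{(k+1)}$, so $T_0^{(k+1)}$ is an edge cover; the label-collapsible part is re-established by uncollapsing any subhistory through a label-collapsible edge of $(V_{k+1}, E_{k+1})$, observing that its uncollapsed preimage passes through a label-collapsible edge of $(V_k, E_k)$, invoking the invariant there, and collapsing the witnessing history. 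Once the process terminates at $(V_N, E_N)$, which is label-collapsed, each tracked history in $T_0^{(N)}$ is in $(V_N, E_N)$ (it is covered by it) and contains no label-collapsible edge; by uniqueness of the label-collapsed form of a history, it equals the full label-collapse of the corresponding member of $T_0$, so each history of $T_0'$ is in $(V_N, E_N)$, as required.

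The main obstacle I anticipate is the ``uncollapse'' bookkeeping that appears in both parts: verifying that reversing a single edge collapse at the added node $v_p'$ always yields a genuine history (respectively subhistory) of the previous sDAG, with every reattached subtree still drawn from the old node and edge sets. This is where acyclicity (\autoref{lemma:acyclic}) and the fact that only one new node is introduced per collapse do the real work, and it is closely tied to the argument already needed for \autoref{lemma:collapseone}; making it fully rigorous, rather than appealing to the tree picture, is the delicate step.
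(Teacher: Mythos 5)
Your proposal is correct in outline, and its core coincides with the paper's proof: both rest on \autoref{lemma:collapseone}, on the invariant that the partially collapsed images of the original cover remain label-collapsible edge covers of $(V_k, E_k)$, and on uniqueness of the label-collapsed form of a history. Your containment argument is exactly the paper's (your ``uncollapse the subhistory, invoke the invariant, re-collapse the witnessing history'' step is precisely the case analysis that occupies most of the paper's proof, and you correctly flag it as the delicate part). Where you genuinely diverge is the termination potential. The paper takes the potential to be the multiset of label-collapsible edges occurring in the histories of the tracked cover $T_k$: collapsing an edge in a history removes one collapsible edge and creates none, and the collapsed edge lies in some history of $T_k$ by the edge-cover invariant, so this multiset strictly shrinks with no further input. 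You instead sum collapsible-edge counts over \emph{all} histories of $(V_k, E_k)$, which forces you to prove a statement the paper never needs: that the collapse map is surjective from histories of $(V_{k-1}, E_{k-1})$ onto histories of $(V_k, E_k)$ (otherwise the history count, hence $\Phi$, could grow). That surjectivity claim is true, and your uncollapse construction does establish it --- though a rigorous version must check, e.g., that a history containing $v_p'$ can contain neither $v_p$ nor $v_c$, so reattachment causes no node collisions --- but it is extra bookkeeping of the same kind as the invariant, which you must establish anyway for containment; reusing the paper's multiset potential on your tracked covers $T_0^{(k)}$ would let you delete the surjectivity lemma entirely. Two small citation slips: the fact that the set of \emph{all} histories of a sDAG is a $b$-collapsible edge cover does not follow from the statement of \autoref{lemma:dagisitshistories} (which only gives that every edge lies in some history); it needs the stronger fact that every subhistory extends to a full history, which comes from the construction inside that lemma's proof (or from \autoref{lemma:nodesareroots} plus extension along a path from $\rho$). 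Likewise, the appeal to ``closure under recombination'' for uncollapsing is unnecessary: the uncollapsed subgraph has all its edges in $E_{k-1}$ and one edge per node-clade pair, so it is a trim of $(V_{k-1}, E_{k-1})$ directly by definition.
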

Although this lemma is written for label-collapsing, it extends to collapsing with respect to an arbitrary binary function $b $, defined on all possible edges in the complete history sDAG with the same leaf nodes and with labels chosen from the same ambient label set as $(V_0, E_0) $.

Note that the collapsing algorithm presented below produces the collapsed history sDAG $(V_N, E_N) $.

The proof for this proposition is given in \hyperref[proof:collapsingterminates]{Appendix~\ref*{app:proofs}}.

\autoref{prop:collapsingterminates} suggests an algorithm for collapsing a history sDAG, whose implementation is given below.

\begin{taocpalg}{A}{Collapsing a history sDAG}{%
        Modifies a history sDAG so that no edges connect two non-leaf nodes with the same label, and the histories represented in the resulting history sDAG are the same as the set of histories represented by the original history sDAG, with each label-collapsed.
    }

    \begin{enumerate}
        \item \textbf{Build queue.} $\mathcal{Q} := (v_i, v_i')_{i=1}^{|E|}$ is a queue of edges in $(v_i, v_i') \in E $ so that if $(v_i, v_i') $ and $(v_j, v_j') $ are such that $v_i' = v_j $, then $j > i $.
    That is, edges at the beginning of the queue are closer to the UA node of the history sDAG

        \item \textbf{Collapse loop head.} If $\mathcal{Q} $ is empty, \textbf{END}.
Otherwise, remove the first element $\left(v_p = (\ell_p, U_p), v_c = (\ell_c, U_c) \right) $ from $\mathcal{Q} $.

        \begin{enumerate}
        \item \textbf{Check collapsed.}
        If $\ell_p = \ell_c $ and $v_c $ is not a leaf node, and $(v_p, v_c)\in \mathcal{Q}$ , go to \textbf{new parent}.
        Otherwise, return to \textbf{collapse loop head}.

    \item \textbf{New parent.} Set $v_p' := \left(\ell_p, U_p \cup U_2 \setminus \left\{\bigcup\limits_{C\in U_c} C \right\} \right) $. Add $v_p'$ to $V$.

    \item \textbf{Add grandparents to newparent.} For any $(v, v_p)\in E $, add $(v, v_p') $ to $E $ and to beginning of $\mathcal{Q} $.

    \item \textbf{Add children to newparent.} For any $(v_p, v)\in E_t$, if clade union of $v $ is not the same as the clade union of $v_c $, add $(v_p', v) $ to $E $ and to beginning of $\mathcal{Q} $.

    \item \textbf{Add grandchildren to new parent.} For any $(v_c, v) \in E $, add $(v_p', v) $ to E and to the beginning of $\mathcal{Q} $.

    \item \textbf{Remove collapsed edge.} Remove $(v_p, v_c) $ from $E $.

    \item \textbf{Remove lonely parent.}
    If no edge $(v_p, v)\in E $ exists with clade unions of $v $ and $v_c $ equal, then do routine removenode $v_p $ from $(V, E)$.

    \item \textbf{Remove orphaned child.}
    If no edge $(v, v_c)\in E $ exists, do routine \textbf{removenode} $v_c $ from $(V, E) $.
    Return to \textbf{Collapse loop head}.

        \end{enumerate}
    \end{enumerate}

    The routine \textbf{removenode} $v$ from $(V, E) $ is the following:
    \begin{enumerate}
        \item \textbf{Remove node.} Remove $v $ from $V $.
        \item \textbf{Remove children loop head.} For each child node $v_c $ of $v $:
            \begin{enumerate}
                \item \textbf{Remove edge.} Remove the edge $(v, v_c) $ from $E $.
                \item \textbf{Clean child node.} If no edge $(v_p, v) $ exists in $E $, then do routine \textbf{removenode} $v_c$ from $(V, E) $.
            \end{enumerate}
        \item \textbf{Remove parent loop head.} For each parent node $v_p $ of $v $:
            \begin{enumerate}
                \item \textbf{Remove edge.} Remove the edge $(v_p, v) $ from $E $.
            \end{enumerate}
    \end{enumerate}
\end{taocpalg}

Notice that each iteration of the collapse loop corresponds with an element in the sequence of history sDAGs named in \autoref{prop:collapsingterminates}.
Since the order of edges in the sequence $(e_k) $ in \autoref{prop:collapsingterminates} has no effect on the resulting history sDAG, the order of edges in the queue should have no effect on the history sDAG produced by this algorithm.

\subsection{History sDAG Completion}
We now introduce ``completion,'' which essentially means that we add every edge that respects clade union sets.
More precisely, \autoref{defn:historyDAG} specifies that each edge of a history sDAG must target a node whose clade union is in the subpartition of its parent.
Given a collection of history sDAG nodes $V $, we can create an edge set $E' $ containing all edges allowed by this requirement.
The resulting DAG $(V, E')$ then contains all histories that can be constructed using nodes from $V $.
If $V $ is the node set for some valid history sDAG, then the resulting DAG $(V, E') $ must also be a history sDAG.

By completing a history sDAG, additional histories are represented.
Although there is no guarantee about the weight of these new trees, it is possible that additional minimum weight trees may be found by the completed history sDAG, which makes this operation useful.

This idea is expressed in the following definition.

\begin{defn}
    Let $T $ be a collection of histories with labels in $Y $.
    Let $(V, E) $ be the history sDAG constructed from $T $.
    The \textbf{completed history sDAG constructed from} $T $ is the history sDAG $(V, E') $, where
    \begin{equation*}
        E' = \left\{(v, v') \st v, v'\in V \text{ and the clade union of $v' $ is a child clade of $v $} \right\}
    \end{equation*}
    We will also refer to $(V, E') $ as the \textbf{completion} of $(V, E) $.
\end{defn}

The completed history sDAG constructed from $T $ is a history sDAG because it includes at least those edges present in the history sDAG constructed from $T $, and all the additional edges are allowed by the definition of the history sDAG\@.
We emphasize that history sDAG completion adds no new nodes, and that a completed history sDAG is in general a much smaller object than the \emph{complete} history sDAG on a taxon set described in~\autoref{defn:complete}.

Earlier sections show that $T \subset D(T) $ for a set of histories $T $ because a history sDAG constructed from $T $ allows subhistory swaps involving conforming subhistories.
In contrast, the completed history sDAG constructed from $T $ allows any subhistories on the same leaf labels to swap, regardless of their parent nodes.

Swaps between subhistories with the same leaf label sets will not preserve history weights in the same sense as conforming subhistory swaps.
Therefore, the completed history sDAG constructed from a set of histories $T $ is not guaranteed to preserve weights in any sense.
However, the lemmas from the previous sections guarantee that any history sDAG can be trimmed to express only its minimum weight histories.
This means that the completed history sDAG can be used as a way to find even more minimum-weight histories than the original history sDAG construction, given a set of minimum-weight histories $T $.
For example, completing a history sDAG constructed from maximally parsimonious, or nearly maximally parsimonious histories, could in some cases find additional maximally parsimonious histories which wouldn't have been present before completion.

The completed history sDAG constructed from a set $T $ of histories represents all possible histories which can be constructed using the nodes of histories in $T $.
A choice of input histories can be therefore be framed as a choice of plausible pairs of labels and subpartitions, which then determines a collection of plausible histories.

\section{Exploring parsimony diversity of \sarscov clades}
The original motivation for the history sDAG was to store a collection of minimum-weight histories.
The theorems in the preceding sections show that the history sDAG is an ideal object for this task, and can discover new minimum weight histories in addition to those which we seek to store.
Because \sarscov is densely sampled relative to the rate of mutation and undergoes minimal recombination, parsimony methods are well-suited to studying its evolution~\citep{thornlow2021online}.
However, we will now demonstrate that there exists considerable uncertainty in a parsimonious reconstruction of \sarscov evolution.

Searching for maximally parsimonious trees is computationally intensive, and scales poorly as the number of leaves increases.
Traditionally, tools like \phylip's \dnapars were used to produce an assortment of maximally parsimonious trees on a given set of sequences~\citep{phylip}.
Recently, the \usher project made it possible to quickly reconstruct a single approximate parsimony tree on millions of sampled sequences~\citep{thornlow2021online}.
Neither method guarantees that the reconstructed trees are maximally parsimonious relative to all possible trees on the given leaf sequences.

Users of both methods often accept the first tree produced, ignoring the uncertainty inherent to the parsimony assumption.
However, there are in general many possible maximally parsimonious trees on a given set of leaf sequences.

Indeed, \dnapars by default outputs a non-exhaustive collection of maximally parsimonious trees.
However, for very large sets of sequences, a collection of nearly maximally parsimonious trees may be produced much more quickly using \usher.
As a demonstration, we use \usher to reconstruct trees on an assortment of \sarscov clades, extracted from the global phylogeny of public \sarscov sequences provided by the \usher project (accessed 3-3-2022)~\citep{roblanf_2020, Turakhia2021-yl}.
We allowed \usher to reconstruct trees on the set of unique sequences from each clade, as well as the ancestral sequence in the original tree, outputting a maximum of 200 trees resulting from alternative parsimonious placements of samples.
Including the ancestral sequence guarantees that the resulting reconstruction is comparable to the subtree of the global phylogeny corresponding to the same clade.
We then use the \usher utility \matoptimize \citep{MatOptimize22Ye} to attempt to optimize each tree, allowing the optimizer to make up to four moves for each sample which do not improve the parsimony score.
Allowing a few such moves is intended to increase the diversity in output trees, without requiring excessive computation time.
We saved four intermediate trees during optimization of each tree output by \usher.
Optimizations of different trees output by \usher are not guaranteed to achieve the same parsimony score.
However, even optimized trees which are not globally maximally parsimonious are likely to contain parsimony-optimal substructures.

For each clade, the collection of 800 intermediate trees resulting from these tree optimizations are used to create a history sDAG\@, after outgrouping the ancestral sequence in each.
These 800 trees are not guaranteed to be unique, and in fact there are often many duplicates.
The resulting history sDAG is then completed, trimmed to only express maximally parsimonious histories, and label-collapsed.

Whereas it is computationally expensive to construct a maximally parsimonious tree, the operations of trimming, collapsing, and completing are highly optimized, and in practice take only a few seconds for the history sDAGs used to produce~\autoref{fig:ushertrees}.
The number of operations required for the proposed trimming algorithm is bounded by $\mathcal{O}(E\cdot (MCS + MNC))$, and similarly the algorithm for completing the history sDAG is bounded by $\mathcal{O}(N^2\cdot MNC)$ where $N$ is the number of nodes, $E$ the number of edges, $MCS$ the maximum size of any set of edge descending from a node-clade pair, and $MNC$ the maximum number of child clades for any node in the history sDAG.

\begin{figure}[h!]
    \centering
    \includegraphics[width=\textwidth]{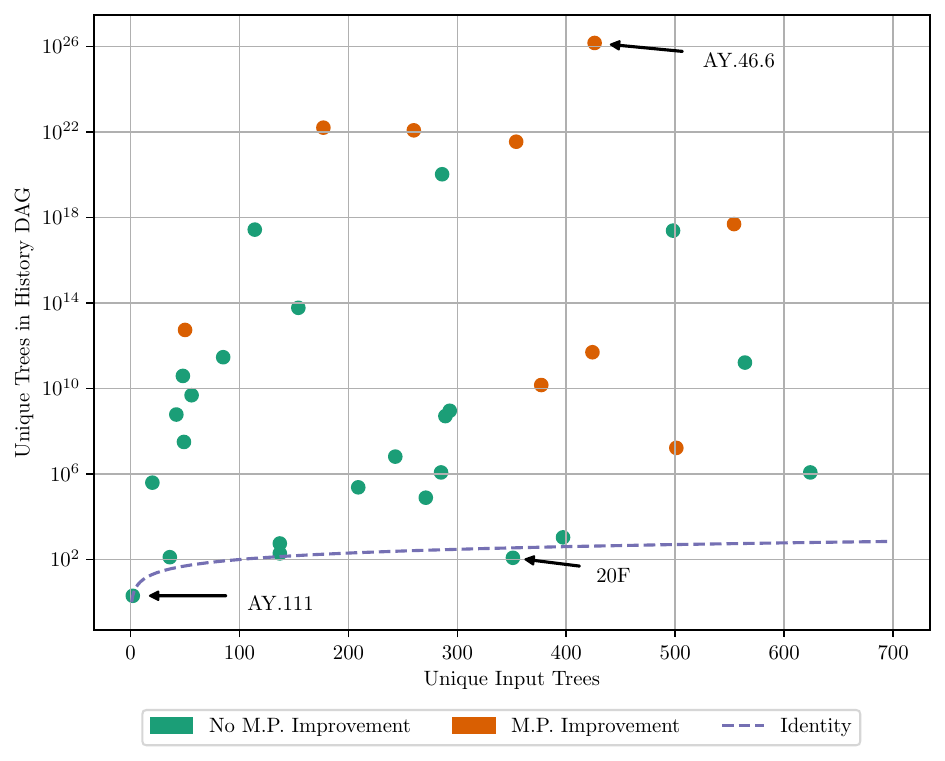}
    \caption{\
        Unique trees found by \usher, and unique trees in the resulting history sDAG, for each selected \sarscov clade.
        Point colors indicate if the parsimony score of trees in the history sDAG is lower than the best parsimony score achieved by \usher.
        Parsimony improvement compared to \usher trees does not exceed 0.04\% for any clade.
        These data are summarized in Supplementary \autoref{table:ushertrees}
}%
\label{fig:ushertrees}
\end{figure}

The resulting history sDAG sometimes contains histories which are slightly more parsimonious than any trees found by \usher, and in most cases, the number of maximally parsimonious histories contained in the resulting history sDAG is many orders of magnitude greater than the number of histories used as input (\autoref{fig:ushertrees}).
However, this increase is far from uniform across clades.
For the clade AY.46.6, the history sDAG expresses an impressive 25 orders of magnitude more tree diversity than the input trees found by \usher, and all of those trees have a slightly better parsimony score than any tree found by \usher.
On the other hand, clade AY.111 also stands out in contrast, with only two unique trees found by \usher, and only those same two unique trees contained in the resulting history sDAG\@.

For some clades, such as 20F, the number of unique trees found by \usher is greater than the final number of trees expressed in the history sDAG\@.
Although surprising, this is not contradictory, since many of the unique trees found by \usher may have a higher parsimony score than the trees contained in the final history sDAG\@.

It is unlikely that \autoref{fig:ushertrees} reflects the true diversity of maximally parsimonious trees for each clade.
In fact, the true minimum parsimony scores for tree reconstructions of each clade may be lower than the parsimony score of trees found here.
The variation in tree diversity between clades is instead likely determined by features in the particular trees found by \usher.
Further investigation of the true diversity of maximum parsimony trees will be left for future work.

Regardless, the large diversity of trees for most clades suggests that considerable uncertainty remains about tree structure when performing a maximum-parsimony search, even after collapsing edges without mutations into multifurcations.
This uncertainty represents an opportunity to fine-tune the accepted tree in settings where parsimony is an appropriate assumption.
For example, the histories found by this method could be used as a starting point for further optimization according to criteria other than parsimony.
Such criteria, and their efficient calculation in the history sDAG, will be the subject of future work.

\section{Discussion}
This paper establishes that the history sDAG is an efficient structure for storage of similar internally labeled trees, and provides a foundation for future work to understand phylogenetic uncertainty using massive collections of parsimonious trees.

We described efficient methods for basic manipulation of the history sDAG object, and used these methods to demonstrate that for densely sampled \sarscov data, it is possible to build a history sDAG containing many alternative parsimonious evolutionary histories.
We implemented this process on clades containing up to seven thousand leaves, although it would have been feasible to use clades containing perhaps ten times as many.
Software which is currently in development will allow parsimony optimization via \matoptimize~\citep{MatOptimize22Ye} directly on the history sDAG, avoiding the time-consuming step of generating many input trees with \usher, and hopefully allowing these methods to scale to even larger datasets.

Thanks to the convenient structure of the history sDAG, it will be possible to efficiently summarize clade-level uncertainty in these histories, although such methods will be described and benchmarked in a future paper.
This approach can only be expected to work well when the tree posterior is overwhelmingly concentrated on maximally parsimonious trees, and even then clade supports estimated with the history sDAG may not be directly comparable to supports observed in a sample from the tree posterior.
However, for phylogenetic inference resulting in a single maximally parsimonious tree (which is typically arbitrarily chosen from the collection of MP trees), our method could provide a valuable understanding of the uncertainty resulting from this choice.
Clade support estimation via the history sDAG may have advantages over standard approaches to phylogenetic uncertainty estimation.
Unlike a bootstrap approach, all alternative histories in the history sDAG are built on the same data, and therefore clade support derived from the history sDAG could be more accurate for clades defined by only a few mutations~\citep{Wertheim2022-bm}.
Unlike a Bayesian approach, our method makes no attempt to fully resolve a tree when there is insufficient signal to do so, and we expect it to scale well to large data.

The history sDAG is related to various earlier works, as we now describe.

\subsubsection*{The Subsplit DAG}
The history sDAG generalizes a similar construction useful for likelihood computations and variational inference on trees, integrating out ancestral sequence uncertainty~\citep{Zhang2018-mm,Zhang2019-lw}.
Although this form of the DAG structure is not expressed in the original variational inference papers, it is described in a more recent paper~\citep{gp}.
In this \emph{subsplit DAG}, internal nodes do not contain label data, and each internal node is required to have exactly two child clades (a \emph{subsplit} is a subpartition with two parts).
That is, the subsplit DAG is a history sDAG in which internal nodes all share the same fixed label, and each node has two child clades.
The additional node label information in the history sDAG is essential for efficient storage and retrieval of maximally parsimonious trees, with the inferred ancestral sequences dictated by the parsimony assumption.

\subsubsection*{The Buneman Graph}
A construction known as the Buneman graph is related to the history sDAG\@.
In this construction, a collection of observations, each consisting of a collection of binary traits, can be arranged in a graph.
This \emph{Buneman graph} contains as subgraphs all possible maximally parsimonious trees relating the observations~\citep{semple2003phylogenetics}.
This construction has been generalized to sequences of non-binary characters~\citep{bandelt2009-ak, misra2011}, and one such generalization was applied to the problem of finding provably maximally parsimonious trees on nucleotide sequence data~\citep{misra2011}.

However, although the Buneman graph contains all maximally parsimonious trees on a set of observations, it may also contain trees which are not maximally parsimonious.
The Buneman graph is therefore not a natural data structure for storing collections of maximally parsimonious trees, since considerable additional computation may be needed to find the maximally parsimonious trees in the graph.
In contrast, the history sDAG may be trimmed to express only maximally parsimonious trees, and sampling or iterating through the trees it contains is trivial.
In addition, the history sDAG can be immediately generalized to arbitrary observed data (abstracted as node labels), and allows efficient computation and trimming with respect to weight functions other than parsimony.

\subsubsection*{Tree Fusion}
The swapping of subhistories that takes place in the history sDAG bears some resemblance to the procedure known as tree fusion, used in some parsimony software like TNT, in which clades are swapped between trees to improve parsimony scores~\citep{treefusion1999goloboff, goloboff2007divide}.

Generally, the history sDAG can be thought of as a structure which efficiently represents, and allows computation on, the set of trees resulting from all possible combinations of these clade swaps.
Thus, the history sDAG can only swap subhistories that have identical parent node labels and subpartitions.
In contrast, tree fusion can consider trees resulting from swapping any subtrees, as long as they contain the same set of samples.

Tree fusion is better approximated in the completed history sDAG, which does allow swaps of any subhistories containing the same samples.
That is, for a history sDAG $(V, E) $ constructed from a set of histories $T $, the set of histories in the completion of $(V, E) $ consists of all histories resulting from combinations of swaps involving subhistories of histories in $T $, regardless of their parent nodes.
However, subhistory swaps are still fundamentally different from the swaps of subtopologies realized during tree fusion, since subhistory swaps maintain the same ancestral node labels that were present in the original histories involved in each swap.
In order to ensure that ancestral labels are optimal in the new histories contained in the completed history sDAG, we would need an algorithm to reconstruct these ancestral states from scratch.
Such an algorithm for computing optimal ancestral states in the history sDAG would be analogous to the Sankoff algorithm for reconstructing ancestral states on trees.

Despite these limitations, \autoref{fig:ushertrees} shows that the subtree swaps which are realized in the history sDAG can be effective in reducing parsimony scores.
Although the history sDAG does not fully implement tree fusion, it concurrently applies subhistory swaps in many different histories, and allows the resulting trees to be filtered efficiently according to arbitrary criteria.
This may represent an advantage over methods which keep track of and optimize far fewer trees.

\subsubsection*{Tree Sequences}
The history sDAG also bears some similarities to the \emph{tree sequence}~\citep{Kelleher2019-tz, speidel2019}.
The tree sequence encodes a single evolutionary history for segments of a multiple sequence alignment, with changes of evolutionary history at specific points along the alignment due to recombination.
The history sDAG, on the other hand, is meant to encode an unordered collection of equally parsimonious histories.

\subsubsection*{Future Work}
We are in the process of building software that will allow us to do larger-scale inference using the history sDAG\@.
In addition to the uncertainty quantification goals described above, this software will also allow us to do broader exploration of the set of maximally parsimonious trees than previously possible.
We also hope to use the history sDAG as a means of improving MCMC sampling.

Maximally parsimonious trees may be a good starting point for inference via other methods, such as the branching process used by the tree inference package \gctree~\citep{dewitt2018using}.
To support this, we will develop efficient algorithms to make calculations on histories contained in the history sDAG\@.
We will also explore ways to search for new optimal histories, such as maximally parsimonious histories, directly within the structure of the history sDAG\@.

\section{Acknowledgements}
We thank JT McCrone and Gytis Dudas for discussions that informed this work, Mike Steel for pointing us to relevant literature, as well as Marc Suchard for suggestions on exposition.
Thanks also to Ye Cheng, Russ Corbett-Detig, Yatish Turakhia, and the rest of the \usher team for helpful discussions and their help applying \usher to the \sarscov example.
We also thank
Matthew Macaulay,
Hassan Nasif,
Anna Kooperberg,
Michael Karcher,
Tanvi Ganapathy,
Shosuke Kiami,
Seong-Hwan Jun,
Cheng Zhang,
and
Mathieu Fourment
for their work on the ``subsplit DAG,'' a closely related idea.

The \sarscov data which made the exploration of diversity of parsimonious reconstructions of \sarscov clades possible is from the public databases GenBank \citep{GenBankCite}, COG-UK \citep{COGUKCite}, and the China National Center for Bioinformation \citep{CNCB1, CNCB2, CNBC3, CNBC4}.
We thank the laboratories submitting sequence data to these public databases, as well as the researchers and laboratories contributing viral samples on which these sequences are based.

\section{Declarations}

\subsection{Funding}
FAM supported by R01 AI162611; FAM is an Investigator of the Howard Hughes Medical Institute.
WSD was supported by National Institute of Allergy and Infectious Diseases Grant F31AI150163, and by a Fellowship in Understanding Dynamic and Multi-Scale Systems from the James S.\ McDonnell Foundation.
Scientific Computing Infrastructure at Fred Hutch funded by ORIP grant S10OD028685.

\subsection{Competing interests}
The authors declare no competing interests.
\subsection{Ethics approval}
No ethics approval process was required for this work.
\subsection{Availability of data and materials}
The \sarscov data used to produce clade reconstructions in the Exploring Parsimony Diversity section was read from the public \sarscov tree distributed by the \usher team at \url{http://hgdownload.soe.ucsc.edu/goldenPath/wuhCor1/UShER_SARS-CoV-2/}.
This data originates from GenBank \citep{GenBankCite} at \url{https://www.ncbi.nlm.nih.gov}, COG-UK \citep{COGUKCite} at \url{https://www.cogconsortium.uk/tools-analysis/public-data-analysis-2/}, and the China National Center for Bioinformation \citep{CNCB1, CNCB2, CNBC3, CNBC4} at \url{https://bigd.big.ac.cn/ncov/release_genome}.

\subsection{Code availability}
The history sDAG data structure described in this paper, as well as various algorithms described in this paper and in future work, are implemented in the open source Python package \historydag, which is available at \url{https://github.com/matsengrp/historydag}.

All code necessary to reproduce the \sarscov clade reconstruction example is available at \url{https://github.com/matsengrp/usher-clade-reconstructions/tree/7953eda7eb5c15556753fc23b4807b748f6a2464}.

\subsection{Authors' contributions}
Will Dumm and Frederick Matsen wrote the first draft of the manuscript, with edits and contributions to proofs from Mary Barker and edits from William DeWitt.
Will Dumm and William Howard-Snyder prepared the \sarscov clade reconstruction example.
All authors commented on previous versions, and read and approved the final manuscript.

\subsection{Open access}
This article is subject to HHMI's Open Access to Publications policy.
HHMI lab heads have previously granted a nonexclusive CC BY 4.0 license to the public and a sublicensable license to HHMI in their research articles.
Pursuant to those licenses, the author-accepted manuscript of this article can be made freely available under a CC BY 4.0 license immediately upon publication.

\bibliographystyle{sn-basic.bst}
\bibliography{main}

\appendix
\appendixpage

\section{Proofs omitted from the text}\label{app:proofs}

\reachableleaves*
\begin{proof}\label{proof:reachableleaves}
    Let $(V, E) $ be a history sDAG or subhistory on labels $Y $, and let $v\in V $ be a non-UA node.

    Let $X $ be the set of labels of leaf nodes reachable from $v $.
    $X\subset \cu(v) $ by \autoref{obs:backwardsinclusion}.
    To show inclusion in the other direction, we will show by induction on $|\cu(v)| $ that for any $\ell \in \cu(v) $, the leaf node $(\ell, \emptyset) $ is reachable from $v $ in $(V, E) $.

    As a base case, if $|\cu(v)| = 1 $, then $v $ must be the leaf node with label $\ell $, so the statement is immediately true.

    Now suppose that for any node $v'\in V $ with $|\cu(v')| < n $, and for any $\ell'\in \cu(v') $, then $(\ell', \emptyset) $ is reachable from $v' $.
    Suppose that $v = (\ell_v, U) $ is such that $|\cu(v)| = n $, and let $\ell \in \cu(v) $.
    Then $\ell \in C $ for some child clade $C\in U $.
    Since $U $ contains at least two disjoint, nonempty subsets of $Y $, it must be true that $C\subsetneq \cu(v) $.
    Any node-clade pair in a history sDAG or subhistory must have at least one descendant edge, so there exists an edge $\left(v, v_c \right) \in E $ such that $C = \cu(v_c) $, and $|\cu(v_c)| < n $.
    Since $\ell \in C $, we know that $\ell\in \cu(v_c) $, and by the inductive hypothesis, $(\ell, \emptyset) $ is reachable from $v_c $, and therefore also from $v $.
\end{proof}

\historyaltdef*
\begin{proof}
    \label{proof:historyaltdef}
    We will prove the equivalent statement that, for a history sDAG $(V, E) $ with exactly one edge descending from $\rho $, $(V, E) $ is a tree if and only if $(V, E) $ contains exactly one edge descending from each node-clade pair.
    We will prove the contrapositive of both directions.

    Assume first that there exists a node-clade pair in $(V, E) $ with at least two descendant edges.
    That is, there exist edges $(v, v_1), (v, v_2)\in E $ such that $v_1\neq v_2 $, but $\cu(v_1) = \cu(v_2) $.
    Let $\ell \in \cu(v_1) $.
    By \autoref{lemma:reachableleaves}, $(\ell, \emptyset) $ is reachable from both $v_1 $ and $v_2 $, so $(V, E) $ is not a tree.

    Now, suppose that $(V, E) $ is not a tree, meaning that there exist two edges $(v_1, v), (v_2, v) \in E $ with the same child node.
    Since all nodes in a history sDAG must be reachable from $\rho $, there exist paths in $E $ connecting $\rho $ to both $v_1 $ and $v_2 $.
    $(V, E) $ has only one edge exiting $\rho $, so these two paths must diverge at some non-UA node $v_r = (\ell_r, U_r) \in V $.
    That is, there are edges $(v_r, v_1'), (v_r, v_2') \in E $ such that $v_1 $ is reachable from $v_1' $ and $v_2 $ is reachable from $v_2' $.
    Therefore, $v $ is reachable from both $v_1' $ and $v_2' $, so $\cu(v) \subset \cu(v_1') $ and $\cu(v) \subset \cu(v_2') $, and $\cu(v_1') \cap \cu(v_2') \neq \emptyset $.
    However, since $v_1' $ and $v_2' $ are children of the node $v_r $, both $\cu(v_1') $ and $\cu(v_2') $ must be elements of $U_r $.
    Elements of $U_r $ are disjoint, nonempty subsets of $Y $, so $\cu(v_1') = \cu(v_2') $.
    We have demonstrated that the two edges $(v_r, v_1') $ and $(v_r, v_2') $ descend from the same node-clade pair $(v_r, \cu(v_1')) $ in $(V, E) $.
\end{proof}

\acyclic*
\begin{proof}\label{proof:acyclic}
    We will show that no edge may take part in a cycle.

    Recall first that the UA node only admits outgoing edges, so no edge exiting $\rho $ can be part of a cycle.

    Consider an edge $e = \left(v_p = (l_p, U_p), v = (l, U) \right) $ whose parent is not $\rho $.
    If $v_p $ is not the UA node, then $|U_p| \geq 2 $, and either $U = \emptyset $ or $\bigcup\limits_{C\in U} \in U_p $.
    In the first case, $v $ is a leaf node, which can only accept incoming edges, so the edge $e $ cannot be part of any cycles.
    In the second case, since $|U_p| \geq 2 $ and elements of $U_p $ are nonempty, disjoint subsets of $X $,
    \[\left | \bigcup\limits_{C\in U_p} C \right | > \left | \bigcup\limits_{C\in U} C \right |. \]
    The same inequality is true of any edge reachable from $v $ which does not terminate at a leaf node, so no edge reachable from $v $ can have $v_p $ as a target.
\end{proof}

\nodesareroots*
\begin{proof}\label{proof:nodesareroots}
    We will prove this by induction on $|\cu(v)| $.

    As the base case, suppose $|\cu(v)| = 1 $.
    Then $v $ must be a leaf node, and the subhistory we seek is the one consisting of only the node $v $.

    Now suppose it's true that for any node $v $ with $|\cu(v)| < n $, there's a subhistory $s $ rooted at $v $ in $(V, E) $.

    Let $v = (\ell, U) \in V $, and suppose that $|\cu(v)| = n $.
    Then $U = \left\{C_1, \ldots, C_m \right\} $ where $C_1, \ldots, C_m $ are $m \geq 2 $ disjoint subsets of $Y $.
    For each $C_i \in U $, by the definition of the history sDAG, there exists at least one edge $(v, v_i) \in E $ with $\cu(v_i) = C_i $.
    Also, since $|C_i| < n $, each $v_i $ is guaranteed to have a subhistory $s_i = (V_{s_i}, E_{s_i}) $ in $(V, E) $, with $v_i $ as its root, by the inductive hypothesis.

    Notice that the node sets $V_{s_i} $ for $1\leq i\leq m $ are pairwise disjoint:
    Let $v' \in V_{s_k} $ and $v''\in V_{s_j} $, for $k\neq j $, $1\leq k, j \leq m $.
    A necessary condition for node equality is that $\cu(v') = \cu(v'') $.
    But notice that $\cu(v') \subset \cu(v_k) = C_k $, and $\cu(v'') \subset \cu(v_j) = C_j $.
    Since $C_j\cap C_k = \emptyset $, $\cu(v') \neq \cu(v'') $, so $v' \neq v'' $, and $V_{s_j} \cap V_{s_k} = \emptyset $.

    Therefore, we can build a subhistory rooted at $v $ consisting of $v $, the subtrees $s_i $ for $1\leq i\leq m $, and the edges connecting $v $ to the root node of $v_i $ for each subtree $s_i $.
    That is,
    \begin{equation*}
        s = \left(\left [\left\{v \right\} \cup \bigcup_{i=1}^m V_{s_i}\right ], \left [ \left\{(v, v_i) \mid 1\leq i\leq m \right\} \cup \bigcup_{i=1}^m E_{s_i}\right ] \right)
    \end{equation*}
    is the subhistory we seek, rooted at $v $.

    This subhistory has exactly one edge descending from each node-clade pair because $s_i $ are subhistories, and because for each child clade $C_i $ of $v $, the edge $(v, v_i) $ descends from $(v, C_i) $.
\end{proof}

\dagisitshistories*
\begin{proof}
    We will argue that for any edge $e = (v, v_c)\in E $, there exists a history $(V', E') $ in $(V, E) $ with $e\in E' $.

    Let $\left(e_i=(v_i, v_{i+1}) \right)_{i=0}^{n-1} $ be a sequence of edges in $E $ which is a path from $\rho $ to $v_c $, so that $v_0 = \rho $ and $e_{n-1} = (v_{n-1}, v_n) = (v, v_c) = e $.
    For $1\leq i\leq n $, let $s_i $ be a subhistory rooted at $v_i $, which exists by \autoref{lemma:nodesareroots}.

    Now recursively construct $s_i' $ for $1\leq i< n $ by replacing the edge descending from the node-clade pair $\left(v_i, \cu(v_{i+1}) \right) $ in $s_i $, and the subhistory consisting of all nodes and edges reachable from the child node of that edge, with the edge $e_i $ and the subhistory $s_{i+1} $, rooted at $v_{i+1} $.
    Finally, let $s_n' = s_n $.

    $s_i' $ remains rooted at $v_i $, and now contains the edge $e_i $ and all the edges $e_j $, for $i < j < n $, including the edge $e $.

    $s_1' $ is then a subhistory rooted at $v_1 $ which contains $e $.
    \autoref{lemma:historyaltdef} implies that by adding the edge $(v_0=\rho, v_1) $ to $s_1' $, we've constructed a history in $(V, E) $ containing $e $.
    Note that in the language of following sections, this history can also be expressed as $t\tl s_2 \tl \cdots \tl s_n $, where $t $ is the history consisting of $s_1 $ and the edge $(\rho, v_1) $.

    To finish the proof, let $(V_T, E_T) $ be the history sDAG constructed from $T $.
    Since a history sDAG must be connected, to show that $(V, E) $ and $(V_T, E_T) $ are equal is to show that $E_T = E $.
    Any edge in $E_T $ must be present in some history in $T $, and since $T $ is the set of histories in $(V, E) $, any such edge must be in $E $.
    Therefore, $E_T \subset E $.
    Also, we just showed that any edge $e\in E $ must take part in some history in $T $, and therefore $e $ must also be in $E_T $.
    Therefore, $E_T = E $.
\end{proof}

\subsection{History Weights}
The lemmas in this appendix subsection are necessary for the proof of \autoref{thm:updown}.
The proof of that theorem is given at the end of this subsection.

\begin{defn}
    Let $s $ and $s' $ be subhistories of histories $t $ and $t' $ in some ambient history sDAG\@.
    We say that $s $ and $s' $ are \textbf{conforming} subhistories if $s' $ has the same set of leaf labels as $s $, and the parent node in $t' $ of $s' $ is the same as the parent node of $s $ in $t $.

    More formally, if $t = (V, E) $ and $t' = (V', E') $, and $s = (V_s, E_s), s' = (V_s', E_s') $, with $V_s \subset V $ and $V_s' \subset V' $, then $s $ and $s' $ are conforming if:
            \begin{enumerate}
                \item $v_p = v_p'$ for $v_p, v_p' $ the parent nodes of $s $ and $s' $ in $t $ and $t' $, respectively
                \item $L(s) = L(s')$.
            \end{enumerate}
\end{defn}

Notice that since no internal node in a history may have exactly one child, a history may not contain two distinct subhistories with the same leaf nodes.
Therefore, given a history $t $ and a subhistory $s' $, a choice of subhistory $s $ of $t $ conforming with $s' $ is guaranteed to be unique, if it exists.

Notice also that the definition of conforming subhistory does not allow a history to be conforming with any subhistory of itself.
To evaluate conformity of a subhistory, there must be some ambient parent node (\autoref{fig:conforming_subtrees_example}).

\begin{figure}[H]

    \centering
    \includegraphics{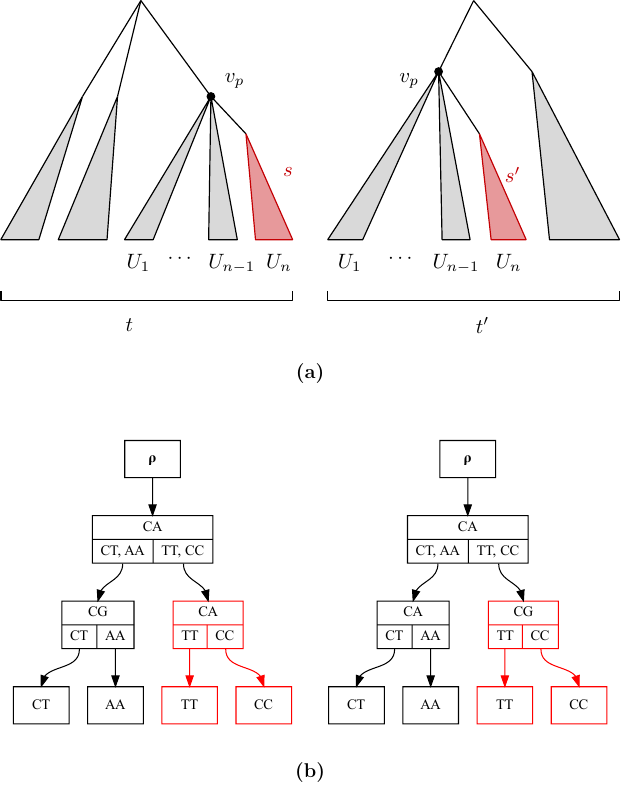}
\caption{\
    In both (a) and (b), the two subhistories highlighted in red are conforming, since they share the same set of leaves, and in each respective history their parent node has the same label and descendant clades.
We include this example to emphasize that conforming subhistories need not have the same internal node labels
}%
\label{fig:conforming_subtrees_example}
\end{figure}

We can now define the exchange of substructures that takes place between histories in the history sDAG\@.
\begin{defn}\label{defn:subhistoryswap}
    Let $t = (V, E)$ be a history, and let $s = (V_s, E_s) $ be a subhistory of $t $.
    Also, let $(V_d, E_d) $ be a history sDAG, and let $s' = (V', E') $ be any subhistory of $(V_d, E_d) $ conforming with $s $.

    A \textbf{subhistory swap} of $t $ and $s' $ is a history with the structure and labeling of $t $, except that the subhistory $s $ of $t $ is replaced with the subhistory $s' $.

    More formally, the subhistory swap replacing $s $ with $s'$ is the history with nodes $(V\setminus V_s)\cup V' $, and edges
            \begin{equation*}
                \left (E\setminus \left ( E_s \cup \left\{(v_p, v) \right\} \right ) \right ) \cup \left (E' \cup \left\{(v_p, v') \right\}\right )
            \end{equation*}
         where $v_p $ is the parent node of $s $ in $t $, $v $ is the root node of $s $, and $v' $ is the root node of $s' $.
\end{defn}

\begin{defn}
    Let the \textbf{swap operator} $\tl $ be a left-associative operator on history, subhistory pairs, defined so that $t\tl s' $ is the subhistory swap of $t $ and $s' $, if a subhistory of $t $ conforming with $s' $ exists.
    $t\tl s' $ is undefined if no such subhistory exists.
\end{defn}

Notice again that the subhistory (right argument of $\tl $) in a subhistory swap must exist in the context of some ambient history sDAG, so that it can be evaluated whether swapped subhistories are conforming.

The definition of conformity is slightly more restrictive than it needs to be to guarantee that subhistory swaps of conforming subhistories preserve parsimony.
In particular, there is no need to require that the parent nodes of the swapped subhistories have the same subpartitions.
However, this assumption is natural in the context of the history sDAG structure, and is necessary for the argument to extend to edge weight functions that depend on nodes' subpartitions.

\begin{lemma}
    \label{lemma:tlwelldefined}%
The operator $\tl $ is well-defined on subhistories.
Also, given subhistories $t, s' $ both with labels in $Y $, and with the leaves of $t $ labeled by $X\subset Y $, then $t\tl s' $ is a history with labels in $Y $ and leaves labeled by $X $.
That is, $\tl $ preserves the leaf labels of its left argument.
\end{lemma}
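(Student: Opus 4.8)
The plan is to dispatch the three assertions in turn: that the swap is unambiguous, that its output is a history, and that the output has leaf labels $X$. For well-definedness, recall that $t\tl s'$ is defined as the subhistory swap replacing $s$ by $s'$, where $s$ is a subhistory of $t$ conforming with $s'$. The remark immediately following the definition of conforming subhistories already shows such an $s$ is unique if it exists (since no internal node of a history has exactly one child, $t$ cannot contain two distinct subhistories with the same leaf labels), so the swap is determined by the pair $(t,s')$ and $\tl$ is unambiguous. The engine driving the rest of the argument is a single identity: writing $v$ and $v'$ for the roots of $s$ and $s'$, \autoref{lemma:reachableleaves} gives $\cu(v)=L(s)$ and $\cu(v')=L(s')$, while conformity gives $L(s)=L(s')$, so $\cu(v)=\cu(v')$.

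Next I would verify directly that $t\tl s'$ satisfies \autoref{defn:history}. The swap modifies $t$ only at the ``seam'' around the shared parent node $v_p=v_p'$. Because $\cu(v')=\cu(v)$, the inserted edge $(v_p,v')$ descends from exactly the same node-clade pair $(v_p,\cu(v))$ as the deleted edge $(v_p,v)$; in particular, if $v_p\neq\rho$ then $\cu(v')=\cu(v)\in U_p$, so requirement (2) of \autoref{defn:historyDAG} holds for the new edge, and the unique descendant edge of $(v_p,\cu(v))$ is merely exchanged. Every other node-clade pair of $v_p$, every node-clade pair in the untouched part of $t$, and every node-clade pair inside $s'$ keeps exactly the descendant edge it had in $t$ or in $s'$, each of which already met the history conditions. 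To guarantee these edge counts do not merge, I would check $(V\setminus V_s)\cap V'=\emptyset$: every node of $V'$ has clade union contained in $\cu(v)$, whereas a non-$\rho$ node of $V\setminus V_s$ has clade union either strictly containing $\cu(v)$ (when it is an ancestor of $v$, using $|U|\geq 2$) or disjoint from $\cu(v)$ (otherwise, since in a tree incomparable nodes have disjoint reachable-leaf sets), and $\cu(v)$ is nonempty, so no node is shared. Reachability from $\rho$ is immediate, and $\rho$ retains a unique child since the swap either leaves the $\rho$-edge alone or replaces $(\rho,v)$ by $(\rho,v')$; hence $t\tl s'$ is a history.

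Finally, the leaves of $t\tl s'$ are the leaves of $t$ lying outside $s$ together with the leaves of $s'$, and since leaf labels in a history are distinct,
\[
L(t\tl s') = \left(L(t)\setminus L(s)\right)\cup L(s') = \left(X\setminus L(s)\right)\cup L(s) = X,
\]
using $L(s')=L(s)$, which proves that $\tl$ preserves the leaf labels of its left argument. I expect the main obstacle to be the bookkeeping at the seam: confirming that exchanging one subtree for a conforming one neither produces a node-clade pair with the wrong number of descendant edges nor silently identifies a node of $s'$ with a node of the remainder of $t$. Both concerns reduce to the clade-union identity $\cu(v)=\cu(v')$ together with the containment/disjointness behaviour of clade unions in a tree, so the substance of the proof really lives in that one identity.
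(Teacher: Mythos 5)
Your proof is correct and takes essentially the same route as the paper's: uniqueness of the conforming subhistory $s$ from the no-unifurcation property of histories, and preservation of leaf labels from the conformity condition $L(s)=L(s')$. In fact, your middle step is more thorough than the paper's, which asserts that $t\tl s'$ is a history ``directly from the definition,'' whereas you actually verify it via the identity $\cu(v)=\cu(v')$ (from \autoref{lemma:reachableleaves}) and the disjointness of $V\setminus V_s$ and $V'$ --- a gap-filling refinement rather than a different approach.
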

\begin{proof}
To show that $\tl $ is well-defined, we need to show that given histories $t, s' $, the subhistory swap $t\tl s' $ is a history, and is uniquely determined by the choice of $t $ and $s' $.
$t\tl s' $ is a history directly from the definition, and by the observation that since neither $t $ nor $s' $ may have unifurcations, their subhistory swap may not either.
$t\tl s' $ replaces a subhistory $s $ of $t $ with $s' $, where $s $ must have exactly the same leaf label set as $s' $.
If such a choice of $s $ exists, it must be unique by the assumption that nodes in a history may not have exactly one child.
This guarantees that no two nodes in a history are above the same set of leaves.

Now assume that $t $ and $s' $ are subhistories on labels $Y $, and $t $ has leaves labeled by $X\subset Y $.
To see that $t\tl s' $ is a history with labels in $Y $ and leaves labeled by $X $, notice first that $s' $ must have nodes labeled bijectively by a set $C\subset X $, the same set of leaf labels as the subhistory in $t $ that $s' $ replaces.
Therefore the labeling on $t\tl s' $, restricted to leaf nodes, is bijective as a union of two bijective functions with disjoint domains, and images partitioning $X $.
The labeling on $t\tl s' $ maps into $Y $ as a union of functions which both map into $Y $.
\end{proof}

We now describe the sense in which subhistory swaps preserve history weight.
\begin{lemma}
    \label{lemma:updownlittle}
    Let $t_1 = (V_1, E_1) $ and $t_2 = (V_2, E_2) $ be histories on labels $Y $.
    For $i\in \left\{1,2 \right\} $, let $s_i $ be a subhistory of $t_i $, so that $s_1 $ and $s_2 $ are conforming.

    Let $t_1' = t_1 \tl s_2 $ be the history constructed by replacing $s_1 $ with $s_2 $ in $t_1 $, and similarly define $t_2' = t_2 \tl s_1$ to be the history constructed by replacing $s_2 $ with $s_1 $ in $t_2 $.
    Finally, suppose that $f $ is an edge-weight function taking values in a weight set $W $, clade-ordered with respect to $Y $.
    Then $g_f(t_1') < g_f(t_1) $ if and only if $g_f(t_2') > g_f(t_2) $.

\end{lemma}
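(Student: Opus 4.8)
The plan is to exploit that $g_f$ is merely a sum of edge weights, so a subhistory swap alters the total weight only through the single augmented subhistory that is exchanged. Write $v_p$ for the common parent node of $s_1$ and $s_2$ (it exists because the two subhistories are conforming), let $r_i$ denote the root node of $s_i$, and set $A_i = g_f(s_i^{v_p})$ for $i\in\{1,2\}$. By \autoref{defn:subhistoryswap} the edge set of $t_1$ is the disjoint union of the augmented-subhistory edges $E_{s_1}\cup\{(v_p,r_1)\}$ with a remainder set $R_1$, and $t_1'$ carries exactly the same remainder $R_1$ together with the edges of $s_2^{v_p}$. Hence $g_f(t_1) = g_f(R_1) + A_1$ and $g_f(t_1') = g_f(R_1) + A_2$, where I write $g_f(R_1)$ for the sum of $f$ over $R_1$ and use that addition on $W$ is commutative. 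The analogous decomposition for the second pair gives $g_f(t_2) = g_f(R_2) + A_2$ and $g_f(t_2') = g_f(R_2) + A_1$ for its remainder $R_2$.

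Next I would verify that $A_1$ and $A_2$ lie in a single set on which the clade-ordering respects addition. Since $s_1$ and $s_2$ are conforming, $L(s_1)=L(s_2)=:C$, and by \autoref{lemma:reachableleaves} this common leaf set is precisely the clade union of each root, so $\cu(r_1)=\cu(r_2)=C$. Because the edge $(v_p,r_1)$ appears in the history $t_1$, the history sDAG definition forces $C$ to be a child clade of $v_p$; thus $r_1$ and $r_2$ are both children of the node-clade pair $(v_p,C)$ in the complete history sDAG on $Y$, and $s_i\in\shbelow(r_i)$. Consequently both $A_1$ and $A_2$ belong to the set $\left\{g_f(s^{v_p}) \st s\in\shbelow(v_c),\ v_c\in\tgts(v_p,C)\right\}$, which is exactly one of the sets on which clade-ordering guarantees a total order respecting addition.

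With this in place the conclusion follows directly. Applying the respects-addition property with additive constant $g_f(R_1)$ shows that $g_f(t_1') < g_f(t_1)$ if and only if $A_2 < A_1$, while applying it with additive constant $g_f(R_2)$ shows that $g_f(t_2') > g_f(t_2)$ if and only if $A_2 < A_1$. The two conditions are therefore equivalent, which is precisely the claimed biconditional. Totality of the order on the set containing $A_1$ and $A_2$ ensures these comparisons are meaningful, but is not otherwise needed for the logic.

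I expect the only genuine subtlety to be bookkeeping rather than mathematics: confirming that the remainder edge sets $R_1$ and $R_2$ really are shared by the pre- and post-swap histories, so that the two weight decompositions are exact, and checking that the two distinct additive constants $g_f(R_1)$ and $g_f(R_2)$ are both admissible in the respects-addition clause, whose quantifier ranges over all $c\in W$. Everything else reduces to identifying the common clade $C$ through \autoref{lemma:reachableleaves} and reading off the correct clade-ordering subset.
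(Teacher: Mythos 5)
Your proposal is correct and follows essentially the same route as the paper's proof: decompose each history's weight as a common remainder plus the weight of the augmented subhistory (the paper's $w_i + K_i$ is your $g_f(R_i) + A_i$), then transfer the comparison $A_2 < A_1$ across the two additive constants using the respects-addition clause of clade-ordering. Your explicit identification, via \autoref{lemma:reachableleaves}, of the clade-ordering subset containing $A_1$ and $A_2$ simply fills in a step the paper compresses into ``$K_i$ are weights of subhistories below the same clade.''
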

\begin{proof}
    Let $v_i $ be the parent node of $s_i $ in $t_i $, and let $K_i = g_f(s_i^{v_i})$ for $i\in \left\{1,2\right\}$.
    That is, $K_i $ is the weight of the augmented subhistory $s_i$ and its parent edge.
    Then for some weights $w_1, w_2\in W $, $g_f(t_i) = w_i + K_i $, and also $g_f(t_1') = w_1 + K_2 $ and $g_f(t_2') = w_2 + K_1 $.
    The following are equivalent, since $K_i $ are weights of subhistories below the same clade, and $W $ is clade-ordered:
    \begin{align*}
        w_1 + K_2 &< w_1 + K_1\\
        K_2 &< K_1\\
        w_2 + K_2 &< w_2 + K_1\\
    \end{align*}
    so $g_f(t_1') < g_f(t_2) $ if and only if $g_f(t_2') > g_f(t_2) $.
\end{proof}

To extend the conclusion of this lemma to all the histories in the history sDAG, we need a few more lemmas:

\begin{lemma}\label{lemma:closedundertl}
    Suppose $t_1, \ldots, t_n $ are histories in the history sDAG $(V, E) $, and $s_i $ is a subhistory of $t_i $ for $2\leq i\leq n $.
    Then $t_1 \tl s_2 \tl \ldots \tl s_n $ is a history in $(V, E) $.
\end{lemma}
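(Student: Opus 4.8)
The plan is to induct on $n$, peeling off the outermost swap at each stage using the left-associativity of $\tl$. The inductive invariant will be that the partial expression $t^{(k)} := t_1 \tl s_2 \tl \cdots \tl s_k$ is a history in $(V, E)$, i.e.\ a trim of $(V, E)$ in the sense of \autoref{defn:trim}, with all of its nodes lying in $V$ and all of its edges lying in $E$. The base case $k = 1$ is immediate, since $t^{(1)} = t_1$ is a history in $(V, E)$ by hypothesis. Throughout I assume each swap in the sequence is defined (a conforming subhistory exists at each stage), so that the expression is well-formed; \autoref{lemma:tlwelldefined} then already guarantees that each $t^{(k)}$ is a history, and the genuinely new content is only that $t^{(k)}$ is \emph{contained in} $(V, E)$.

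For the inductive step I would write $t^{(k)} = t^{(k-1)} \tl s_k$ and unpack \autoref{defn:subhistoryswap} directly. Let $s$ be the subhistory of $t^{(k-1)}$ conforming with $s_k$, let $v_p$ be its parent in $t^{(k-1)}$, let $v$ be the root of $s$, and let $v_k'$ be the root of $s_k$. The node set of $t^{(k)}$ is $(V_{t^{(k-1)}} \setminus V_s) \cup V_{s_k}$ and its edge set is $(E_{t^{(k-1)}} \setminus (E_s \cup \{(v_p, v)\})) \cup (E_{s_k} \cup \{(v_p, v_k')\})$. By the inductive hypothesis $V_{t^{(k-1)}} \subset V$ and $E_{t^{(k-1)}} \subset E$, and because $s_k$ is a subhistory of $t_k$, which is a history in $(V, E)$, we also have $V_{s_k} \subset V$ and $E_{s_k} \subset E$. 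Hence every node of $t^{(k)}$ lies in $V$, and every edge lies in $E$ with the single exception of the newly introduced edge $(v_p, v_k')$.

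The main obstacle, and really the crux, is verifying that this one new edge $(v_p, v_k')$ is already present in $E$. Here I would lean on the precise form of the conforming condition: since $s$ and $s_k$ are conforming, the parent node of $s$ in $t^{(k-1)}$ coincides with the parent node of $s_k$ in $t_k$, so $v_p$ is exactly the parent of $s_k$ in $t_k$. Consequently $(v_p, v_k')$ is precisely the parent edge of the subhistory $s_k$ inside $t_k$, and since $t_k$ is a history in $(V, E)$ this edge belongs to $E$. The clade-union bookkeeping is consistent: \autoref{lemma:reachableleaves} gives $\cu(v) = \cu(v_k')$ from $L(s) = L(s_k)$, so the replacement edge descends from the same node-clade pair of $v_p$ that the removed edge $(v_p, v)$ did. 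This is exactly where the seemingly over-restrictive requirement in the definition of conforming subhistories, that the parent nodes be \emph{identical} rather than merely compatible, does the essential work: it forces the replacement edge to be one already recorded in the sDAG.

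Having established that all nodes of $t^{(k)}$ lie in $V$ and all edges in $E$, and that $t^{(k)}$ is a history by \autoref{lemma:tlwelldefined}, I conclude that $t^{(k)}$ is a trim of $(V, E)$, hence a history in $(V, E)$. This closes the induction and yields the claim at $k = n$.
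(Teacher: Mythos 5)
Your proof is correct and follows essentially the same route as the paper's: reduce to a single swap (the paper invokes left-associativity where you run the induction explicitly), observe that all edges of the left argument and of $s_k$ already lie in $E$, and use conformity of parent nodes to identify the one new edge $(v_p, v_k')$ as the parent edge of $s_k$ inside $t_k$, hence an element of $E$. If anything, your treatment of that crux edge is slightly more precise than the paper's, which loosely asserts the new edge ``is an edge in $s_2$'' when it is really the parent edge of $s_2$ in $t_2$ (equivalently, an edge of the augmented subhistory $s_2^{v_p}$), exactly as you argue.
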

\begin{proof}
    We need only show this is true for $n=2 $, since subhistory swaps are left-associative.
    Let $t_1, t_2 $ be histories in the history sDAG, and let $s_2 $ be a subhistory of $t_2 $, conforming with some subhistory $s_1 $ of $t_1 $.
    Also let $v_1, v_2 $ be the root nodes of $s_1 $ and $s_2 $ respectively.
    Conformity means that the parent node $v_p $ of $s_1 $ in $t_1 $ is the same as the parent node of $s_2 $ in $v_2 $.
    Notice that all the edges in $t_1 $ are in $E $, as well as all the edges of $s_2 $, since we assumed that $t_1, t_2 $ are histories in the DAG\@.
    Also notice that the edge $(v_p, v_2) $ is in $E $, because it is an edge in $s_2 $.
    Therefore, all the edges in $t_1\tl s_2 $ are in $E $, and $t_1\tl s_2 $ is a history in the history sDAG\@.
\end{proof}

The following lemma describes how any history in a history sDAG built from a collection of histories $T$ can be built from a collection of swaps operating on subhistories from $T$.
\begin{lemma}\label{lemma:whataredaghistories}
    Let $t\in D(T) $ be a history in the history sDAG $(V, E) $ constructed from a collection of histories $T $.
    Then for some sequence of histories $(t_i)_{i=1}^n $ in $T $, and choices of subhistories $s_i $ of $t_i $ for all $i$, $t = t_1 \tl s_1 \tl \ldots \tl s_n $.
\end{lemma}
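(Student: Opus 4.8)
The plan is to start from a suitable history in $T$ and repair it into $t$ by swapping in one single-history subhistory at a time, tracking progress with a monotone agreement measure. Since $t\in D(T)$, every edge of $t$ lies in $E$, and by \autoref{defn:historyDAGconstruct} every edge of $E$ appears in at least one history of $T$. In particular the unique edge $(\rho, v_r)$ of $t$ out of the UA node lies in some $t_1\in T$; because $\rho$ has a unique child in any history, $v_r$ is the root of $t_1$, so $t_1$ and $t$ share a root node and, by \autoref{lemma:reachableleaves}, the same leaf label set $\cu(v_r)$. I would take $t_1$ as the base of the swap sequence.

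Next I would set up the induction. Call an edge $(v,w)$ of $t$ \emph{reached} by a history $u$ if the entire path in $t$ from $\rho$ down to and including $(v,w)$ consists of edges of $u$, and let $n(u)$ be the number of reached edges, which is bounded by the number of edges of $t$. If $u\neq t$ then, since both are trees (\autoref{lemma:historyaltdef}) with the same root and since each node-clade pair carries a unique edge, a downward comparison shows that some reached node $v$ has a child clade $C$ whose $t$-edge $(v,w_t)$, with $\cu(w_t)=C$, is not an edge of $u$. The edge $(v,w_t)$ lies in some $t_i\in T$, so $v$ is a node of $t_i$ and $t_i$ contains a subhistory $s$ rooted at $w_t$ with parent $v$; I would swap this $s$ into $u$.

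I then must check the swap is legal and makes progress. In $u$ the node $v$ (present, since it lies on an all-$u$ path from the root) has a unique edge from $(v,C)$ to some $w_u$ with $\cu(w_u)=C$, and the subhistory of $u$ rooted at $w_u$ has leaf labels $C$ by \autoref{lemma:reachableleaves}; the subhistory $s$ also has leaf labels $\cu(w_t)=C$ and parent $v$, so $s$ and this subhistory are conforming, $u\tl s$ is defined (\autoref{lemma:tlwelldefined}), and it remains a history in $(V,E)$ (\autoref{lemma:closedundertl}). The swap alters only edges strictly below the node-clade pair $(v,C)$; since the $t$-path to any edge below $(v,C)$ passes through $(v,w_t)\notin u$, no such edge was reached before, so no reached edge is destroyed, while $(v,w_t)$ becomes reached (its $t$-path above $v$ is untouched and its final edge is now in $u$). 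Hence $n$ strictly increases.

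The crux of the argument, and the main obstacle, is exactly this monotonicity bookkeeping: a single swap replaces an entire subtree, so I must argue that repairing a topmost discrepancy cannot undo agreement already achieved — which is why I repair at a \emph{reached} node, guaranteeing the replaced subtree contained no previously reached edges. Given strict monotonicity, the process halts at a history $u$ with $n(u)$ maximal; there $E_t\subseteq E_u$, and matching the two trees edge-by-edge from the shared root forces $u=t$. Recording the swaps in order yields $t=t_1\tl s_1\tl\cdots\tl s_n$ with each $s_i$ a subhistory of a history of $T$ (re-indexing the source histories as $(t_i)$ to match the statement's convention).
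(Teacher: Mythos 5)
Your proof is correct and takes essentially the same route as the paper's: both arguments build $t$ top-down by repeatedly swapping in, at a $t$-edge not yet present, the subhistory rooted at that edge's child node drawn from a history in $T$ containing that edge, with conformity supplied by matching clade unions and parent nodes, and both hinge on the same key observation that a swap performed at a topmost unsecured position cannot disturb edges already secured above it. The only difference is bookkeeping --- the paper fixes a preorder of all edges of $t$ and inducts on the index, performing one (possibly trivial) swap per edge, whereas you greedily repair topmost discrepancies and track progress with the monotone count of reached edges --- which is cosmetic rather than a genuinely different argument.
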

\begin{proof}
    Let $t = (V_t, E_t) $ be a history in $(V, E) $.
    Every edge in $E_t $ must appear in some $t'\in T $.
    Since $t $ is a tree, there exists a preordering $(v_i)_{i=0}^n $ of vertices in $V_t $ so that $v_j $ is reachable from $v_i $ only if $i \leq j $.
    That is, if $i > j $, then $v_j $ must not be reachable from $v_i $.
    Let $(e_i)_{i=1}^n $ be an ordering of edges in $E_t $ such that $v_i$ is the target node of $e_i $ for all $1\leq i\leq n $.
    We will use the notation $v_i' $ to denote the parent node of the edge $e_i $.
    Notice then that for edges $e_i = (v_i', v_i) $ and $e_j = (v_j', v_j) $, if $v_i = v_j' $, then $i < j $.

    Finally, also define a sequence of parent node-clade pairs $(p_i)_{i=1}^n $ so that $p_i = (v_i', \cu(v_i)) $.
    We will say that an edge $e_j=(v_j', v_j) $ is reachable from a node-clade pair $p_i $ if there exists a path of edges ending with $e_j $ such that the first edge in the path descends from the node-clade pair $p_i $.
    Notice that since the sequence $(e_i) $ is a preordering of the history $t $, and since only one edge may descend from each node-clade pair in a history, if $e_j $ is reachable from the node-clade pair $p_i $, then $i \leq j $.

    Now choose a sequence $(t_i)_{i=1}^n $ of histories in $T $ so that $e_i $ is an edge in $t_i $ for all $i $, and let $s_i $ be the subhistory of $t_i $ rooted at $v_i $, the child node of the edge $e_i $.
    The edge $e_i $ is not reachable from the parent node-clade pair of any $s_k$ with $k > i $, because the indices are chosen to preorder nodes and edges.
    Notice that a subhistory swap can only change edges reachable from the shared parent node-clade pair of the subhistories being swapped.
    Assume temporarily that $e_i $ is in $t_1\tl s_1\tl \cdots \tl s_i $.
    Then the edge $e_i $ must be in the history $t_1\tl s_1\tl \cdots \tl s_k $ for $k > i $, since $e_i $ must not be reachable from $p_k $.

    Because of this, to show that all edges in $E_t $ are in $t_1\tl s_1\tl \cdots \tl s_n $, we need only show that the edge $e_i $ is in $t_1\tl s_1 \tl \cdots \tl s_i $ for all $1\leq i\leq n $, which we now establish.
    Inducting on $i $, notice first that $e_1 $ is in $t_1\tl s_1 = t_1$, by our choice of $t_1 $.
    Supposing that $e_j $ is in $t_1\tl s_1\tl \cdots \tl s_j $ for all $j < i $, notice that $v_i' = v_j$ for some $j < i $, so that $v_i' $ is in $t_1\tl s_1\tl\cdots\tl s_{i-1} $, and $s_i $ is conforming with some subhistory of $t_1\tl s_1\tl\cdots\tl s_{i-1} $.
    The subhistory swap with $s_i $ therefore replaces the unique child node $v $ in $t_1\tl s_1\tl\cdots\tl s_{i-1} $ which descends from the node-clade pair $(v_i', \cu(v_i)) $, and all of its descendants, with $s_i $, which is rooted at $v_i $ and attached below $v_i' $ with the edge $(v_i', v_i)= e_i $ in $t_1\tl s_1\tl \cdots\tl s_i $.

    Therefore, $t_1\tl s_1\tl \cdots\tl s_n $ contains at least all those edges in $E_t $.
    Furthermore, $t_1\tl s_1\tl \cdots\tl s_n$ is a history with the same leaves as $t $, so it can't contain any more edges than those in $E_t $ and remain a tree.
    That is, $t = t_1\tl s_1\tl \cdots \tl s_n $.
\end{proof}

With the preceding lemmas, it is finally possible to prove the main result of this section:

\updownthm*

\begin{proof}
    \label{proof:updownthm}
    By \autoref{lemma:whataredaghistories}, any history $t\in D(T) $ can be expressed as a finite sequence of subhistory swaps involving histories in $T $.
    We will induct on $n $, the number of subhistory swaps involving histories in $T $ required to express $t $.
    First, suppose the history $t $ can be expressed as $t = t_1\tl s_2 $, a subhistory swap involving histories $t_1, t_2\in T $, and the subhistory $s_2 $ of $t_2 $, conforming with a subhistory $s_1 $ of $t_1 $.
    Then by \autoref{lemma:updownlittle}, $g_f(t) < K $ if and only if $g_f(t_2\tl s_1) > K $.
    $t_2\tl s_1 \in D(T) $ by \autoref{lemma:closedundertl}, so we've shown that $g_f(t) < K $ implies there exists a history $t'\in D(T) $ with $g_f(t') > K $.
    By the same argument, if $g_f(t) > K $, then there exists a history $t'\in D(T) $ with $g_f(t') < K $.

    Now suppose that for $i < n $, and for any $t\in D(T) $ which can be expressed as $t= t_1\tl s_2\tl \cdots \tl s_i $ for $s_i $ subhistories of histories in $T $,
    \begin{itemize}
        \item if $g_f(t) < K $ then there exists a history $t' \in D(T)$  with $g_f(t') > K $, and
        \item if $g_f(t) > K $ then there exists a history $t' \in D(T)$  with $g_f(t') < K $.
    \end{itemize}

    Let $t\in D(T) $ be expressible as $t = t_1\tl s_2\tl \cdots \tl s_n $, where $s_2, \ldots ,s_n $ are subhistories of histories $t_2, \ldots, t_n $, and $t_1, \ldots, t_n \in T $.
    Suppose $g_f(t) < K $, and let $t_* = t_1\tl s_2\tl\cdots\tl s_{n-1} $.
    Notice that $t_* \in D(T) $ by \autoref{lemma:closedundertl} since $t_1, \ldots, t_n \in D(T) $.
    We seek to show there exists $t'\in D(T) $ with $g_f(t') > K $.

    If $g_f(t_*) > K $, then $t_* $ is the history we seek.

    If $g_f(t_*) < K $, then the history we seek exists by the inductive hypothesis, since $t_* $ is expressible as a subhistory swap involving $n-1 $ histories in $T $.

    If $g_f(t_*) = K $, let $s $ be the subhistory of $t_* $ conforming with $s_n $.
    Notice by \autoref{lemma:closedundertl}, $t_n \tl s \in D(T) $.
    By \autoref{lemma:updownlittle}, $g(t_n \tl s) > K$.

    A similar argument shows that if $g_f(t) > K $, there exists $t'\in D(T) $ with $g_f(t') < K $.
\end{proof}


\subsubsection{Trimming the history sDAG}
\trimistrim*
\begin{proof}
    \label{proof:trimistrim}
    First, $(\underline V, \underline E) $ is a history sDAG: $\underline V\subset V $, $\underline E \subset E $, and all nodes in $\underline V $ are reachable from $\rho $ by construction.
    Also, for each node $v = (\ell, U) \in \underline V $, and each $C\in U $, there is at least one edge descending from the node-clade pair $(v, C) $, since at least one edge must achieve the minimum augmented subhistory weight in each clade.

    Since a history sDAG is uniquely determined by the histories it contains by \autoref{lemma:dagisitshistories}, it's enough to show that these two history sDAGs contain the same set of histories.
    First, let $t $ be a history in $(V', E') $, and let $e = (v, v_c) $ be any edge in $t $.
    Since $t $ achieves the minimum weight of any history in $(V, E) $, $M_f(v_c) + f(v, v_c) $ must be equal to $M_f(v, \cu(v_c)) $.
    If this were not true, there would necessarily exist some subhistory $s $ in $(V, E) $ for which $g_f(t\tl s) < g_f(t) $, and $t\tl s $ would be a history in $(V, E) $, contradicting the assumption that $t $ is a minimum-weight history in $(V, E) $.
    Therefore, all edges in $t $ are in $\underline E' $.
    All nodes in $t $ are reachable from $\rho $ via paths in $\underline E' $, in particular the paths which follow the history $t $, so all nodes in $t $ are in $\underline V $, and all edges in $t $ are in $\underline E $.
    That is, $(\underline V, \underline E) $ contains at least all minimum-weight histories in $(V, E) $.

    To show that $(\underline V, \underline E) $ contains only minimum-weight histories, let $K $ be the minimum weight of any history in $(V, E) $, and let $t $ be a history in $(\underline V, \underline E) $ with $g_f(t) > K $.
    Consider the edge $e = (v, v_c) $ in $t $ closest to $\rho $ which is not in $(V', E') $.
    Since $e\notin E' $, $e $ must not be an edge in any minimum-weight history in $(V, E) $.
    However, since $e $ is the closest edge to $\rho $ in $t $ which is not in $(V', E') $, it must be true that $v\in V' $.
    That is, there must be no subhistory $s\in \tgts(v_c) $ in $(V, E) $ such that $g_f(s^v) = M_f(v, \cu(v_c)) $, and the edge $e $ must not be in $\underline E $.
    Therefore, $t $ is not in $(\underline V, \underline E) $, and $(\underline V, \underline E) $ contains exactly the minimum-weight histories in $(V, E) $, so by \autoref{lemma:dagisitshistories}, $(\underline V, \underline E) = (V', E') $.
\end{proof}
\subsubsection{Collapsing histories}
\collapseone*
\begin{proof}
    \label{proof:collapseone}
    Let $(V_!, E_!) $ be the DAG constructed from $T' $.
    We must show that $E' = E_! $, so that by construction, $V' = V_! $.

\item First, to show that $E'\subset E_! $, let $(v_1, v_2)\in E' $.
    \begin{itemize}
        \item If $\left\{v_1, v_2 \right\}\cap \left\{v_p, v_p', v_c \right\} = \emptyset $, then $(v_1, v_2) \in E_! $ because collapsing in histories only modifies edges incident to the edge being collapsed.

        \item If $v_2 = v_p $, then $v_p $ was not removed from $V $, meaning that some edge $(v_p, v) \in E $ must exist, with $\cu(v) = \cu(v_c) $.
            $T $ contains all the histories in $(V, E) $, so there is a history $(V_t, E_t) $ in $T $ so that $(v_p, v) \in E_t $.
            Since each history has exactly one edge descending from each node-clade pair, $(v_p, v_c)\notin E_t $, and $(V_t, E_t) \in T' $.
            Therefore, $(v_p, v) \in E_! $.
        \item If $v_2 = v_p' $, then $(v_1, v_2) \notin E $, but $(v_1, v_p) \in E $, and $(v_p, v_c) \in E $.
          Since $(v_1, v_p) $ and $(v_p, v_c) $ are adjacent in $(V, E) $, there's a subhistory which contains both edges, and since $(v_p, v_c)$ is $b $-collapsible and $T$ is a $b $-collapsible edge cover, there is a history $(V_t, E_t)\in T $ which contains both the subhistory, and consequently, both edges. The corresponding label-collapsed history in $T' $ contains $(v_1, v_2) $.
            Therefore, $(v_1, v_2) \in E_! $.
        \item If $v_2 = v_c $, then $v_1\neq v_p $, and $(v_1, v_2) \in E $.
            Some history in $T $ must contain the edge $(v_1, v_2) $, and may not contain the edge $(v_p, v_c) = (v_p, v_2) $, in order to be a tree.
            Therefore, this history is unchanged in $T' $, and $(v_1, v_2) \in E_! $.
        \item If $v_1 = v_p $, then $v_2\neq v_c $ since $(v_p, v_c) \notin E' $.
            Therefore, $(v_1, v_2) \in E $, and by the same reasoning as above, $(v_1, v_2) \in E_! $.
        \item If $v_1 = v_c $, then again $(v_1, v_2) \in E $, so $(v_1, v_2) \in E $.
        \item If $v_1 = v_p' $, then either $(v_c, v_2) \in E $ or $(v_p, v_2) \in E$.
            There exists a history $(V_t, E_t) \in T $ with $(v_p, v_c) \in E_t $, and either $(v_c, v_2) $ or $(v_p, v_2) $ in $E_t $, in which collapsing $(v_p, v_c) $ yields the edge $(v_p', v_2) $.
            The resulting history is in $T' $, so $(v_1, v_2) \in E_! $.
    \end{itemize}
    We've addressed all the situations where one of $v_1, v_2 \in \left\{v_p, v_p', v_c \right\} $.
    Both nodes can't be in that set, because no pair of nodes  in $\left\{v_p, v_p', v_c \right\} $ can have an edge between them in $E' $, by construction.

\item Now, to show that $E_!\subset E' $, let $(v_1, v_2)\in E_! $.
    First, notice that $E_!\subset E^+ $, because $E^+ $ contains all the edges that are added to histories in $T $ when collapsing $(v_p, v_c) $, and $E_! $ does not contain $(v_p, v_c) $.

    By definition, $(v_1, v_2)\in E_t $, for some $(V_t, E_t)\in T' $.
    If $E^- = E^+ $, and since $(v_1, v_2) \in E_t $, $v_1 $ is reachable from the UA node, and $(v_1, v_2) \in E' $.
    If $E^-\neq E^+ $, then $v_p $ must have had no edges descending from its child clade $\cu(v_c) $, in $E^+ $.
    That means $v_p\notin V_t $, since a history must have exactly one descendant edge for each node-clade pair.
    Therefore, no removed parent edges are in $E_t $, and $E_t\subset E^- $.
    This means that $v_1 $ is reachable from the UA node in $E^-$, and $(v_1, v_2) \in E' $.

    Therefore, $E' = E_! $ and $V' = V_! $.
\end{proof}
\collapsingterminates*
\begin{proof}\label{proof:collapsingterminates}
    Let $T_0 $ be a label-collapsible edge cover of $(V_0, E_0) $, and define a sequence of sets of histories $(T_i)_{i\in \N} $, where $T_k = T_{k-1} $ if $(V_k, E_k) = (V_{k-1}, E_{k-1}) $, and otherwise let $T_k$ be obtained by collapsing all histories in $T_{k-1}$ at the edge $e_{k-1}$.
    Notice that if such an $N $ exists, then $T_0' \subseteq T_N $ by \autoref{lemma:collapseone}.
    Therefore we need only show that such an $N $ exists.

    Let $T$ denote a label-collapsible edge cover of $(V, E)$, and denote the multiset of collapsible edges in all $t\in T$ as $E_{col}$.
    For each collapsible edge $e\in E$, the label-collapsed DAG $(V', E')$ is equivalent to the DAG obtained from the label-collapsed histories $T'$ by \autoref{lemma:collapseone}.

    Note that the number of trees in $T'$ is equal to the number of trees in $T$. However, the total number of unique trees in $T'$ can be smaller than in $T$ since collapsing an edge in two different trees can produce the same resulting tree.
    Also, since collapsing an edge in a history does not introduce any new edges in that history, collapsing $e$ strictly reduces the number of edges in $E_{col}$.
    So the multiset of collapsible edges in $T'$ is a strict subset of $E_{col}$.
    We will demonstrate that $T'$ is a label-collapsible edge cover of $(V', E')$.
    Since the set $E_{col}$ is finite, these results imply that any such sequence of history sDAGs results in a label-collapsed DAG in a finite number of steps.

    To show that $T'$ is a label-collapsible edge cover of $(V', E')$, we show that for a collapsible edge $e_c\in E'$, every subhistory in $(V', E')$ which contains $e_c$ is contained in $T'$.

    Let $e_c$ be given, and suppose $s'$ is any subhistory in $(V', E')$ containing $e_c$.

    If every edge in $s'$ is disjoint from the vertices $\{v_p', v_p\}$, then there is an identical subhistory $s$ in $(V, E)$, and, by the label-collapsible edge covering property of $T$, there exists $t\in T$ containing $s$.
    Since every edge in $s$ is disjoint from the set of edges altered by collapsing at $e$, collapsing $t$ at $e$ yields a history in $T'$ that contains $s=s'$ as a subhistory.

    If there is an edge in $s'$ of the form $(v, v_p')$ then consider the corresponding un-collapsed subhistory $s$ in $(V, E)$ consisting of edges
    \begin{align}
      s &=\left\{(v, v_p)\right\}\nonumber\\
             &\cup \left\{(v_p, v_n) \st (v_p', v_n) \in s', \cu(v_n)\neq\cu(v_c) \right\}\nonumber\\
             &\cup \left\{e\right\}\nonumber\\
             &\cup \left\{(v_c, v_n)\st (v_p', v_n)\in s', \cu(v_n)\subset\cu(v_c)\}\right\}\nonumber\\
             &\cup \left\{(v_1, v_2)\st (v_1, v_2)\in s', \{v_1, v_2\}\cap \{v_p'\}=\emptyset\right\}\nonumber
    \end{align}
    Where the edges adjacent to $v_p'$ are replaced with the corresponding structures in $E$.
    By construction, $s$ is a subhistory in $(V, E)$ containing a collapsible edge $e$ and such that collapsing at $e$ yields $s'$.
    Since $T$ is a label-collapsible edge cover for $(V, E)$, there exists $t\in T$ containing $s$, and, since collapsing $s$ at $e$ yields $s'$, the label-collapsed history $t'\in T'$ contains $s'$.
    The analogous argument holds if $s'$ is a subhistory containing an edge of the form $(v_p', v)$.

    If there is an edge in $s'$ of the form $(v_p, v)$, then by the observation following \autoref{lemma:collapseone}, this implies that there is another edge descending from the node-clade pair $(v_p, \cu(e))$ distinct from $e$, and that $s'$ can be viewed as a subhistory of a subhistory containing that alternative edge.
    So the subhistory $s'$ corresponds to a subhistory $s$ in $(V, E)$ which belongs to a history $t$ that cannot contain $e$.
    Since $s$ is contained in a history that does not contain $e$, collapsing at $e$ does not affect $s$, and so collapsing $s$ yields $s'=s$.
    Thus $s$ is a subhistory in $(V, E)$ that contains the collapsible edge $e_c$, and hence there exists a history $t\in T$ containing $s$.
    Since $t$ does not contain $e$, collapsing at $e$ yields $t'\in T'$ which, trivially, contains $s'=s$.

    And so $T'$ is a label-collapsible edge cover for $(V', E')$.
\end{proof}


\subsection{Histories are labeled trees:}\label{sec:correspondence}
In this subsection, we show that history substructures in the history sDAG are in bijection with isomorphism classes of rooted, internally labeled, multifurcating trees.
There will be a number of notational differences from the rest of the paper.
Rather than a history, $t $ will denote a labeled tree, and $s $ a subtree of a labeled tree.
$\tau $ will denote a tree's graph structure, in which nodes are abstract objects rather than the label, subpartition pairs that the history sDAG consists of.
The function $L $ will denote the set of leaf nodes below an internal node in a labeled tree.
Also, $\varphi $ will denote a labeling function of a labeled tree, rather than a disambiguation of a history.

$Y $ will continue to mean a set of labels, as in the rest of the paper.

\begin{defn}
    An \textbf{(internally) labeled tree} $t = (\tau, \varphi) $ is
    \begin{itemize}
        \item a rooted, multifurcating tree $\tau=(V, E) $, and
        \item a labeling function on vertices, $\varphi:V\to Y $, where $Y $ is a label set.
    \end{itemize}
    We will let $L(t) $ refer to the set of leaf nodes of the tree $\tau $, and require that
    \begin{itemize}
        \item no node in $\tau $ has exactly one child, and
        \item the labels on leaf vertices must be unique (that is, $\left . \varphi \right |_{L(t)} $ must be injective), but labels on internal vertices need not be (that is, $\varphi $ need not be injective or surjective).\\
    \end{itemize}

\end{defn}

However, we will primarily use a different definition in this text, which is equivalent up to isomorphism on internally labeled trees:

\begin{defn}\label{defn:isomorphism}
    Let $t = (V, E, \varphi) $ and $t' = (V', E', \varphi') $ be two labeled trees.
    Then $t $ and $t' $ are \textbf{isomorphic} if there exists a bijection $h:V\to V' $ which preserves labels and respects tree structure.
    That is,
    \begin{itemize}
        \item $\varphi(v) = \varphi'(h(v)) $ for all $v\in V $
        \item $E' = \left\{\left(h(v), h(v') \right) \st (v, v') \in E \right\} $
    \end{itemize}
\end{defn}

\begin{lemma}\label{lemma:correspondence_c_to_h}
    Let $(V, E) $ be the complete history sDAG on labels $Y $.
    Given a history $(V_t, E_t) $ in $(V, E) $, let
    \begin{equation*}
        V' = V_t\setminus \left\{\rho \right\},
    \end{equation*}
    and
    \begin{equation*}
        E' = E_t \setminus \left\{\left(\rho, v \right )  \right\},
    \end{equation*}
    with $v $ the only child node of the UA node in $(V_t, E_t) $.
    Define the function $\varphi : V'\to Y $ as $\varphi((\ell, U)) = \ell \in Y $.

    The correspondence $(V_t, E_t) \mapsto t_{(V_t, E_t)}=(V', E', \varphi) $ from histories in $(V, E) $ to labeled trees on labels $Y $ is well-defined.
\end{lemma}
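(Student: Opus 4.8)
The plan is to verify directly that the triple $(V', E', \varphi)$ satisfies every clause in the definition of an internally labeled tree on labels $Y$: namely that $(V', E')$ is a rooted multifurcating tree, that no node of $(V', E')$ has exactly one child, that $\varphi$ restricted to the leaves is injective, and that $\varphi$ maps into $Y$. That the construction is single-valued is immediate, since each of $V'$, $E'$, and $\varphi$ is determined outright by $(V_t, E_t)$; the content of the claim is that the output lands in the codomain.

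For the tree structure, I would invoke \autoref{lemma:historyaltdef}, which tells us that the history $(V_t, E_t)$ is a tree containing exactly one edge $(\rho, v)$ descending from $\rho$. By \autoref{defn:historyDAG} the UA node accepts no incoming edges, so $\rho$ is a source with out-degree one; deleting $\rho$ together with its single outgoing edge therefore removes no edge internal to $V'$ and leaves a connected acyclic graph. Every non-$\rho$ node of the history is reachable from $\rho$, and since the only edge out of $\rho$ goes to $v$, every node of $V'$ is in fact reachable from $v$; thus $(V', E')$ is precisely the subtree rooted at $v$, a rooted tree with root $v$. Crucially, the deletion changes no node's out-degree: the only edges removed are the ones incident to $\rho$, namely the incoming edge $(\rho, v)$ to $v$, so each node of $V'$ retains exactly the children it had in the history.

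It remains to check the degree and labeling conditions, which now follow from facts already recorded for histories. Because out-degrees are preserved and, as noted in the text preceding \autoref{lemma:acyclic}, no non-UA node of a history can have exactly one child (subpartitions never consist of a single clade, and a history has one child per child clade), no node of $(V', E')$ has exactly one child; in particular the new root $v$ either retains its two-or-more children or, in the degenerate case where the history is a single leaf beneath $\rho$, is itself a leaf with no children, which is equally admissible. The leaf nodes of $(V', E')$ are exactly the nodes of the form $(\ell, \emptyset)$, and by the observation following \autoref{defn:historyDAG} no two such nodes share a label, so $\varphi\colon (\ell, U)\mapsto \ell$ is injective on leaves; it visibly maps into $Y$ by construction. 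The only point demanding care is the treatment of the new root after excising the UA node---confirming that it inherits no spurious unifurcation and that the single-leaf history is handled---but this is precisely where the no-single-clade property of subpartitions does the work.
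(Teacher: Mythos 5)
Your proof is correct and takes essentially the same approach as the paper's: both verify that the label map is injective on leaves (since leaf nodes $(\ell, \emptyset)$ are uniquely labeled) and that no node has exactly one child (since subpartitions never consist of a single clade and a history has one child per child clade). The only difference is that you additionally make explicit, via \autoref{lemma:historyaltdef}, that deleting $\rho$ and its unique outgoing edge leaves a rooted tree with root $v$ — a verification the paper leaves implicit — so your version is, if anything, slightly more complete.
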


\begin{proof}
    Given the history $(V_t, E_t) $, the label function restricted to leaf nodes, $\left . \varphi \right |_{L(t_{(V_t, E_t)})} : L(t_{(V_t, E_t)}) \to Y$ is an injection , since DAG leaf nodes are uniquely labeled by elements of $Y $.
    Also, any node $w $ in the labeled tree $t_{(V_t, E_t)} $ is determined by a unique node $v \in V_t $.
    $v $ must have either no child clades, or at least two child clades, and $v $ must have a child node for each child clade.
    Each child node of $v $ in $(V_t, E_t) $ corresponds to a child node of $w $ in $t_{(V_t, E_t)} $, so $w $ may not have exactly one child node.

    That is, the map named in the lemma is well-defined.
\end{proof}

\begin{lemma}\label{lemma:correspondence_h_to_c}
    Let $(V, E) $ be the complete history sDAG on labels $Y $.
    Let $t = (\tau, \varphi) $ be a labeled tree with labels in $Y$, and with root node $w_0$.

    For each node $w $ of $\tau $, let $C_w\subset Y$ be the set of leaf labels below the node $w $, and let
    \begin{equation*}
        v_w = \left(\varphi(w), \left\{C_{w'}\ \big|\ w'\text{ a child of } w \right\} \right).
    \end{equation*}
    Define
    \begin{equation*}
    V_t = \left\{v_w\ |\  w \text{ is a node of } t \right\} \cup \left\{\rho \right\},
    \end{equation*}
    and
    \begin{equation*}
        E_t = \left\{\left(v_{w_1}, v_{w_2} \right)\ |\ \left(w_1, w_2 \right)\text{ is an edge in } t \right\} \cup \left\{\left(\rho, v_{w_0} \right) \right\}.
    \end{equation*}
    The correspondence $t\mapsto (V_t, E_t) $ from labeled trees on labels $Y$ to histories in $(V, E) $ is well-defined.
\end{lemma}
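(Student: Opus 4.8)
The plan is to show directly that the proposed image $(V_t, E_t)$ satisfies \autoref{defn:historyDAG} and \autoref{defn:history}, so that it is a genuine history; since the ambient $(V, E)$ is the complete history sDAG, every history sDAG on labels $Y$ is automatically a trim of it, so it suffices to verify the intrinsic history axioms. I would organize the whole verification around the single identity $\cu(v_w) = C_w$, and finish by invoking \autoref{lemma:historyaltdef}, which lets me trade the node-clade-pair bookkeeping for the cleaner statement that a history sDAG is a history exactly when it is a tree with one edge leaving $\rho$.

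First I would check that each $v_w$ is a legitimate node, i.e. that its subpartition $U_w = \{C_{w'} : w' \text{ a child of } w\}$ lies in $\Part(Y)$. The clades $C_{w'}$ are nonempty because every subtree contains at least one leaf; they are pairwise disjoint because descendant-leaf sets of distinct children of $w$ are disjoint in $\tau$ and the restriction $\varphi|_{L(t)}$ is injective, so the corresponding label sets stay disjoint; and distinct children therefore yield distinct clades, so $|U_w|$ equals the number of children of $w$, which by the no-unifurcation hypothesis is either $0$ or at least $2$. Hence $U_w \in \Part(Y)$ and $v_w \in Y \times \Part(Y)$. Along the way I would record the identity $\cu(v_w) = C_w$: for a leaf this is $\{\varphi(w)\}$, and otherwise $\cu(v_w) = \bigcup_{w' \text{ child of } w} C_{w'}$, which is exactly the set of leaf labels below $w$.

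With $\cu(v_w) = C_w$ in hand, the three edge axioms of \autoref{defn:historyDAG} are routine. Reachability holds because the tree path $w_0 \to \cdots \to w$ maps to a directed path $v_{w_0} \to \cdots \to v_w$ that extends to $\rho$ through the single edge $(\rho, v_{w_0})$, and $\rho$ admits no incoming edge by construction. For an internal edge $(v_{w_1}, v_{w_2})$ arising from a tree edge $(w_1, w_2)$, the target clade union $\cu(v_{w_2}) = C_{w_2}$ is a member of $U_{w_1}$ by definition, giving axiom (2); and every child clade $C \in U_w$ equals $C_{w'}$ for some child $w'$ of $w$, whose edge $(v_w, v_{w'})$ descends from $(v_w, C)$, giving axiom (3).

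The hard part will be ruling out accidental identifications, i.e. showing the map $w \mapsto v_w$ is injective, which is what guarantees $(V_t, E_t)$ is actually tree-shaped rather than a node-collapsed graph in which some node inherits two parents. I would argue that $v_{w_1} = v_{w_2}$ forces $C_{w_1} = C_{w_2}$ via $\cu(v_w) = C_w$; in a tree two distinct nodes are either incomparable, in which case their nonempty leaf sets are disjoint, or one is a strict ancestor of the other, in which case the no-unifurcation condition forces a branching below the ancestor and hence a strict containment $C_{w_2} \subsetneq C_{w_1}$ — either way the leaf sets differ, a contradiction. Injectivity makes the tree edges correspond bijectively to $E_t \setminus \{(\rho, v_{w_0})\}$ with tree structure preserved, so $(V_t, E_t)$ is a tree with exactly one edge out of $\rho$; \autoref{lemma:historyaltdef} then certifies that it is a history, establishing that the correspondence $t \mapsto (V_t, E_t)$ is well-defined.
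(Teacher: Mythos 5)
Your proof is correct, and its skeleton matches the paper's: show each $U_w$ lies in $\Part(Y)$, show $w\mapsto v_w$ is injective, then verify the history sDAG axioms inside the complete DAG. The two places where you diverge are worth recording. First, for injectivity the paper argues that distinct tree nodes have distinct \emph{subpartitions} (splitting into the incomparable and ancestor cases, as you do), whereas you run the argument through the identity $\cu(v_w)=C_w$ and compare leaf-label sets. Your version is slightly more robust: the paper's claim that incomparable nodes satisfy $U_1\cap U_2=\emptyset$ and hence $U_1\neq U_2$ silently fails when both nodes are leaves (both subpartitions are empty, and injectivity there really rests on the distinctness of leaf labels), while your clade-union comparison handles leaves, incomparable internal nodes, and the strict-ancestor case uniformly. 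Second, to certify that the image is a \emph{history} and not merely a history sDAG, you establish treeness plus a single edge out of $\rho$ and invoke \autoref{lemma:historyaltdef}; the paper instead checks that each node-clade pair has a descendant edge and then appeals to the construction preserving the tree structure of $t$, leaving the ``exactly one descendant edge per node-clade pair'' condition somewhat implicit. Your route makes that final step an explicit consequence of a lemma already proved, at the cost of needing node injectivity before you can conclude treeness --- which is exactly why your proof correctly flags injectivity as the crux.
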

\begin{proof}
\item \emph{The assignment $w\mapsto v_w = (\ell, U) $ of nodes in the labeled tree to nodes in the DAG is well-defined:}
    \begin{itemize}
        \item $U $ consists of disjoint subsets of $Y $ because $\varphi $ is injective on leaves of $t $, and sets of leaves between child nodes of $w $ must be disjoint.
        \item $U $ is either empty, or contains more than one subset of $Y $, since $w $ can be a leaf node with no children, or an interior node of $t $ with two or more children.
        \item $U=\emptyset $ if and only if $w $ is a leaf node, because $w $ has no children if and only if $w $ is a leaf node.
    \end{itemize}

\item \emph{This assignment $w\mapsto v_w $ is also injective:}
    In particular, no two nodes in a labeled tree may have the same subpartition.
    To see this, let $w_1, w_2$ be two nodes in a labeled tree $t $, with subpartitions $U_1 $ and $U_2 $.
    We will show that $U_1\neq U_2 $.

    If one of the nodes $w_1, w_2 $ is not reachable from the other, then $U_1\cap U_2 = \emptyset $, and $U_1\neq U_2 $.

    Otherwise, suppose that $w_2 $ is reachable from $w_1 $.
    $w_1 $ must have more than one child node, so $w_1 $ has at least one child node $w_1' $ from which $w_2 $ is not reachable.
    Therefore, the clade below $w_1' $ must be disjoint from all the clades in $U_2 $, since $t $ is a tree.
    However, the clade below $w_1' $ is an element of $U_1 $, so $U_1\neq U_2 $.\\
\item \emph{The assignment $(w_1, w_2)\mapsto (v_{w_1}, v_{w_2}) $ from edges in the labeled tree to edges in the history sDAG is well-defined and injective:}

        Either
    \begin{itemize}
    \item The union of the child clades of $w_2 $ are a child clade of $w_1 $ (in particular, the child clade under the child $w_2 $), or
    \item $w_2 $ is a leaf node and therefore one of the child clades of $w_1 $ is $\left\{\ell_{w_2} \right\} $.
    \end{itemize}

    The assignment on edges is also injective, since the assignment on nodes is injective.

\item \emph{$(V_t, E_t)$ is a history sDAG:}

    Since the assignments of nodes and edges in the labeled tree to nodes and edges in the complete DAG are well-defined, $V_t\subset V $ and $E_t\subset E $.
    To finish showing that $(V_t, E_t) $ is a history sDAG, notice that each node-clade pair $(v, C) $ has a descendant edge, namely the one to $v_w $, where $w $ is the parent node of the clade $C $ in $t $.
    Also, each node is reachable from the UA node, since each node in $t $ is reachable from $w_0 $.

    Notice $(V_t, E_t) $ has the same tree structure and labels as $t $, by construction.
    Therefore, such a choice of history $(V_t, E_t) $ is uniquely determined by a labeled tree $t $.
\end{proof}
\begin{lemma}\label{lemma:correspondence_bijection}
    \textbf{(Correspondence)} The map from histories to labeled trees named in \autoref{lemma:correspondence_c_to_h}, and the map from labeled trees to histories named in \autoref{lemma:correspondence_h_to_c}, are inverses, up to label-preserving bijection on nodes.
    In particular, both maps name a bijective correspondence between histories in a history sDAG $(V, E) $ on labels $Y $, and labeled trees with labels in $Y$.
\end{lemma}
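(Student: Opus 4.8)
The plan is to name the two maps and show that both of their composites are the identity up to label-preserving bijection, which is exactly the assertion that they are mutually inverse. Write $\Phi$ for the map from histories to labeled trees of \autoref{lemma:correspondence_c_to_h} (delete $\rho$ and read off each node's first coordinate as its label), and $\Psi$ for the map from labeled trees to histories of \autoref{lemma:correspondence_h_to_c} (send each tree node $w$ to $v_w = (\varphi(w), \{C_{w'} : w' \text{ a child of } w\})$ and adjoin $\rho$). Both are already known to be well-defined, so only the two composite identities remain.

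First I would handle $\Phi\circ\Psi$ on a labeled tree $t = (\tau, \varphi)$. Here the candidate isomorphism is $h: w \mapsto v_w$, the node assignment used to build $\Psi(t)$. The proof of \autoref{lemma:correspondence_h_to_c} already shows $h$ is injective (no two tree nodes share a subpartition), and it is surjective onto the non-UA nodes of $\Psi(t)$ by construction. Applying $\Phi$ strips $\rho$ and assigns to $v_w = (\varphi(w), U_w)$ the label $\varphi(w)$, so $h$ is label-preserving; and since the edge set of $\Psi(t)$ is the image of the tree's edges under $h$ together with the single $\rho$-edge that $\Phi$ removes, $h$ carries the edges of $t$ exactly onto the edges of $\Phi(\Psi(t))$. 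Thus $h$ is a label-preserving isomorphism $t \cong \Phi(\Psi(t))$, matching \autoref{defn:isomorphism}, which is everything needed on this side.

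The substantive direction is $\Psi\circ\Phi$ on a history $(V_t, E_t)$. Writing $t = \Phi((V_t, E_t))$ with node set $V_t\setminus\{\rho\}$, I must show that reconstructing each node returns exactly $(\ell, U)$: that is, for a node $w = (\ell, U)$, the reconstructed subpartition $\{C_{w'} : w' \text{ a child of } w \text{ in } t\}$ equals $U$. This is where the redundancy built into the definition of history is cashed out. By \autoref{lemma:reachableleaves}, the set $C_{w'}$ of leaf labels below a child $w'$ equals its clade union $\cu(w')$; and by the defining properties of a history, the children of $w$ are in bijection with the clades of $U$, one child per clade $C\in U$ with clade union exactly $C$. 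Hence the set $\{C_{w'} : w' \text{ a child of } w\}$ is precisely $U$, so $v_w = (\ell, U) = w$, the node assignment of $\Psi\circ\Phi$ is the identity on $V_t\setminus\{\rho\}$, and readjoining $\rho$ with its unique root edge recovers $(V_t, E_t)$ on the nose.

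The main obstacle is this single recovery identity $\{C_{w'} : w' \text{ a child of } w\} = U$ in the second direction; everything else is bookkeeping that follows immediately from the constructions in the two preceding lemmas. Once \autoref{lemma:reachableleaves} supplies $C_{w'} = \cu(w')$ and the history axioms supply the one-child-per-clade correspondence, the two composites are an identity and an isomorphism respectively, so $\Phi$ and $\Psi$ are mutually inverse and define the claimed bijection between histories in $(V,E)$ and labeled trees with labels in $Y$.
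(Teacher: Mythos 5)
Your proposal is correct and follows essentially the same route as the paper: both verify that the two composites are the identity up to the isomorphism of \autoref{defn:isomorphism}. The paper's own proof is little more than an assertion of these two facts, so your write-up---in particular the use of \autoref{lemma:reachableleaves} and the one-edge-per-node-clade-pair property of histories to show the reconstructed subpartition equals $U$---supplies exactly the details the paper leaves implicit.
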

\begin{proof}
    Given a labeled tree $t $, the labeled tree recovered from the history $(V_t, E_t) $ is exactly the tree we started with (up to the isomorphism in \autoref{defn:isomorphism}).

    In the other direction, the labeled tree $t_{(V_t, E_t)} $ derived from a history $(V_t, E_t) $, induces exactly the history $(V_t, E_t) $.
    This demonstrates the desired bijection.
\end{proof}

\subsection{Supplementary Tables}

\begin{table}[H]
    \centering
    \includegraphics{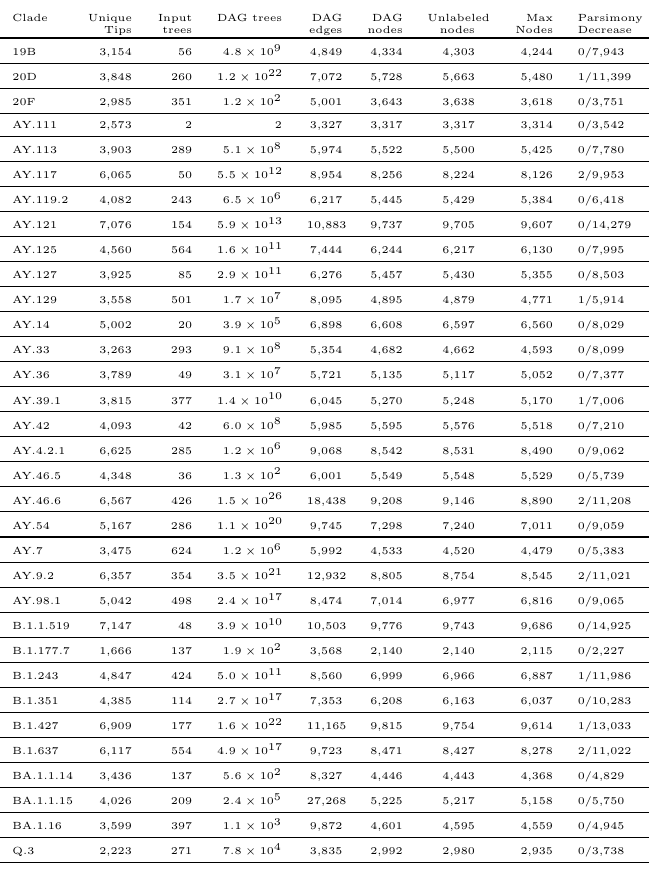}
\caption{\label{table:ushertrees}%
    Relevant characteristics of a history sDAG constructed on histories generated by \usher, for assorted clades defined in the public global \usher \sarscov tree.
    ``Max Nodes'' is the maximum number of nodes in any of the histories in the history sDAG\@.
    ``Parsimony Decrease'' is listed as $n / m $, where $n $ is the decrease in parsimony score in the history sDAG, relative to $m $, the minimum parsimony score of a history found by \usher.
}\end{table}

\end{document}